\newcommand{\problemFont}[1]{\mathsf{#1}}
\newcommand{\classFont}[1]{\mathsf{#1}}
\newcommand{\LTL}{\protect\ensuremath{\mathrm{LTL}}\xspace}
\newcommand{\CTL}{\protect\ensuremath{\mathrm{CTL}}\xspace}
\newcommand{\TeamLTL}{\protect\ensuremath{\mathrm{TeamLTL}}\xspace}
\newcommand{\TeamCTL}{\protect\ensuremath{\mathrm{TeamCTL}}\xspace}
\newcommand{\dfn}{\coloneqq}
\newcommand{\ddfn}{\Coloneqq}
\newcommand{\ltlopFont}[1]{\mathsf{#1}}
\newcommand{\E}{\ltlopFont{E}}
\newcommand{\A}{\ltlopFont{A}}
\newcommand{\X}{\ltlopFont{X}}
\newcommand{\F}{\ltlopFont{F}}
\newcommand{\G}{\ltlopFont{G}}
\newcommand{\U}{\ltlopFont{U}}
\newcommand{\R}{\ltlopFont{R}}
\newcommand{\nhyltlmodels}{\not\models}
\newcommand{\hyltlmodels}{\models}
\newcommand{\ltlmodels}{\models}
\newcommand{\ctlmodels}{\models}
\newcommand{\ctlmodelsnew}{\models^c}
\newcommand{\N}{\mathbb N}
\newcommand{\ap}{\mathrm{AP}}
\newcommand{\TPC}{\mathrm{PC}}
\newcommand{\TMC}{\mathrm{MC}}
\newcommand{\TSAT}{\mathrm{SAT}}
\newcommand{\TVAL}{\mathrm{VAL}}
\newcommand{\LTLMC}{\mathrm{MC(LTL)}}
\newcommand{\LTLPC}{\mathrm{PC(LTL)}}
\newcommand{\leqpm}{\leq_{\mathrm{m}}^{\mathrm{p}}}
\newcommand{\PSPACE}{\protect\ensuremath{\mathrm{PSPACE}}} 
\newcommand{\ATIME}{\protect\ensuremath{\mathrm{ATIME\text-ALT}}}
\newcommand{\pol}{\protect\ensuremath{\mathrm{pol}}}
\newcommand{\EXPTIME}{\protect\ensuremath{\mathrm{EXPTIME}}}
\newcommand{\NEXPTIME}{\protect\ensuremath{\mathrm{NEXPTIME}}}
\newcommand{\PL}{\protect\ensuremath{\mathrm{PL}}}
\newcommand{\Ptime}{\protect\ensuremath{\mathrm{P}}} 
\newcommand{\PTime}{\protect\ensuremath{\mathrm{P}}}
\newcommand{\NC}[1]{\protect\ensuremath{\mathrm{NC}^{#1}}}
\newcommand{\Ind}{\mathbb{I}}
\newcommand{\pow}[1]{2^{#1}}
\newcommand{\size}[1]{|#1|}
\newcommand{\set}[1]{\{#1\}}
\renewcommand{\phi}{\varphi}
\renewcommand{\epsilon}{\varepsilon}
\newcommand{\nats}[0]{\mathbb{N}}
\newcommand{\suffix}[2]{#1[#2,\infty)}
\newcommand{\var}{\mathcal{V}}
\newcommand{\hyltl}{\protect\ensuremath{\mathrm{HyperLTL}}\xspace}
\DeclareMathOperator\dep{\mathrm{dep}}
\newcommand{\kripke}{\mathcal{K}}
\newcommand{\aut}{\mathcal{A}}
\newcommand{\agreeson}[3]{#2\overset{#1}{\Leftrightarrow}#3}
\newcommand{\decisionproblem}[3]{\problemdef{#1}{#2}{#3}}
\newcommand{\myquot}[1]{``#1''}
\newcommand{\yes}{\protect\ensuremath{\checkmark}}
\newcommand{\no}{\protect\ensuremath{\times}}
\newcommand{\teamup}{\mathcal{T}}
\newcommand{\eval}[1]{\llbracket #1\rrbracket}
\newcommand{\lcm}{\mathrm{lcm}}
\newcommand{\prfx}{\mathrm{prfx}}
\newcommand{\qbfval}{\mathrm{QBF\text-VAL}}
\newcommand{\dom}{\mathrm{Dom}}
\newcommand{\logicOpFont}[1]{\mathsf{#1}}
\newcommand\true\top
\newcommand\false\bot
\newcommand\PROP{\ap}
\newcommand\QBF{\problemFont{QBF}}
\newcommand\QBFVAL{\QBF\text-\problemFont{VAL}}
\newcommand{\problemdef}[3]{%
\begin{center}
\begin{tabular}{rp{11cm}}\toprule
 \textbf{Problem}:& #1\\
 \textbf{Input}:& #2\\
 \textbf{Question}:& #3\\\bottomrule
\end{tabular}
\end{center}
}
\newcommand{\AX}{\protect\ensuremath\A\X}
\newcommand{\EX}{\protect\ensuremath\E\X}
\newcommand{\AF}{\protect\ensuremath\A\F}
\newcommand{\EF}{\protect\ensuremath\E\F}
\renewcommand{\P}{\classFont{P}}
\newcommand*{\ldblbrace}{\{\mskip-5mu\{}
\newcommand*{\rdblbrace}{\}\mskip-5mu\}}
\newcommand{\mset}[1]{\protect\ensuremath{\ldblbrace #1\rdblbrace}}
\newcommand{\compat}[1]{#1\textrm{-compatible}}
\newcommand{\msube}{
    \mathrel{\stackinset{c}{.1ex}{c}{0.15ex}{$\scriptscriptstyle\boldsymbol{\#}$}{$\subseteq$}}
}
\newcommand{\meq}{
    \mathrel{\stackinset{c}{}{c}{1ex}{$\scriptscriptstyle\boldsymbol{\#}$}{$=$}}
}
\newcommand{\resref}[1]{\;\hspace{0pt plus 1filll}(\text{#1})}
\newcommand{\resrefcite}[1]{\;\hspace{0pt plus 1filll}\text{\cite{#1}}}
\DeclareRobustCommand\bigop[1]{%
  \mathop{\vphantom{\sum}\mathpalette\bigop@{#1}}\slimits@
}
\newcommand{\bigop@}[2]{%
  \vcenter{%
    \sbox\z@{$#1\sum$}%
    \hbox{\resizebox{\ifx#1\displaystyle.9\else 1.4\fi\dimexpr\ht\z@+\dp\z@}{!}{$\m@th#2$}}%
  }%
}
\newcommand{\bigvarovee}{\DOTSB\bigop{\varovee}}
\tikzset{%
dot/.style={fill,black,circle,inner sep=0mm,minimum width=2mm},
dotwhite/.style={draw,black,fill=white,circle,inner sep=0mm,minimum width=2mm},
dw/.style={draw,black,fill=white,circle,inner sep=0mm,minimum width=3.5mm},
d/.style={draw,black,fill=black,circle,inner sep=0mm,minimum width=3.5mm,text=white}
}
\newtheorem{remark}{Remark}
\begin{document}

\title{Synchronous Team Semantics for Temporal Logics}

\author{Andreas Krebs}
\email{mail@krebs-net.de}
\affiliation{%
  \institution{University of T\"ubingen}
  \city{T\"ubingen}
  \country{Germany}
}

\author{Arne Meier}
\affiliation{%
  \institution{Institut f\"ur Theoretische Informatik, Leibniz Universit\"at Hannover}
  \city{Hannover}
  \country{Germany}}
\orcid{0000-0002-8061-5376}
\email{meier@thi.uni-hannover.de}

\author{Jonni Virtema}
\affiliation{%
  \institution{School of Computer Science, University of Sheffield}
  \city{Sheffield}
  \country{UK}
}
\orcid{0000-0002-1582-3718}
\email{j.t.virtema@sheffield.ac.uk}

\author{Martin Zimmermann}
\affiliation{
	\institution{Department of Computer Science, Aalborg University}
	\city{Aalborg}
	\country{Denmark}
}
\orcid{0000-0002-8038-2453}
\email{mzi@ca.aau.dk}

\renewcommand{\shortauthors}{Krebs et al.}

\begin{abstract}
We present team semantics for two of the most important linear and branching time specification languages, Linear Temporal Logic (LTL) and  Computation Tree Logic (CTL).

With team semantics, LTL is able to express hyperproperties, which have in the last decade been identified as a key concept in the verification of information flow properties.
	We study basic properties of the logic and classify the computational complexity of its satisfiability, path, and model checking problem.
	Further, we examine how extensions of the basic logic react to adding additional atomic operators.
	Finally, we compare its expressivity to the one of HyperLTL, another recently introduced logic for hyperproperties.
	Our results show that LTL with team semantics is a viable alternative to HyperLTL, which complements the expressivity of HyperLTL and has partially better algorithmic properties.

For CTL with team semantics, we investigate the computational complexity of the satisfiability and model checking problem. The satisfiability problem is shown to be EXPTIME-complete while we show that model checking is PSPACE-complete.
\end{abstract}

\begin{CCSXML}
<ccs2012>
   <concept>
       <concept_id>10003752.10003790.10003793</concept_id>
       <concept_desc>Theory of computation~Modal and temporal logics</concept_desc>
       <concept_significance>500</concept_significance>
       </concept>
   <concept>
       <concept_id>10003752.10003777.10003779</concept_id>
       <concept_desc>Theory of computation~Problems, reductions and completeness</concept_desc>
       <concept_significance>500</concept_significance>
       </concept>
 </ccs2012>
\end{CCSXML}

\ccsdesc[500]{Theory of computation~Modal and temporal logics}
\ccsdesc[500]{Theory of computation~Problems, reductions and completeness}

\keywords{Temporal Logics, Team Semantics, Model Checking, Satisfiability}

\maketitle

\section{Introduction}
Guaranteeing security and privacy of user information is a key requirement in software development. 
However, it is also one of the hardest goals to accomplish. 
One reason for this difficulty is that such requirements typically amount to reasoning about the flow of information and relating different execution traces of the system in question. 
Many of the requirements of interest are not trace properties, that is, properties whose satisfaction can be verified by considering each computation trace in isolation.
Formally, a trace property~$\phi$ is a set of traces and a system satisfies $\phi$ if each of its traces is in $\phi$. 
An example of a trace property is the property~\myquot{the system terminates eventually}, which is satisfied if every trace eventually reaches a terminating state.  
In contrast, the property~\myquot{the system terminates within a bounded amount of time} is no longer a trace property. Consider a system that has a trace~$t_n$ for every $n$, so that $t_n$ only reaches a terminating state after $n$ steps. 
This system does not satisfy the bounded termination property, but each individual trace~$t_n$ could also stem from a system that does satisfy it. 
Thus, satisfaction of the property cannot be verified by considering each trace in isolation. 
Properties with this characteristic were termed \emph{hyperproperties} by Clarkson and Schnei\-der~\cite{DBLP:journals/jcs/ClarksonS10}. 
Formally, a hyperproperty~$\phi$ is a set of sets of traces and a system satisfies $\phi$ if its set of traces is contained in $\phi$.
The conceptual difference to trace properties allows hyperproperties to specify a much richer landscape of properties including information flow properties capturing security and privacy specifications.
Furthermore, one can also express specifications for symmetric access to critical resources in distributed protocols and Hamming distances between code words in coding theory~\cite{markusPhD}. 
However, the increase in expressiveness requires novel approaches to specification and~verification. 

\paragraph*{HyperLTL} 
Trace properties are typically specified in temporal logics, most prominently in Linear Temporal Logic (\LTL)~\cite{Pnueli/1977/TheTemporalLogicOfPrograms}. 
Verification of \LTL specifications is routinely employed in industrial settings and marks one of the most successful applications of formal methods to real-life problems. 
Recently, this work has been extended to hyperproperties: \hyltl, \LTL equipped with trace quantifiers, has been introduced to specify hyperproperties~\cite{DBLP:conf/post/ClarksonFKMRS14}. 
A model of a \hyltl-formula is a set of traces and the quantifiers range over these traces. 
This logic is able to express the majority of the information flow properties found in the literature (we refer to Section~3 of~\cite{DBLP:conf/post/ClarksonFKMRS14} for a comprehensive list). 
The satisfiability problem for \hyltl is highly undecidable~\cite{FKTZ21} while the model checking problem is decidable, albeit of non-elementary complexity~\cite{DBLP:conf/post/ClarksonFKMRS14,DBLP:conf/cav/FinkbeinerRS15}. 
In view of this, the full logic is too strong for practical applications. Fortunately most information flow properties found in the literature can be expressed with at most one quantifier alternation and consequently belong to decidable (and tractable) fragments. 
Further works on $\hyltl$ have studied runtime verification~\cite{DBLP:conf/rv/BonakdarpourF16,DBLP:conf/rv/FinkbeinerHST17}, connections to first-order logic~\cite{DBLP:conf/stacs/Finkbeiner017}, provided tool support~\cite{DBLP:conf/cav/FinkbeinerRS15,FinkbeinerH16}, and presented applications to \myquot{software doping}~\cite{DBLP:conf/esop/DArgenioBBFH17} and the verification of web-based workflows~\cite{634}.
Recent works have also considered asynchronous extensions of \hyltl~\cite{DBLP:conf/cav/BaumeisterCBFS21,DBLP:conf/lics/BozzelliPS21,DBLP:journals/pacmpl/GutsfeldMO21,DBLP:conf/concur/BartocciHNC23} and verification tools for full \hyltl~\cite{DBLP:conf/tacas/BeutnerF23} (most of the previous tools were designed for fragments without quantifier alternation).
In contrast, there are natural properties, e.g., bounded termination, which are not expressible in \hyltl (which is an easy consequence of a much stronger non-expressibility result~\cite{BozzelliMP15}).

\paragraph*{Team Semantics} Intriguingly, there exists another modern family of logics which operates on sets of objects instead of objects alone. In 1997, Hodges introduced compositional semantics for Hintikka's Independence-friendly logic~\cite{Hodges97c} where the semantical object is a set of first-order assignments. Hence formulae, in this setting, define sets of sets of first-order assignments. We could call these definable sets \emph{first-order hyperproperties}, while Hodges himself called them \emph{trumps}. A decade later V\"a\"an\"anen~\cite{vaananen07} introduced \emph{Dependence logic} that adopted Hodges' semantics and reimagined Independence-friendly logic. Dependence logic extends first-order logic by atoms expressing that \myquot{the value of a variable~$x$ functionally determines the value of a variable~$y$}.
Obviously, such statements only make sense when being evaluated over a set of assignments. Therefore, they are, using the parlance introduced above, hyperproperties. 
In the language of dependence logic, such sets are called \emph{teams} and the semantics is termed \emph{team semantics}.

After the introduction of dependence logic, a whole family of logics with different atomic statements have been introduced in this framework: \emph{independence logic}~\cite{gv13} and \emph{inclusion logic}~\cite{Galliani12} being the most prominent.
Interest in these logics is rapidly growing and connections to a plethora of disciplines have been drawn, e.g., to database theory~\cite{HannulaKV20}, real valued computation~\cite{abs-2003-00644}, quantum foundations~\cite{AlbertG22,abramsky2021team}, and to the study of argumentation~\cite{DBLP:conf/foiks/MahmoodVBN24} and causation~\cite{DBLP:conf/csl/BarberoV24}.

\paragraph*{Our Contribution}
The conceptual similarities between hyperproperties and team properties raise the question \emph{what is the natural team semantics for temporal logics}. In this paper, we develop team semantics for \LTL and \CTL, obtaining the logics \TeamLTL and \TeamCTL.
We analyse their expressive power and the complexity of their satisfiability and model checking problems and subsequently compare \TeamLTL to \hyltl.
While team semantics has previously been defined for propositional and modal logic~\cite{va08}, we are the first to consider team semantics for temporal logics.

Our complexity results are summarised in Figure~\ref{fig:overview}.
We prove that the satisfiability problem (TSAT) for \TeamLTL is $\PSPACE$-complete by showing that the problem is equivalent to \LTL satisfiability under classical semantics. 
We consider two variants of the model checking problem. 
As there are uncountably many traces, we have to represent teams, i.e., sets of traces, in a finitary manner. 
The path checking problem (TPC) asks to check whether a finite team of ultimately periodic traces satisfies a given formula. 
As our main technical result, we establish this problem to be $\PSPACE$-complete. 
In the (general) model checking problem (TMC), a team is represented by a finite transition system. 
Formally, given a transition system and a formula, the model checking problem asks to determine whether the set of traces of the system satisfies the formula. 
We give a polynomial space algorithm for the model checking problem for the disjunction-free fragment, while we leave open the complexity of the general problem.  
Disjunction plays a special role in team semantics, as it splits a team into two. 
As a result, this operator is commonly called \emph{splitjunction}.
Note that, even for a very simple transition system, the set of traces generated by the system may be uncountable.
Hence, evaluating the splitjunction compositionally may require us to deal with infinitely many splits of an uncountable team, if a splitjunction is under the scope of a globally-operator (such as in the formula $\G(\varphi \lor \psi)$). This raises interesting language-theoretic questions. 
Additionally, we study the effects for complexity that follow when our logics are extended by dependence atoms, the contradictory negation $\sim$, the Boolean disjunction $\varovee$, and so-called \emph{generalised atoms}~\cite{kuusisto15}.
Finally, we show that  \TeamLTL is able to specify properties which are not expressible in \hyltl and \emph{vice versa}.

Recall that satisfiability for \hyltl is highly undecidable~\cite{FKTZ21} and model checking is of non-elementary complexity~\cite{markusPhD,Mascle020}.
Our results show that similar problems for \TeamLTL have a much simpler complexity while some hyperproperties are still expressible (e.g., input determinism, see page~\pageref{pg:input-nonint}, or bounded termination).
This demonstrates that \TeamLTL is a viable alternative for the specification and verification of hyperproperties that complements \hyltl.

In the second part of the paper, we develop team semantics for \CTL.
We establish that the satisfiability problem for the resulting logic~\TeamCTL is $\EXPTIME$-complete while the model checking problem is $\PSPACE$-complete. 
While \CTL satisfiability is already $\EXPTIME$-complete for classical semantics~\cite{fila79,pr80}, \CTL model checking is $\Ptime$-complete~\cite{clemsi86,sc02} for classical semantics, i.e., team semantics increases the complexity of the problem. 
For the team semantics setting, we extend our model checking result to cover also finite sets of FO-definable generalised atoms and the contradictory negation.
Finally, we compare the expressiveness of \TeamCTL and \CTL with classical semantics.

\paragraph*{Prior Work}
Preliminary versions of this work have been published in the proceedings of the 43rd International Symposium on Mathematical Foundations of Computer Science, MFCS 2018~\cite{DBLP:conf/mfcs/KrebsMV018}, and in the proceedings of the 22nd International Symposium on Temporal Representation and Reasoning, TIME 2015~\cite{kmv15}. 
The following list summarises how this article extends the conference versions:
\begin{itemize}
	\item The presentation of the article has been thoroughly revised. Moreover, we now include an overview of the subsequent works on temporal team semantics published after our conference articles.
	\item The proof of Lemma~\ref{lem:TPCs_PSPACEhard} has been considerably extended and now contains also the correctness part of the reduction.
	\item The proof of Lemma~\ref{lem:TPCs_in_PSPACE} was omitted in the conference version and is now included.
	\item The full proof of Theorem~\ref{thm:tmcs-splitfree} is now included.
	\item The formulation of Theorem~\ref{thm:GenAtoms} is slightly strengthened and a proof in now included.
	\item Theorem~\ref{thm:undef} is new. 
	\item The construction and figures in the proof of Lemma~\ref{lem:mcs-pspacehard} have been improved.
	\item The proof of Lemma~\ref{lem:smcpspace} has been corrected.
	\item Theorem~\ref{thm:tmc-ctl-d-sim} is new.
\end{itemize}
\begin{figure}
	\Description[Overview Table pointing to references and results]{This table shows an overview of the results. The first column is about the logic, the second about the problem TSAT, the third about TPC, and the last about TMC. Each row then states what the complexities of this problem is with respect to the logic in the first cell of the row. Furthermore there are either pointers to the result in the paper or to a literature source.}
	\resizebox{\textwidth}{!}{$
	\begin{array}{*{4}{l}}\toprule
		\text{Logic} &
		\text{TSAT} &
		\text{TPC} &
		\text{TMC}\\\midrule
		\LTL & 
		\PSPACE\resrefcite{SistlaC85} & 
		\P\resrefcite{kf09,ms03} & 
		\PSPACE\resrefcite{SistlaC85} 
		\\
		\TeamLTL & 
		\PSPACE\resref{Prop.~\ref{prop:TMCa-TSATa-PSPACE}} & 
		\PSPACE\resref{Thm.~\ref{thm:tpcs}} & 
		\PSPACE\text{-hard}\resref{Thm.~\ref{thm:tpcs}} 
		\\
		\TeamLTL(\dep) & 
		\PSPACE\resref{Prop.~\ref{thm:TSATa-TSATs-dep}} & 
		\PSPACE\resref{Thm.~\ref{thm:GenAtoms}} & 
		\NEXPTIME\text{-hard}\resref{Thm.~\ref{thm:TMCa-TMCs-dep}} 
		\\
		\TeamLTL(\varovee, \mathcal D) & 
		\text{$\Sigma^0_1$-hard}\resrefcite{DBLP:conf/fsttcs/VirtemaHFK021} &
		\PSPACE\resref{Thm.~\ref{thm:GenAtoms}} & 
		\text{$\Sigma^0_1$-hard}\resrefcite{DBLP:conf/fsttcs/VirtemaHFK021}
		\\
		\TeamLTL(\mathcal D,\sim) &
		\text{third-order arithmetic} \resrefcite{DBLP:journals/tcs/Luck20} &
		\PSPACE\resref{Thm.~\ref{thm:GenAtoms}} &
		\text{third-order arithmetic} \resrefcite{DBLP:journals/tcs/Luck20}
		\\
		\TeamLTL-\lor & 
		? & 
		? &
		\in\PSPACE\resref{Thm.~\ref{thm:tmcs-splitfree}}  
		\\
		\midrule
		\CTL & 
		\EXPTIME\resrefcite{fila79,pr80} & 
		\text{---} & 
		\P\resrefcite{clemsi86,sc02} 
		\\
		\TeamCTL & 
		\EXPTIME\resref{Thm.~\ref{thm:SAT}} & 
		\text{---} & 
		\PSPACE\resref{Thm.~\ref{thm:TMC(CTL)-c}} 
		\\
		\TeamCTL(\mathcal D,{\sim}) & 
		? & 
		\text{---} & 
		\PSPACE\resref{Thm.~\ref{thm:tmc-ctl-d-sim}} 
		\\
		\bottomrule
	\end{array}
	$}
	\caption{Overview of complexity results for \TeamLTL and \TeamCTL. `$\dep$' refers to dependence atoms, `$\sim$' refers to the contradictory negation, $\varovee$ refers to the Boolean disjunction, $\mathcal D$ refers to any finite set of first-order definable generalised atoms, and `$\TeamLTL-\lor$' refers to disjunction free \TeamLTL. All results are completeness results unless otherwise stated. We write `?' for open cases and `--' if the problem is not meaningful for the logic.}\label{fig:overview}
\end{figure}

\paragraph*{Related work on TeamLTL}
Since the publication of the original conference articles~\cite{kmv15,DBLP:conf/mfcs/KrebsMV018} several follow-up works have been published. 
\citeauthor{DBLP:conf/mfcs/KrebsMV018}~\cite{DBLP:conf/mfcs/KrebsMV018} also introduced an asynchronous variant of \TeamLTL. 
This line of work has been continued by Kontinen~et~al.~\cite{DBLP:conf/wollic/KontinenS21,DBLP:conf/mfcs/KontinenSV23,DBLP:conf/foiks/KontinenSV24}. 
In particular, \citeauthor{DBLP:conf/wollic/KontinenS21}~\cite{DBLP:conf/wollic/KontinenS21} considered the extension of \TeamLTL with the contradictory negation and studied its complexity and translations to first-order and second-order logic. 
Then, \citeauthor{DBLP:conf/mfcs/KontinenSV23}~\cite{DBLP:conf/mfcs/KontinenSV23} fixed a mistake in the original definition of asynchronous \TeamLTL~\cite{DBLP:conf/mfcs/KrebsMV018}, introduced a novel set-based semantics for \TeamLTL and studied its complexity and expressivity. 
Hence, for the definition of asynchronous \TeamLTL, please refer to the work of \citeauthor{DBLP:conf/mfcs/KontinenSV23}~\cite{DBLP:conf/mfcs/KontinenSV23}. 
In their recent work on this topic, \citeauthor{DBLP:conf/foiks/KontinenSV24}~\cite{DBLP:conf/foiks/KontinenSV24} revealed a tight connection between set-based asynchronous \TeamLTL and the one-variable fragment of $\hyltl$.

\citeauthor{DBLP:journals/tcs/Luck20}~\cite{DBLP:journals/tcs/Luck20} showed that the satisfiability and model checking problems of synchronous \TeamLTL with contradictory negation are complete for third-order arithmetic.
\citeauthor{DBLP:conf/fsttcs/VirtemaHFK021}~\cite{DBLP:conf/fsttcs/VirtemaHFK021} studied the expressivity and complexity of various extensions of synchronous \TeamLTL. 
In particular, they identified undecidable cases and cases when team logics can be translated to HyperQPTL and HyperQPTL$^+$. 
By doing so, they mapped the undecidability landscape of synchronous \TeamLTL.

\citeauthor{DBLP:conf/lics/GutsfeldMOV22}~\cite{DBLP:conf/lics/GutsfeldMOV22} introduced a flexible team-based formalism to logically specify asynchronous hyperproperties based on so-called \emph{time evaluation functions}, which subsumes synchronous \TeamLTL and can be used to model diverse forms of asynchronicity. 

\citeauthor{DBLP:journals/tocl/BellierBMM23}~\cite{DBLP:journals/tocl/BellierBMM23} introduced a team-based formalism for quantified propositional temporal logic that takes inspiration from first-order team logics which hence is orthogonal to \TeamLTL.

Finally, while there exists a deep understanding of the complexity of the satisfiability and model checking problems for classical $\LTL$~\cite{DBLP:journals/tocl/BaulandM0SSV11,DBLP:journals/corr/abs-0812-4848} and $\CTL$~\cite{mmtv09,mmmv12,DBLP:journals/acta/KrebsMM19}, these investigations do neither directly nor generally carry over to team semantics.

Finally, the complexity of the most important verification problems for \hyltl and related hyperlogics has been settled~\cite{DBLP:conf/cav/FinkbeinerRS15,markusPhD,FinkbeinerH16,FKTZ21,DBLP:conf/csl/Frenkel025,fragments,regaud2024complexityhyperqptl}.

\section{Preliminaries}\label{sec:prelims}

The non-negative integers are denoted by $\nats$ and the power set of a set~$S$ is denoted by $\pow{S}$. Throughout the paper, we consider a countably infinite set~$\ap$ of atomic propositions.

\paragraph*{Computational Complexity}
We will make use of standard notions in complexity theory. 
In particular, we will use the complexity classes $\PTime$, $\PSPACE$, $\EXPTIME$, and $\NEXPTIME$. 

An alternating Turing machine (ATM) is a non-deterministic Turing machine whose state space is partitioned into two types of states: existential and universal. 
Acceptance for ATMs is defined in an inductive way on any computation tree with respect to a given input as follows.
A halting configuration is accepting if and only if it contains an accepting state. 
An `inner' configuration (which is not a halting configuration) is accepting depending on its type of state: if it is existential then it is accepting if at least one of its children is accepting, and if it is universal then it is accepting if all of its children are accepting. 
An ATM accepts its input if and only if its initial configuration is accepting.
Finally, the complexity class $\ATIME(x,y)$ is the set of problems solvable by ATMs in a runtime of $x$ with $y$-many alternations. 

Most reductions used in the paper are $\leqpm$-reductions, that is, polynomial time, many-to-one reductions.

\paragraph*{Traces}
 A \emph{trace} over $\ap$ is an infinite sequence from $ (\pow{\ap})^\omega$; a finite trace is a finite sequence from $(\pow{\ap})^*$. 
 The length of a finite trace~$t$ is denoted by $\size{t}$. The empty trace is denoted by $\epsilon$ and the concatenation of a finite trace~$t_0$ and a finite or infinite trace~$t_1$ by $t_0t_1$.
 Unless stated otherwise, a trace is always assumed to be infinite. 
 
 A \emph{team} is a (potentially infinite) set of traces.
 Given a trace~$t = t(0) t(1) t(2) \cdots$ and $i \ge 0$, we define $t[i,\infty) \dfn t(i) t(i+1) t(i+2) \cdots$, which we lift to teams~$T \subseteq (\pow{\ap})^\omega$ by defining $T[i,\infty) \dfn \set{\, t[i,\infty) \mid t \in T\, }$. 
 
 To serve as inputs for algorithms, we need to represent traces, infinite sequences of possibly infinite sets, finitely.
 A trace~$t$ is \emph{ultimately periodic} if it is of the form~$t = t_0 \cdot t_1^\omega = t_0 t_1 t_1 t_1 \cdots$ for two finite traces~$t_0$ and $t_1$ with $\size{t_1}>0$. 
 As a result, an ultimately periodic trace $t$ is represented by the pair~$(t_0, t_1)$; we define $\eval{(t_0, t_1)} = t_0t_1^\omega$. 
 Given a set~$\teamup$ of such pairs, we define $\eval{\teamup} = \set{\,\eval{(t_0,t_1)} \mid (t_0, t_1) \in \teamup\,}$, which is a team of ultimately periodic traces. 
 We call $\teamup$ a team encoding of $\eval{\teamup}$. 
 A team encoding is finite, if it contains finitely many elements~$(t_0,t_1)$ where each letter of $t_0$ and each letter of $t_1$ is a finite subset of $\ap$.
This implies that a finite team encoding encodes a finite team, which can be represented finitely.

\paragraph*{Linear Temporal Logic}
The formulae of Linear Temporal Logic (\LTL)~\cite{Pnueli/1977/TheTemporalLogicOfPrograms} are defined via the grammar
\[
\varphi \ddfn 
p\mid 
\lnot p\mid
\varphi\land\varphi\mid
\varphi\lor\varphi\mid 
\X\varphi\mid 
\varphi\U\varphi\mid 
\varphi\R\varphi,
\]
where $p$ ranges over the atomic propositions in $\ap$. 
We define the following usual shorthands: $\top\coloneqq p\lor \lnot p$, $\bot\coloneqq p\land \lnot p$, $\F\varphi\coloneqq \top\U\varphi$, $\G\varphi\coloneqq \bot\R\varphi$.
The length of a formula is defined to be the number of Boolean and temporal connectives occurring in it. 
Often, the length of an \LTL-formula is defined to be the number of syntactically different subformulae, which might be exponentially smaller. 
Here, we need to distinguish syntactically equal subformulae which becomes clearer after defining the semantics.
As we only consider formulae in negation normal form, we use the full set of temporal operators.

Next, we recall the classical semantics of \LTL before we introduce team semantics. 
For traces~$t \in (\pow{\ap})^\omega$ we define the following:
\begin{tabbing}
    $t \ltlmodels  \psi\lor\phi$ \= if \= Rechts \kill
	$t \ltlmodels  p$\> if \>$p\in t(0)$,\\
	$t \ltlmodels  \lnot p$\> if \>$ p\notin t(0)$,\\
	$t \ltlmodels  \psi\land\phi$\> if \>$ t \ltlmodels \psi \text{ and }t \ltlmodels \phi$,\\
	$t \ltlmodels  \psi\lor\phi$\> if \>$ t \ltlmodels \psi \text{ or }t \ltlmodels \phi$,\\
	$t \ltlmodels \X\varphi$\> if \>$t[1,\infty)\ltlmodels\varphi$,\\
	$t\ltlmodels \psi\U\phi$\> if\> $\exists k \ge 0$ such that $t[k,\infty)\ltlmodels \phi$ and $\forall k' < k$ we have that $t[k',\infty)\ltlmodels \psi$, and\\ 
	$t\ltlmodels \psi\R\phi$\> if\> $\forall k\ge 0$ we have that $t[k,\infty)\ltlmodels \phi$ or $\exists k' < k$ such that $t[k',\infty)\ltlmodels \psi$.
\end{tabbing}

\section{Team Semantics for LTL}\label{sec:TeamLTL}
Next, we introduce (synchronous) team semantics for \LTL, obtaining the logic \TeamLTL.
The syntax of \TeamLTL is the same as the one of \LTL, but the semantics differs:
For teams $T \subseteq (\pow{\ap})^\omega$ we define the following:
\begin{tabbing}
    $T\models \psi\land\phi$ \= if \= Rechts \kill
	$T\models p$\> if \> $\forall t \in T$ we have that $p\in t(0)$,\\
	$T\models \lnot p$\> if \> $\forall t \in T$ we have that $p\notin t(0)$,\\
	$T\models \psi\land\phi$\> if \>$T\models\psi$  and $T\models\phi$,\\
	$T\models \psi\lor\phi$\> if \> $\exists T_1\cup T_2=T$ such that $T_1\models\psi$ and $T_2\models\phi$,\\
	$T\models\X\varphi$\> if \>$T[1,\infty)\models\varphi$,\\
	$T\models\psi\U\phi$\> if \> $\exists k\ge 0$ such that $T[k,\infty)\models\phi$ and $\forall k' <k$ we have that $T[k',\infty)\models\psi$, and\\
	$T\models\psi\R\phi$\> if \> $\forall k \ge 0$ we have that $T[k,\infty)\models\phi$ or $\exists k'<k$ such that $T[k',\infty)\models\psi$.
\end{tabbing}

We call expressions of the form $\psi\lor\phi$ \emph{splitjunctions} to emphasise that in team semantics disjunction splits a team into two parts.
Similarly, the $\lor$-operator is referred to as a \emph{splitjunction}.
Notice that the object (a trace or a team) left of $\models$ determines which of the above semantics is used (classical or team semantics). Some subsequent works consider team semantics over multisets of traces~\cite{DBLP:conf/lics/GutsfeldMOV22}. For synchronous \TeamLTL (as introduced here) the generalisation to multisets would allow the implementation of new quantitative dependency statements as atomic formulae. From the above semantics only disjunction would need to be reinterpreted via disjoint (multiset) unions. Since having multiset teams would not have any meaningful impact on \TeamLTL, we adopt the slightly simpler set-based semantics. 

\begin{figure}
	\Description[Properties Overview Table]{Left column property, right column definition.}
	\begin{center}
\scalebox{0.85}{
 \begin{tabular}{llc}\toprule
 	property & definition & \\\midrule
 	empty team property & $\emptyset\models\phi$ & \yes\\
 	downward closure & $T\models\phi$ implies $\forall T'\subseteq T$: $T'\models\phi$ & \yes \\
 	union closure & $T\models\phi$ and $T'\models\phi$ implies $T\cup T'\models\phi$ & \no\\
 	flatness & $T\models\phi$ if and only if $\forall t\in T$: $\{t\}\models\phi$	& \no\\
 	singleton equivalence & $\{t\}\models\phi$ if and only if $t\ltlmodels\phi$ & \yes \\\bottomrule
 \end{tabular}
}
\end{center}
\caption{Structural properties overview for \TeamLTL.}\label{fig:structuralpropertiesoverview}
\end{figure}

\begin{example}
If $p$ is an atomic proposition encoding that a computation has ended, then $\F p$ defines the hyperproperty \emph{bounded termination}. In particular, a possibly infinite team $T$ satisfies $\F p$ if there is a natural number $n\in\mathbb{N}$ such that $p\in t(n)$, for every $t\in T$.
\end{example}

We consider several standard properties of team semantics (cf., e.g.~\cite{DKV16}) and verify which of these hold for our semantics for \LTL. 
These properties are later used to analyse the complexity of the satisfiability and model checking problems. 
See Figure~\ref{fig:structuralpropertiesoverview} for the definitions of the properties and a summary for which of the properties hold for our semantics. 
The positive results follow via simple inductive arguments. 
For the fact that team semantics is not union closed, consider teams $T = \set{\set{p} \emptyset^\omega}$ and $T' = \set{\emptyset\set{p} \emptyset^\omega}$. 
Then, we have $T \models \F p$ and $T'\models \F p$ but $T \cup T' \not\models \F p$. 
This also implies that flatness fails, for it implies union closure.

A Kripke structure~$\kripke = (W, R, \eta, w_I)$ consists of a finite set~$W$ of worlds, a left-total transition relation~$R \subseteq W \times W$, a labeling function~$\eta \colon W \rightarrow \pow{\ap}$ such that $\eta(w)$ is finite for  each $w \in W$, and an initial world~$w_I \in W$. 
A path~$\pi$ through $\kripke$ is an infinite sequence~$\pi = \pi(0) \pi(1) \pi(2) \cdots \in W^\omega$ such that $\pi(0) = w_I$ and $(\pi(i), \pi(i+1)) \in R$ for every $i \ge 0$. 
The trace of $\pi$ is defined as $t(\pi) = \eta(\pi(0))\eta(\pi(1))\eta(\pi(2)) \cdots \in (\pow{\ap})^\omega$. 
A Kripke structure~$\kripke$ induces the team~$T(\kripke) = \set{\,t(\pi) \mid \pi \text{ is a path through }\kripke\,}$. 

Next, we define the most important verification problems for \TeamLTL, namely satisfiability and two variants of the model checking problem. 
For classical \LTL, one studies the path checking problem and the model checking problem. The difference between these two problems lies in the type of structures one considers. 
Recall that a model of an \LTL-formula is a single trace. In the path checking problem, a trace~$t$ and a formula~$\phi$ are given, and one has to decide whether $t \ltlmodels \phi$. 
This problem has applications to runtime verification and monitoring of reactive systems~\cite{kf09,ms03}. 
In the model checking problem, a Kripke structure~$\kripke$ and a formula~$\phi$ are given, and  one has to decide whether every execution trace~$t$ of $\kripke$ satisfies $\phi$.

The satisfiability problem of \TeamLTL is defined as follows.
 \decisionproblem{$\TSAT(\TeamLTL)$ --- \TeamLTL satisfiability.}{A \TeamLTL-formula $\phi$.}{Is there a non-empty team~$T$ such that $T\models\phi$?} 
The non-emptiness condition is necessary, as otherwise every formula is satisfiable due to the empty team property (see Figure~\ref{fig:structuralpropertiesoverview}).
Also note that, due to downward closure (see Figure~\ref{fig:structuralpropertiesoverview}), a \TeamLTL-formula $\phi$ is satisfiable if and only if there is a singleton team that satisfies the formula. 
From this and singleton equivalence (see Figure~\ref{fig:structuralpropertiesoverview}), we obtain the following result from the identical result for \LTL under classical semantics~\cite{SistlaC85}.

\begin{proposition}\label{prop:TMCa-TSATa-PSPACE}\label{thm:TPCa}
$\TSAT(\TeamLTL)$ is $\PSPACE$-complete w.r.t.\ $\leqpm$-reductions.
\end{proposition}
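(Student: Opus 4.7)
The plan is to show that team satisfiability for \LTL coincides, as a decision problem, with classical \LTL satisfiability, and then invoke the classical $\PSPACE$-completeness result of Sistla and Clarke~\cite{SistlaC85}. The two structural properties from Figure~\ref{fig:structuralpropertiesoverview} that do the work are downward closure and singleton equivalence; the non-emptiness requirement in the definition of $\TSAT(\LTL)$ is crucial, since otherwise the empty team property would make every formula trivially team-satisfiable.

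For membership in $\PSPACE$, I would argue as follows. Suppose a non-empty team~$T$ satisfies $\phi$. Pick any $t \in T$; by downward closure, $\{t\} \models \phi$, and by singleton equivalence, $t \ltlmodels \phi$ in the classical sense. Conversely, if $t \ltlmodels \phi$ classically, then singleton equivalence gives $\{t\} \models \phi$, and the singleton team~$\{t\}$ is non-empty and hence witnesses team satisfiability. Thus $\phi$ is team-satisfiable iff $\phi$ is classically satisfiable, so deciding $\TSAT(\LTL)$ amounts to deciding classical \LTL-satisfiability, which is in $\PSPACE$.

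For $\PSPACE$-hardness under $\leqpm$, the same equivalence shows that the identity mapping is a (trivial) polynomial-time many-one reduction from classical \LTL-satisfiability to $\TSAT(\LTL)$; since classical \LTL-satisfiability is $\PSPACE$-hard~\cite{SistlaC85}, so is $\TSAT(\LTL)$.

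The main obstacle is conceptual rather than technical: one has to be careful to use the non-emptiness requirement in the problem statement to avoid the trivialisation by the empty team property, and one has to be sure that both downward closure and singleton equivalence indeed hold for \TeamLTL (which the paper has already asserted via straightforward inductive arguments). Everything else is essentially a bookkeeping exercise, which is why the proposition is stated as an immediate consequence.
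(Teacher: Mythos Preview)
Your proposal is correct and matches the paper's own argument essentially verbatim: downward closure reduces team satisfiability to singleton-team satisfiability, singleton equivalence then identifies this with classical \LTL satisfiability, and the $\PSPACE$-completeness follows from Sistla and Clarke. The paper presents this as a one-line remark preceding the proposition, and your write-up just spells out the same reasoning in slightly more detail.
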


We consider the generalisation of the path checking problem for \LTL (denoted by $\LTLPC$), which asks for a given ultimately periodic trace~$t$ and a given formula~$\phi$, whether $t \ltlmodels\phi$ holds. 
In the team semantics setting, the corresponding question is whether a given finite team comprised of ultimately periodic traces satisfies a given formula. 
Such a team is given by a team encoding~$\teamup$. To simplify our notation, we will write $\teamup\models \varphi$ instead of $\eval{\teamup} \models \varphi$.

 \decisionproblem{$\TPC(\TeamLTL)$ --- \TeamLTL Path Checking.}{A \TeamLTL-formula $\phi$ and a finite team encoding~$\teamup$.}{Does $\teamup\models\phi$?}

Now, consider the generalised model checking problem where one checks whether the team of traces of a Kripke structure satisfies a given formula. 
This is the natural generalisation of the model checking problem for classical semantics, denoted by $\LTLMC$, which asks, for a given Kripke structure~$\kripke$ and a given \LTL-formula~$\phi$, whether $t \ltlmodels \phi$ for every trace $t$ of $\kripke$.

  \decisionproblem{$\TMC(\TeamLTL)$ --- \TeamLTL Model Checking.}{A \TeamLTL-formula $\phi$ and a Kripke structure $\kripke$.}{Does $T(\kripke)\models\phi$?}

\section{Complexity Results for TeamLTL}

Next, we examine the computational complexity of path and model checking with respect to team semantics.

\subsection{Path Checking}
The following problem for quantified Boolean formulae (qBf) is well-known~\cite{lad77,st77} to be $\PSPACE$-complete:
\problemdef{$\qbfval$ --- Validity problem for quantified Boolean formulae.}{A quantified Boolean formula $\varphi$.}{Is $\varphi$ valid?}

\begin{lemma}\label{lem:TPCs_PSPACEhard}
	$\TPC(\TeamLTL)$ is $\PSPACE$-hard w.r.t.\ $\leqpm$-reductions.
\end{lemma}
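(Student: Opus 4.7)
I plan to reduce from $\qbfval$, which is $\PSPACE$-complete. Given a QBF $\chi = Q_1 v_1 \cdots Q_n v_n\,\psi(v_1,\ldots,v_n)$ with $\psi$ quantifier-free, I will construct in polynomial time a team encoding $\teamup$ and an $\LTL$-formula $\Phi$ such that $\teamup \models \Phi$ if and only if $\chi$ is valid. The central idea is to simulate the quantifier alternation of $\chi$ by a nested cascade of splitjunctions, using dedicated propositions $p_1, \ldots, p_n$ that carry the intended values of the $v_i$ at designated positions in the traces.

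For each universal quantifier $\forall v_i$ I will use a splitjunction of the form $(\X^{i-1} p_i \land \Phi_1) \lor (\X^{i-1} \neg p_i \land \Phi_0)$: since both disjuncts must be satisfied by their respective halves of the split, and since the atomic conjuncts $\X^{i-1} p_i$ and $\X^{i-1} \neg p_i$ force each half to contain only traces carrying the corresponding $v_i$-bit, both values of $v_i$ have to be handled. For each existential quantifier $\exists v_j$ I will use a splitjunction consisting essentially of the two continuation formulas $\Phi_1 \lor \Phi_0$ with no atomic forcing, so that by the empty-team property and downward closure (Figure~\ref{fig:structuralpropertiesoverview}) one of the two halves can be taken empty, reducing the splitjunction to a genuine choice between the two values of $v_j$. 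The team $\teamup$ will be a polynomially sized set of ultimately periodic traces, designed so that the forced $\forall$-splits always have exactly the two intended subteam alternatives, while the matrix $\psi$ appears only once in $\Phi$, with each occurrence of $v_i$ replaced by $\X^{i-1} p_i$ so that its truth value can be read off the surviving subteam. Correctness will then follow by induction on the quantifier prefix: a winning $\exists$-strategy for the QBF game on $\chi$ translates into a witnessing family of splits for $\teamup \models \Phi$, and conversely any such family of splits can be read as a winning strategy.

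The main obstacle will be to keep both $\teamup$ and $\Phi$ of polynomial size. A naive recursion either duplicates the matrix $\psi$ at every quantifier branch, producing a formula of size $\Theta(2^n)$, or else it requires $\Theta(2^n)$ traces in $\teamup$ in order to have every complete assignment present explicitly. Overcoming this will require a careful coupling between the chosen propositions, the shape of the traces, and the forcing formulas on the universal side, so that a single occurrence of $\psi$ in $\Phi$ suffices and the subteam surviving after all splitjunctions have fired provides the current assignment structurally. I expect the soundness part of the reduction, showing that an accepting family of splits indeed respects the alternation order and cannot ``cheat'' by postponing choices, to be the most delicate point, and to require a detailed analysis of how temporal shifts $T[k,\infty)$ inside the forced $\forall$-splits interact with the propositions $p_i$.
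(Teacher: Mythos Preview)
Your plan correctly identifies the reduction source and the central obstacle, but it stops precisely where the real work begins: you acknowledge that the naive recursion blows up either the formula or the team to size $\Theta(2^n)$, and then only say that ``overcoming this will require a careful coupling'' without supplying one. As written, your universal gadget $(\X^{i-1}p_i\land\Phi_1)\lor(\X^{i-1}\neg p_i\land\Phi_0)$ forces two syntactic copies of the continuation at every $\forall$, so the formula is exponential unless $\Phi_1=\Phi_0$; but if they are equal, your existential gadget $\Phi_1\lor\Phi_0$ collapses to a bare splitjunction $\Phi'\lor\Phi'$ that records no choice of value at all. Dually, if you insist on a polynomial-size formula and instead try to carry all assignments in the team, you need a trace for every branch of the $\forall$-tree, which is again $2^{\#\forall}$ traces. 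There is no indication in your proposal of how to escape this dichotomy.

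The paper's proof resolves this by abandoning the idea that a universal quantifier is a single splitjunction. Instead, $\forall x_i$ is encoded \emph{temporally}: the reduction uses periodic traces (period $3$ or $6$) with markers~$\$$ and~$\#$, and sets
\[
f(\forall x_i\psi)\;=\;\bigl(\$\ \lor\ (\neg q_i\U q_i)\ \lor\ \F[\#\land\X f(\psi)]\bigr)\,\U\,\#.
\]
Here $f(\psi)$ occurs \emph{once}. The outer $\U\#$ forces the disjunction on the left to hold at every time shift up to the $\#$-position; the structure of the auxiliary traces $U(i)$ and $E(i)$ is arranged so that at shift $0$ the subformula $\neg q_i\U q_i$ can absorb only $T(i,1)$, while at shift $3$ it can absorb only $T(i,0)$. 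Thus the same syntactic occurrence of $f(\psi)$ is evaluated twice, once with each value of $x_i$ forced into the surviving team, and the recursion stays linear. The existential case, by contrast, is a single splitjunction $(\F q_i)\lor f(\psi)$ that simply discards one of the two value-traces, much as you suggest.

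So the missing idea is specifically: simulate universal branching by temporal iteration (via $\U$) over carefully phased periodic traces, rather than by syntactic duplication of the continuation. Without that, your outline does not yield a polynomial-time reduction.
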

\begin{proof}
	\begin{figure}
		\Description[Trace overview]{U(i) is the trace which is a cycle of 6 states, where after the start the first state has qi and dollar symbol labeled, the second state has only dollar symbol, the third nothing, the fourth dollar symbol, and the last has qi dollar symbol and hash tag symbol. E(i) are two cycles of length 3. They differ in the second, resp. third state. The second for E(i) is called T(i,1) and has xi, qi and dollar symbol labeled while the other trace T(i,0) has instead a dollar symbol. In the last state, for T(i,1) we have a dollar symbol and a hash tag symbol, while for T(i,0) we have xi, qi, dollar symbol, hash tag symbol. The last image is depending on which literal is in the clause. If it is a positive one then the trace L(j,k) for the jth literal in clause k is a cycle of length 3 and contains at the second state xi and a dollar symbol. The third state has a dollar and a hash tag symbol. If the literal is negative, then the last two states are swapped.}
		\centering
		\begin{tikzpicture}[x=1cm,y=1.2cm,thick]
		
		\node at (0.5,.75) {$U(i)$};
		
			\foreach \n/\x/\y/\pos/\lab/\col in {1/0/0/180//dotwhite,2/0/-1/180/$\substack{q_i\\\$}$/dot,3/0/-2/180/$\substack{\$}$/dot,4/1/0/0//dot,5/1/-1/0/$\substack{\$}$/dot,6/1/-2/0/$\substack{q_i\\\$\\\#}$/dot}{
				\node[\col,label={\pos:\lab}] (\n) at (\x,\y) {};
			}
			
			\foreach \f/\t in {1/2,2/3,3/4,4/5,5/6}{
				\path[-latex,black] (\f) edge (\t);
			}
			\path[-latex,black] (6) edge[out=55,in=45,looseness=1.5] (1);
		
			\node at (0,-3.35) {};
		\end{tikzpicture}\hfill
		\begin{tikzpicture}[x=1cm,y=1.2cm,thick]
		
		\node at (.5,1.3) {$E(i)$};
		
		\draw [decorate,decoration={brace,amplitude=5pt,raise=4pt}]
				(-.6,.7) -- (1.8,.7);
		
		\node at (0,.55) {$T(i,1)$};
		\node at (1.25,.55) {$T(i,0)$};
		
			\foreach \n/\x/\y/\pos/\lab/\col in {1/0/0/180//dotwhite,2/0/-1/180/$\substack{x_i\\q_i\\\$}$/dot,3/0/-2/180/$\substack{\$\\\#}$/dot,4/1/0/0//dotwhite,5/1/-1/0/$\substack{\$}$/dot,6/1/-2/0/$\substack{x_i, q_i\\\$, \#}$/dot}{
				\node[\col,label={\pos:\lab}] (\n) at (\x,\y) {};
			}
			
			\foreach \f/\t in {1/2,2/3,4/5,5/6}{
				\path[-latex,black] (\f) edge (\t);
			}
			\path[-latex,black] (6) edge[out=45,in=45,looseness=1] (4);
			\path[-latex,black] (3) edge[out=45,in=45,looseness=1] (1);
			
			\node at (0,-3.35) {};
		\end{tikzpicture}\hfill
			\begin{tikzpicture}[x=2.5cm,y=1.2cm,thick]
				\node[text width=2.5cm,align=left] at (.15,.75) {if $\ell_{jk}=x_i$,\newline then $L({j,k}):$};
				\node[dotwhite] (1) at (0,0) {};
				\node[dot,label={180:$\substack{x_i\\\$}$}] (2) at (0,-1) {};
				\node[dot,label={180:$\substack{\$\\\#}$}] (3) at (0,-2) {};
				
				\node at (0.65,.3) {\footnotesize Pos.:};
				\node at (0.65,0) {1};
				\node at (0.65,-1) {2};
				\node at (0.65,-2) {3};

				\node[text width=2.5cm,align=left] at (1.15,.75) {if $\ell_{jk}=\neg x_i$,\newline then $L({j,k}):$};
				\node[dotwhite] (a) at (1,0) {};
				\node[dot,label={0:$\substack{\$}$}] (b) at (1,-1) {};
				\node[dot,label={0:$\substack{x_i\\\$\\\#}$}] (c) at (1,-2) {};

				\foreach \from/\to in {1/2,2/3,a/b,b/c}{
					\path[draw,-latex] (\from) edge (\to);
				}
				
				\path[-latex,black] (c) edge[out=55,in=25,looseness=1] (a);
				\path[-latex,black] (3) edge[out=55,in=25,looseness=1] (1);
		
				\draw [decorate,decoration={brace,amplitude=5pt,raise=4pt,mirror}]
				(1.4,-2.1) -- (2.1,-2.1) node[xshift=-.875cm,yshift=-.2cm,below,align=center,text width=2.4cm] {$c_j$ at positions $\{1,2,3\}\setminus\{k\}$ for both $L(j,k)$};
				

				\node at (1.5,.3) {$\scriptstyle k=1$};
				\node at (1.5,-1) {$c_j$};
				\node at (1.5,-2) {$c_j$};
				
				\node at (1.75,.3) {$\scriptstyle k=2$};
				\node at (1.75,0) {$c_j$};
				\node at (1.75,-2) {$c_j$};

				\node at (2,.3) {$\scriptstyle k=3$};
				\node at (2,0) {$c_j$};
				\node at (2,-1) {$c_j$};
				
			\end{tikzpicture}
		
		\caption{Traces for the reduction presented in the proof of Lemma~\ref{lem:TPCs_PSPACEhard}.}\label{fig:tracegadgets}
		\end{figure}
	To prove the claim of the lemma, we will show that $\qbfval\leqpm \TPC(\TeamLTL)$.
Given a quantified Boolean formula $\varphi$, we stipulate, w.l.o.g., that $\varphi$ is of the form $\exists x_1\forall x_2\cdots Q x_n\chi$, where $\chi=\bigwedge_{j=1}^m\bigvee_{k=1}^3\ell_{jk}$, $Q\in\{\exists,\forall\}$, and
 $x_1,\dots,x_n$ are exactly the free variables of $\chi$ and pairwise distinct. 
 Notice that, depending on the choices made by the quantifiers, each of the $m$ clauses of $\chi$ has to contain at least one true literal in order for $\varphi$ to be valid. 
 This will be mimicked in the reduction by particular propositions $c_1,\dots,c_m$.

In the following, we define a reduction which is composed of two functions $f$ and $g$.
Given a qBf~$\varphi$, the function $f$ will define an \LTL-formula and $g$ will define a team such that $\varphi$ is valid if and only if $g(\varphi)\models f(\varphi)$.
Essentially, the team $g(\varphi)$  will contain three kinds of traces, see Figure~\ref{fig:tracegadgets}, that will be used to encode variables $i\in\{1,\dots,n\}$, clauses $j\in\{1,\dots,m\}$, and literal positions $k\in\{1,2,3\}$:
\begin{enumerate}
	\item traces which are used to mimic universal quantification ($U(i)$ and $E(i)$), 
	\item traces that are used to simulate existential quantification ($E(i)$), and 
	\item traces used to encode the matrix of $\varphi$ ($L(j,k)$).
\end{enumerate} Moreover the trace $T(i,1)$ ($T(i,0)$, resp.) is used inside the proof to encode an assignment that maps the variable $x_i$ to true (false, resp.).
Note that, $U(i), T(i,1), T(i,0), L(j,k)$ are technically singleton sets of traces.
For convenience, we identify them with the traces they contain.

Next, we inductively define the reduction function $f$ that maps qBfs to \LTL-formulae:
\begin{align*}
	f(\chi) &\coloneqq \bigvee_{i=1}^n \F x_i\lor\bigvee_{i=1}^m\F c_i, 
	\intertext{ where $\chi$ is the 3CNF-formula $\bigwedge_{j=1}^m\bigvee_{k=1}^3\ell_{jk}$ with free variables $x_1,\dots, x_n$,}
	f(\exists x_i \psi) &\coloneqq (\F q_i)\lor f(\psi),\qquad
	f(\forall x_i \psi) \coloneqq \bigl(\$ \lor (\lnot q_i\U q_i)\lor \F[\#\land \X f(\psi)]\bigr)\U \#.
\end{align*}
Intuitively, the idea is that the left part of the formula~$f(\chi)$ is used to satisfy one literal per clause and the right part is to take care of the remaining literals in a trivial way.

The reduction function $g$ that maps qBfs to teams is defined as follows with respect to the traces in Figure~\ref{fig:tracegadgets}:
\begin{align*}
	g(\chi) &\coloneqq \bigcup_{j=1}^m\{L(j,1)\cup L(j,2)\cup L(j,3)\},\\ 
	g(\exists x_i \psi) &\coloneqq E(i)\cup g(\psi),\qquad
	g(\forall x_i \psi) \coloneqq U(i)\cup E(i)\cup g(\psi).
\end{align*}

In Figure~\ref{fig:tracegadgets}, the first position of each trace is marked with a white circle.
For instance, the trace of $U(i)$ is then encoded via $$(\varepsilon,\emptyset\{q_i,\$\}\{\$\}\emptyset\{\$\}\{q_i,\$,\#\}).$$
The reduction function showing $\qbfval\leqpm \TPC(\TeamLTL)$ is then $\varphi\mapsto\langle g(\varphi),f(\varphi)\rangle$.
Clearly $f(\varphi)$ and  $g(\varphi)$ can be computed in linear time with respect to $|\varphi|$.

Intuitively, for the existential quantifier case, the formula $(\F q_i)\lor f(\psi)$ allows to continue in $f(\psi)$ with  exactly one of $T(i,1)$ or $T(i,0)$.
If $b\in\{0,1\}$ is a truth value then selecting $T(i,b)$ in the team is the same as setting $x_i$ to $b$.
For the case of $f(\forall x_i\psi)$, the formula $(\lnot q_i\U q_i)\lor \F[\#\land \X f(\psi)]$ with respect to the team $(U(i)\cup E(i))[0,\infty)$ is similar to the existential case choosing $x_i$ to be $1$ whereas for $(U(i)\cup E(i))[3,\infty)$ one selects $x_i$ to be $0$.
The use of the until operator in combination with $\$$ and $\#$ then forces both cases to happen.

Let $\varphi'=Q'x_{n'+1}\cdots Q x_n\chi$, where $Q',Q\in\{\exists,\forall\}$ and let $I$ be an assignment of the variables in $\{x_1,\dots,x_{n'}\}$ for $n'\leq n$. 
Then, let
$$g(I,\varphi') \coloneqq g(\varphi')\cup\{\, T(i,I(x_i)) \mid x_i \in \dom(I)\, \}.$$
We claim $I\models\varphi'$ if and only if $g(I,\varphi')\models f(\varphi')$.

Note that when $\varphi'= \varphi$ it follows that $I=\emptyset$ and that $g(I,\varphi')=g(\varphi)$. 
Accordingly, the lemma follows from the correctness of the claim above.
The claim is proven by induction on the number of quantifier alternations in $\varphi'$. 
\paragraph*{Induction Basis.}
$\varphi'=\chi$, this implies that $\varphi'$ is quantifier-free and $\dom(I)=\{x_1,\dots,x_n\}$. 
		
``$\Leftarrow$'': 
Let $g(I,\varphi')=T_1\cup T_2$ s.t.\ $T_1\models\bigvee_{i=1}^n \F x_i$ and $T_2\models\bigvee_{i=1}^m \F c_i$. We assume w.l.o.g.\ $T_1$ and $T_2$ to be disjoint, which is possible due to downward closure. 
We then have that \[T_2\subseteq\{\,L(j,k)\mid 1\leq j\leq m, 1\leq k \leq 3\,\}\] and
\[T_1= (\{\,L(j,k)\mid 1\leq j\leq m, 1\leq k \leq 3\,\}\setminus T_2)\cup\{\,T(i,I(x_i))\mid 1\leq i\leq n\,\}.\]
%
Due to construction of the traces, $L(j,k)\in T_2$ can only satisfy the subformula $\F c_{j'}$ for $j' = j$.
Moreover, note that there exists no $s\in\N$ such that $L(j,k)(s)\ni c_j$ for all $1\leq k\leq 3$; hence $\{L(j,1),L(j,2),L(j,3)\}$ falsifies $\F c_j$.
These two combined imply that $T_2\not\supseteq\{L(j,1),L(j,2),L(j,3)\}$, for each $1\leq j\leq m$.
However, for each $1\leq j\leq m$, any two of $L(j,k)$, $1\leq k\leq 3$, can belong to $T_2$ and hence exactly one belongs to $T_1$ (due to downward closure). 
This is true because it is impossible to satisfy all three $L(j,k)$ per clause $j$ simultaneously. 
One needs to decide for one literal per clause that is ``the'' satisfied one and is contained in $T_1$. 
The remaining literals are in $T_2$ (also if one or both are satisfied by the assignment).

Now, let $T_1=T_1^1\cup\cdots\cup T_1^n$ such that $T_1^i\models \F x_i$.
Note that
$\F x_i$ can be satisfied by $T(i',I(x_{i'}))$ only for $i' = i$. Since $T_1\supseteq\{\,T(i,I(x_i))\mid 1\leq i\leq n\,\}$, it follows that $T(i,I(x_{i})) \in T_1^i$, for each $1\leq i \leq n$.
Note also that, if $L(j,k)\in T_1$, it has to be in $T_1^i$ where $x_i$ is the variable of $\ell_{jk}$.
By construction of the traces, if $T(i,1)\in T_1^i$, we have $T_1^i(1)\models x_i$ and if $T(i,0)\in T_1^i$ then $T_1^i(2)\models x_i$. 
Thus, by construction of the traces $L(j,k)$, if $L(j,k)\in T_1$, then $I\models\ell_{jk}$.
Since, for each $1\leq j\leq m$, there is a $1\leq k\leq 3$ such that $L(j,k)\in T_1$ it follows that $I\models\varphi'$.

``$\Rightarrow$'': 
Now assume that $I\models\varphi'$.
As a result, pick for each $1\leq j\leq m$ a \emph{single} $1\leq k\leq 3$ such that $I\models\ell_{jk}$.
Denote this sequence of choices by $k_1,\dots,k_m$.
Choose $g(I,\varphi')=T_1\cup T_2$ as follows:
\begin{align*}
	T_1 &\coloneqq \{\,L(j,k_j)\mid  1\leq j\leq m, k_j\in\{1,2,3\} \}\cup\{T(i,I(x_i)\mid 1\leq i\leq n\,\}\\
	T_2 &\coloneqq \{\,L(j,1),L(j,2),L(j,3)\mid 1\leq j\leq m\,\}\setminus T_1
\end{align*}
Then $T_2\models \bigvee_{j=1}^m \F c_j$, for exactly two traces per clause are in $T_2$, and we can divide $T_2=T_2^1\cup\cdots\cup T_2^m$ where
\[
T_2^j \coloneqq \{\,L(j,k),L(j,k')\mid k,k'\in\{1,2,3\}\setminus\{k_j\}\,\},
\]
and, by construction of the traces, $T_2^j\models \F c_j$, for all $1\leq j\leq m$.
Furthermore, note that $T_1=T_1^1\cup\cdots \cup T_1^n$, where
\[
T_1^i \dfn \{\,L(j,k_j)\mid 1\leq j\leq m, I(x_i)\models\ell_{j{k_j}}\,\}\cup\{T(i,I(x_i))\}.
\]
Note that the left set of the union is a singleton. 
Now, for all $L(j,k_j) \in T_1^i$, there are two possibilities:
\begin{itemize}
	\item $I(x_i)=1$: then $x_i\in(L(j,k_j)(1)\cap T(i,I(x_i))(1))$.
	\item $I(x_i)=0$: then $x_i\in(L(j,k_j)(2)\cap T(i,I(x_i))(2))$. 
\end{itemize}
In both cases, $T_1^i\models \F x_i$, and thus $T_1\models\bigvee_{i=1}^n \F x_i$.
Hence it follows that $g(I,\varphi')\models f(\varphi')$ and the induction basis is proven.
\paragraph*{Induction Step.} 
		``Case $\varphi'=\exists x_i\psi$.'' 
		We show that $I\models \exists x_i\psi$ if and only if $g(I,\exists x_i\psi)\models f(\exists x_i\psi)$. 
		
		First note that $g(I,\exists x_i\psi)\models f(\exists x_i\psi)$ iff $E(i)\cup g(\psi)\cup\{\, T(i,I(x_i)) \mid x_i \in \dom(I)\, \}\models (\F q_i)\lor f(\psi)$, by the definitions of $f$ and $g$.
		Clearly, $E(i)\not\models \F q_i$, but both $T(i,1)\models\F q_i$ and $T(i,0)\models\F q_i$.
		Observe that $E(i)=\{T(i,1),T(i,0)\}$ and $q_i$ does not appear positively anywhere in $g(\psi)$.
		Accordingly, and by downward closure, $E(i)\cup g(\psi)\cup\{\, T(i,I(x_i)) \mid x_i \in \dom(I)\, \}\models (\F q_i)\lor f(\psi)$ if and only if 
		\begin{equation}
			\exists b\in\{0,1\}:\; T(i,1-b)\models \F q_i \text{ and }
			(E(i)\cup g(\psi)\cup\{\, T(i,I(x_i)) \mid x_i \in \dom(I)\, \}) \setminus T(i,1-b) \models f(\psi).\label{eq:IS-exists-1}
		\end{equation}
		Since $(E(i)\cup g(\psi)\cup\{\, T(i,I(x_i)) \mid x_i \in \dom(I)\, \}) \setminus T(i,1-b) = T(i,b)\cup g(\psi)\cup\{\, T(i,I(x_i)) \mid x_i \in \dom(I)\, \} = g(I[x_i \mapsto b],\psi)$, Equation~\eqref{eq:IS-exists-1} holds if and only if $g(I[x_i \mapsto b],\psi) \models f(\psi)$, for some bit $b\in\{0,1\}$.
		 By the induction hypothesis, the latter holds if and only if there exists a bit $b\in\{0,1\}$ s.t.\ $I[x_i\mapsto b]\models\psi$. 
		 Finally by the semantics of $\exists$  this holds if and only if $I\models\exists x_i\psi$.
		
``Case $\varphi'=\forall x_i\psi$.'' 
		We need to show that $I\models \forall x_i\psi$ if and only if $g(I,\forall x_i\psi)\models f(\forall x_i\psi)$. 
		
		First note that, by the definitions of $f$ and $g$, we have 
		$$g(I,\forall x_i\psi)\models f(\forall x_i\psi)$$ if and only if
\begin{equation}\label{unicase1}
U(i)\cup E(i)\cup g(\psi)\cup\{\, T(i,I(x_i)) \mid x_i \in \dom(I)\, \} \models \bigl(\$\! \lor\! (\lnot q_i\U q_i)\!\lor\! \F[\#\!\land\! \X f(\psi)]\bigr)\U \#.
\end{equation}
In the following, we will show that \eqref{unicase1} is true if and only if $T(i,b) \cup  g(\psi)\cup\{\, T(i,I(x_i)) \mid x_i \in \dom(I)\, \} \models f(\psi)$ for all $b\in\{0,1\}$. 
From this the correctness follows analogously as in the case for the existential quantifier.

Notice first that each trace in $U(i)\cup E(i)\cup g(\psi)\cup\{\, T(i,I(x_i)) \mid x_i \in \dom(I)\, \}$ is periodic with period length either $3$ or $6$, and exactly the last element of each period is marked by the symbol~$\#$. 
Consequently, it is easy to see that \eqref{unicase1} is true if and only if
\begin{equation}\label{unicase2}
(U(i)\!\cup\! E(i)\!\cup\! g(\psi)\cup\{\, T(i,I(x_i)) \mid x_i \in \dom(I)\, \})[j,\infty) \models \$\! \lor\! (\lnot q_i\U q_i)\!\lor\! \F[\#\!\land\! \X f(\psi)],
\end{equation}	
for each $j\in\{0,1,2,3,4\}$.
Note that
$$(U(i)\cup E(i)\cup g(\psi)\cup\{\, T(i,I(x_i)) \mid x_i \in \dom(I)\, \})[j,\infty) \models \$,$$ for each $j\in\{1,2,4\}$, whereas no non-empty subteam of $(U(i)\cup E(i)\cup g(\psi)\cup\{\, T(i,I(x_i)) \mid x_i \in \dom(I)\, \})[j,\infty)$, $j\in\{0,3\}$, satisfies $\$$. Accordingly, \eqref{unicase2} is true if and only if
\begin{equation}\label{unicase3}
(U(i)\cup E(i)\cup g(\psi)\cup\{\, T(i,I(x_i)) \mid x_i \in \dom(I)\, \})[j,\infty) \models (\lnot q_i\U q_i)\!\lor\! \F[\#\land \X f(\psi)],
\end{equation}
for both $j\in\{0,3\}$.
Note that, by construction, $q_i$ does not occur positively in $g(\psi)$. 
As a result, $X\cap g(\psi)[j,\infty)= \emptyset$, $j\in\{0,3\}$, for all teams $X$ s.t. $X\models \lnot q_i\U q_i$.  
Also, none of the symbols $x_{i'}$, $c_{i'}$, $q_{i''}$, for $i'$, $i'' \in \nats$ with $i''\neq i$, occurs positively in $U(i)$. 
On that account,  $X\cap U(i)[j,\infty)= \emptyset$, $j\in\{0,3\}$, for all $X$ s.t.\ $X\models \F[\#\land \X f(\psi)]$, for eventually each trace in $X$ will end up in a team that satisfies one of the formulae of the form $\F x_{i'}$, $\F c_{i'}$, or $\F q_{i''}$ (see the inductive definition of $f$).
Furthermore, notice that $(X \cap \cup\{\, T(i,I(x_i)) \mid x_i \in \dom(I)\, \})[j,\infty) = \emptyset$ for $j \in \{ 0, 3 \}$, for all teams $X$ such that $X \models \neg q_i U q_i$.

Moreover, it is easy to check that $(T(i,1)\cup U(i))[0,\infty)\models \lnot q_i\U q_i$,  $(T(i,0)\cup U(i))[0,\infty)\not\models \lnot q_i\U q_i$, $(T(i,0)\cup U(i))[3,\infty)\models \lnot q_i\U q_i$, and $(T(i,1)\cup U(i))[3,\infty)\not\models \lnot q_i\U q_i$. 
From these, together with downward closure, it follows that \eqref{unicase3} is true if and only if for $b_0=1 $ and $b_3=0$
\begin{equation}\label{unicase4}
(U(i)\cup T(i,b_j))[j,\infty) \models \lnot q_i\U q_i, \text{ for all $j\in\{0,3\}$}
\end{equation}	
and
\begin{equation}\label{unicase5}
(T(i,1-b_j) \cup g(\psi)\cup\{\, T(i,I(x_i)) \mid x_i \in \dom(I)\, \})[j,\infty) \models \F[\#\land \X f(\psi)], 
\end{equation}
for both $j\in\{0,3\}$.
In fact, as \eqref{unicase4} always is the case, \eqref{unicase3} is equivalent with \eqref{unicase5}. 
By construction, \eqref{unicase5} is true if and only if $(T(i,b) \cup g(\psi)\cup\{\, T(i,I(x_i)) \mid x_i \in \dom(I)\, \})[6,\infty) \models f(\psi)$, for both $b\in\{0,1\}$. 
Now, since $$(T(i,b) \cup g(\psi)\cup\{\, T(i,I(x_i)) \mid x_i \in \dom(I)\, \})[6,\infty)=T(i,b) \cup g(\psi)\cup\{\, T(i,I(x_i)) \mid x_i \in \dom(I)\, \}$$ the claim applies.
\end{proof}

Now we turn our attention to proving a matching upper bound. 
To this end, we need to introduce some notation to manipulate team encodings. 
Given a pair~$(t_0,t_1)$ of finite traces~$t_0 = t_0(0) \cdots t_0(n)$ and $t_1 = t_1(0)\cdots t_1(n')$, we define $(t_0,t_1)[1,\infty)$ to be $(t_0(1) \cdots t_0(n), t_1)$ if $t_0 \neq \epsilon$, and to be $(\epsilon, t_1(1) \cdots t_1(n')t_1(0))$ if $t_0 = \epsilon$. 
Furthermore, we inductively define $(t_0,t_1)[i,\infty)$ to be $(t_0,t_1)$ if $i = 0$, and to be $((t_0,t_1)[1,\infty))[i-1,\infty)$ if $i>0$. 
Then, $$\eval{(t_0,t_1)[i,\infty)} = (\eval{(t_0,t_1)})[i,\infty),$$ that is, we have implemented the prefix-removal operation on the finite representation. Furthermore, we lift this operation to team encodings~$\teamup$ by defining $\teamup[i,\infty) = \set{\, (t_0,t_1)[i,\infty) \mid (t_0,t_1) \in \teamup\,}$. As a result, we have $\eval{\teamup[i,\infty)} = (\eval{\teamup})[i,\infty)$.
 
Given a finite team encoding $\teamup $, let $$\prfx(\teamup) = \max \set{\,\size{t_0} \mid (t_0, t_1) \in \teamup\,}$$ and let $\lcm(\teamup)$ be the \emph{least common multiple} of $\set{\,\size{t_1} \mid (t_0, t_1) \in \teamup\,}$. 
Then, $\teamup[i,\infty) = \teamup[i+\lcm(\teamup),\infty)$ for every $i \ge \prfx(\teamup)$. 

Furthermore, observe that if $\teamup$ is a finite team encoding and $i \ge \prfx(\teamup)$, then $\teamup[i,\infty)$ and $\teamup[i+\lcm(\teamup),\infty)$ satisfy exactly the same \TeamLTL-formulae. 

\begin{remark}
\label{rem_pcprops}	
In particular, we obtain the following consequences for temporal operators (for finite $\teamup$):

\begin{tabbing}
    $\teamup \models \psi \U \phi$ \= iff \= Rechts \kill
	$\teamup \models \psi \U \phi$ \> iff \> $\exists k \leq \prfx(\teamup)+\lcm(\teamup)$ such that $ \teamup[k, \infty) \models \phi$ and 
	$\forall k' < k: \teamup[k', \infty) \models \psi$.\\
	$\teamup \models \psi \R \phi$ \> iff \> $\forall k \leq \prfx(\teamup)+\lcm(\teamup)$ we have that $ \teamup[k, \infty) \models \phi$ or 
	$\exists k' < k: \teamup[k', \infty) \models \psi$.
\end{tabbing}
\end{remark}

Accordingly, we can restrict the range of the temporal operators when model checking a finite team encoding. This implies that a straightforward recursive algorithm implementing team semantics solves $\TPC(\TeamLTL)$.

\begin{lemma}\label{lem:TPCs_in_PSPACE}
	$\TPC(\TeamLTL)$ is in $\PSPACE$.
\end{lemma}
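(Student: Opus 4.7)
The plan is to design a recursive algorithm that implements the team semantics directly, following the structure of the input formula, and to argue that it runs in polynomial space by exploiting the periodicity observation in Remark~\ref{rem_pcprops}. The central fact I would rely on is that if a finite team encoding~$\teamup$ has size polynomial in the input, then every shifted team encoding~$\teamup[i,\infty)$ also has size bounded by $|\teamup|$, since shifting just rotates each $t_1$-cycle and shortens each $t_0$-prefix. Furthermore, although $\lcm(\teamup)$ can be exponential in $|\teamup|$, its binary representation is of polynomial length (as $\log \lcm(\teamup) \le \sum_{(t_0,t_1)\in\teamup}\log|t_1|$). Hence a counter ranging up to $\prfx(\teamup)+\lcm(\teamup)$ can be stored in polynomial space.

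The algorithm~$\mathrm{Check}(\teamup,\varphi)$ would proceed by induction on $\varphi$. For literals~$p$ and $\lnot p$ it simply inspects the first position of every trace in $\teamup$. For $\psi\land\phi$ it calls itself twice, reusing space. For the splitjunction~$\psi\lor\phi$, it \emph{nondeterministically guesses} a partition $\teamup=\teamup_1\cup\teamup_2$ (which we may assume disjoint by downward closure) and recurses on each part; this is legal in $\PSPACE$ since $\PSPACE=\NPSPACE$ by Savitch's theorem. For $\X\varphi$ it computes $\teamup[1,\infty)$ in place and recurses. For $\psi\U\phi$ and $\psi\R\phi$, by Remark~\ref{rem_pcprops} it suffices to iterate a counter~$k$ from $0$ up to $\prfx(\teamup)+\lcm(\teamup)$, and at each step check whether $\teamup[k,\infty)\models\phi$ (respectively, whether the outer release condition is violated), keeping a single bit of additional state to track whether the side condition on $\psi$ has been triggered so far.

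For the space analysis, I would observe that each recursive invocation stores at most: the current shifted team encoding (of size~$O(|\teamup|)$), a counter~$k$ of polynomial bit-length, and $O(1)$ auxiliary bits. The recursion depth is bounded by the length of the formula, and since recursive calls can reuse the same space stack (or be simulated à la Savitch for the nondeterministic splits), the total space used is polynomial in $|\teamup|+|\varphi|$. Combined with Lemma~\ref{lem:TPCs_PSPACEhard}, this yields $\PSPACE$-completeness.

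The only nontrivial step is the handling of the splitjunction and the bookkeeping of shifted team encodings: one must verify that $\teamup[k,\infty)$ can indeed be computed from $\teamup$ and $k$ (given in binary) in polynomial space, which reduces to computing, for each $(t_0,t_1)\in\teamup$, the pair $(\epsilon,t_1(k'\bmod|t_1|)\cdots t_1((k'-1)\bmod|t_1|))$ for $k'=k-|t_0|$ when $k\ge|t_0|$, and the obvious truncation otherwise; this is a straightforward modular arithmetic computation on numbers with polynomially many bits.
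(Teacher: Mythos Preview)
Your proposal is correct and follows essentially the same approach as the paper: a recursive procedure implementing the team semantics case by case, with the temporal operators bounded via Remark~\ref{rem_pcprops}, and a space analysis based on the recursion depth being bounded by the formula length while each frame stores only a (shifted) team encoding and a polynomially-sized counter. The one cosmetic difference is that for the splitjunction you guess the split nondeterministically and invoke Savitch's theorem, whereas the paper's Algorithm~\ref{algo_pathchecking} deterministically enumerates all subsets~$\teamup'\subseteq\teamup$ (and similarly iterates over all $k\le\prfx(\teamup)+\lcm(\teamup)$ for $\U$ and $\R$), noting that these exponential-arity disjunctions and conjunctions can be aggregated in linear space; both routes yield the same bound.
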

\begin{proof}
Consider Algorithm~\ref{algo_pathchecking} where $\varovee$ and $\bigvarovee$ denote classical (meta level) disjunctions, not splitjunctions, which are used to combine results from recursive calls.

\SetKw{proce}{Procedure}
\SetKwFunction{pcheck}{chk}
\SetKwFunction{nitram}{nitram}

\LinesNumbered
\begin{algorithm}\caption{Algorithm for $\TPC(\TeamLTL)$.}\label{algo_pathchecking}
\small
\proce \pcheck{Team encoding $\teamup$, formula $\phi$}\;
	\lIf{$\varphi=p$}{\Return $ \bigwedge_{(t_0,t_1)\in\teamup}\,\, p \in t_0t_1(0)$}
	\lIf{$\varphi=\neg p$}{\Return $ \bigwedge_{(t_0,t_1)\in\teamup}\,\, p \notin t_0t_1(0)$}
	\lIf{$\varphi=\psi\land\psi'$}{\Return \pcheck{$\teamup,\psi$} $\land$ \pcheck{$\teamup,\psi'$}}
	\lIf{$\varphi=\psi\lor\psi'$}{\Return $\bigvarovee_{\teamup' \subseteq \teamup}$\,\,\pcheck{$\teamup',\psi$} $\land$ \pcheck{$\teamup\setminus \teamup',\psi'$}}
	\lIf{$\varphi=\X\psi$}{\Return \pcheck{$\teamup[1,\infty),\psi$}}
	\lIf{$\varphi=\psi\U\psi'$}{\Return $\left(\right.$ $\bigvarovee_{k \le \prfx(\teamup) + \lcm(\teamup)}$\,\,\pcheck{$\teamup[k,\infty),\psi'$} $\land$ $\bigwedge_{k' < k}$\,\,\pcheck{$\teamup[k',\infty),\psi$}$\left.\right)$}
	\lIf{$\varphi=\psi\R\psi'$}{\Return $\left(\right.$ $\bigwedge_{k \le \prfx(\teamup) + \lcm(\teamup)}$\,\,\pcheck{$\teamup[k,\infty),\psi'$} $\varovee$ $\bigvarovee_{k' < k}$\,\,\pcheck{$\teamup[k',\infty),\psi$}$\left.\right)$}

\end{algorithm}

The algorithm is an implementation of the  semantics of \TeamLTL with slight restrictions to obtain the desired complexity. In line~5, we only consider strict splits, i.e., the team is split into two disjoint parts. This is sufficient due to downward closure. Furthermore, the scope of the temporal operators in lines~7 and 8 is restricted to the interval $[0, \prfx(\teamup) + \lcm(\teamup)]$. This is sufficient due to Remark~\ref{rem_pcprops}.

It remains to analyse the algorithm's space complexity. Its recursion depth is bounded by the size of the formula. Further, in each recursive call, a team encoding has to be stored. Additionally, in lines 5 and 7 to 8, a disjunction or conjunction of  arity~$2^{p(\size{\teamup})}$ for some fixed polynomial~$p$ has to be evaluated. In each case, this only requires polynomial space in the input to make the recursive calls and to aggregate the return value. 
Thus, Algorithm~\ref{algo_pathchecking} is implementable in polynomial space. 
\end{proof}

Combining Lemmas~\ref{lem:TPCs_PSPACEhard} and \ref{lem:TPCs_in_PSPACE} settles the complexity of $\TPC(\TeamLTL)$.
\begin{theorem}\label{thm:tpcs}
	$\TPC(\TeamLTL)$ is $\PSPACE$-complete w.r.t.\ $\leqpm$-reductions.
\end{theorem}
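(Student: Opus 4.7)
The theorem combines a hardness result and a matching upper bound that have already been established as Lemmas~\ref{lem:TPCs_PSPACEhard} and~\ref{lem:TPCs_in_PSPACE}, so the plan is simply to invoke both lemmas and observe that the reductions involved are $\leqpm$-reductions.

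More concretely, the first step is to appeal to Lemma~\ref{lem:TPCs_PSPACEhard}, which provides $\PSPACE$-hardness under $\leqpm$-reductions via the reduction $\varphi \mapsto \langle g(\varphi), f(\varphi)\rangle$ from $\qbfval$; since $\qbfval$ is $\PSPACE$-complete and $f,g$ are computable in polynomial time, the hardness half of the completeness claim is immediate. The second step is to invoke Lemma~\ref{lem:TPCs_in_PSPACE}, which shows that Algorithm~\ref{algo_pathchecking} decides $\TPC(\LTL)$ in polynomial space by exploiting downward closure (for the splitjunction case) and Remark~\ref{rem_pcprops} (to bound the range of temporal operators by $\prfx(\teamup)+\lcm(\teamup)$).

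Combining the matching lower and upper bounds yields $\PSPACE$-completeness with respect to $\leqpm$-reductions, which is exactly the claim of the theorem. There is no real obstacle here, since the substantive work has been done in the two preceding lemmas; the only minor point to verify is that the reduction of Lemma~\ref{lem:TPCs_PSPACEhard} is indeed polynomial-time (which it is, as $f(\varphi)$ and $g(\varphi)$ are computable in linear time in $|\varphi|$), so that the hardness carries over under the required reduction class.
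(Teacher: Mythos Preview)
Your proposal is correct and matches the paper's own proof exactly: the theorem is stated immediately after the two lemmas, and the paper simply says that combining Lemmas~\ref{lem:TPCs_PSPACEhard} and~\ref{lem:TPCs_in_PSPACE} settles the complexity of $\TPC(\LTL)$.
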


\subsection{Model Checking}
\label{subsec:modelchecking}

Here, we consider the model checking problem $\TMC(\TeamLTL)$. 
We show that model checking for splitjunction-free formulae is in $\PSPACE$ (as is \LTL model checking under standard semantics). Then, we discuss the challenges one has to overcome to generalize this result to formulae with splitjunctions, which we leave as an open problem.

\begin{theorem}\label{thm:tmcs-splitfree}
$\TMC(\TeamLTL)$ restricted to splitjunction-free formulae is in $\PSPACE$.
\end{theorem}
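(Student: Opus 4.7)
The plan is to exploit a key feature of splitjunction-free team semantics: the truth value of $T(\kripke)[i,\infty) \models \phi$ depends only on the ``reachable-subset'' sequence $S_i, S_{i+1}, \ldots$, where $S_j \dfn \set{\,\pi(j) \mid \pi \text{ a path of } \kripke\,} \subseteq W$. Intuitively, without a splitting connective every subformula is evaluated on the \emph{entire} current team, so the only information the semantics ever extracts is whether an atom holds uniformly on all currently reachable states. Moreover, the transition relation of $\kripke$ lifts to a deterministic self-map on $\pow{W}$, namely $S \mapsto \set{\,w' \mid \exists w \in S: (w,w') \in R\,}$; hence $S_i$ alone determines all future $S_{i+k}$.

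I would formalise this with a structural-induction lemma: \emph{if $S_i = S_{i'}$, then $T(\kripke)[i,\infty) \models \phi$ iff $T(\kripke)[i',\infty) \models \phi$ for every splitjunction-free $\phi$.} The atomic cases follow directly from the team semantics of $p$ and $\neg p$. The $\land$ and $\X$ cases are immediate (using $S_{i+1} = S_{i'+1}$ by determinism). For $\U$ and $\R$ the inductive hypothesis is applied at every shift $k \ge 0$, which is legitimate because determinism iteratively propagates $S_i = S_{i'}$ to $S_{i+k} = S_{i'+k}$ for all $k$. Consequently the satisfaction of any splitjunction-free $\phi$ reduces to a well-defined predicate $\mathrm{chk}(S, \phi)$ of a single subset $S \subseteq W$ (with $\kripke$ fixed).

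The algorithm is then a direct recursion on $\phi$. The atomic, Boolean, and $\X$ cases are immediate; for $\X\psi$ we step to the subset successor of $S$ and recurse. For $\psi \U \phi$, and symmetrically for $\psi \R \phi$, we iterate $S^{(0)} \dfn S, S^{(1)}, S^{(2)}, \ldots$ for at most $2^{|W|}+1$ steps, evaluating the relevant recursive calls at each step and terminating as soon as a witness or a refutation is found; if the loop exhausts its bound the default answer is \textbf{false} for $\U$ and \textbf{true} for $\R$. By the pigeonhole principle, within $2^{|W|}+1$ steps some subset must repeat, and determinism then makes the remainder of the sequence periodic, so no further iteration can change the outcome. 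A binary counter of $|W|+1$ bits controls the loop; each recursive frame stores a subset ($|W|$ bits), this counter, and a formula pointer; and the recursion depth is $|\phi|$, giving total space usage $O(|\phi| \cdot |W|)$, which is polynomial in the input.

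The hardest step is the $\U, \R$ case of the key lemma, which must invoke the inductive hypothesis at infinitely many future positions at once; this is only legitimate because determinism of the subset construction propagates the hypothesis $S_i = S_{i'}$ to every future offset. A minor technical point is that the iterated subset $S^{(k)}$ must be maintained on the fly in polynomial space, but this is immediate since the subset-successor step is a polynomial-time, in-place computation overwriting the current $S^{(k)}$.
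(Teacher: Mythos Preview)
Your proof is correct and rests on the same key observation as the paper: for splitjunction-free formulae, the satisfaction of $T(\kripke)[i,\infty)\models\phi$ is fully determined by the reachable-subset $S_i$, and the deterministic subset successor makes the sequence $S_0,S_1,\ldots$ ultimately periodic with period at most $2^{|W|}$. Where you diverge is in the algorithmic packaging. The paper encodes the sequence $(S_i)_i$ as a single ultimately periodic trace~$t$ over the extended alphabet $\ap\cup\{\overline{p}\mid p\in\ap\}$, replaces each $\neg p$ in $\phi$ by the fresh atom $\overline{p}$, and then invokes classical \LTL model checking of $t$ against $\overline{\phi}$ via an on-the-fly B\"uchi-automaton construction, yielding a nondeterministic polynomial-space procedure. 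Your approach instead gives a direct deterministic recursion on the formula that iterates the subset successor inside each $\U$/$\R$ case, bounding the loop by $2^{|W|}+1$ with a binary counter. This is more elementary: it avoids automata theory entirely and yields a deterministic $O(|\phi|\cdot|W|)$-space bound without Savitch's theorem. The paper's reduction, on the other hand, makes the connection to classical \LTL explicit and reuses off-the-shelf machinery. Both are valid; your argument is self-contained, while the paper's is more modular.
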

\begin{proof}
Fix a Kripke structure~$\kripke = (W, R, \eta, w_I)$ and a splitjunction-free formula~$\phi$. We define $S_0 = \set{w_I}$ and $S_{i+1} = \set{\,w' \in W \mid (w,w') \in R \text{ for some }w \in S_i\,}$ for all $i \ge 0$. By the pigeonhole principle, this sequence is ultimately periodic with a characteristic~$(s,p)$ satisfying $s+p \le 2^{\size{W}}$.\footnote{The characteristic of an encoding~$(t_0, t_1)$ of an ultimately periodic trace~$t_0 t_1t_1 t_1 \cdots$ is the pair~$(\size{t_0}, \size{t_1})$. Slightly abusively, we say that $(\size{t_0}, \size{t_1})$ is the characteristic of $t_0 t_1t_1 t_1 \cdots$, although this is not unique.} Next, we define a trace~$t$ over $\ap \cup \set{\,\overline{p} \mid p \in \ap\,}$ via
\[
t(i) =\, \set{\,p \in \ap \mid p \in \eta(w) \text{ for all } w \in S_i\,} \cup
\set{\,\overline{p} \mid p \notin \eta(w) \text{ for all } w \in S_i\,}	
\]
that reflects the team semantics of (negated) atomic formulae, which have to hold in every element of the team. 

An induction over the structure of $\phi$ shows that $T(\kripke) \models \phi$ if and only if $t \ltlmodels \overline{\phi}$ (i.e., under classical \LTL semantics), where $\overline{\phi}$ is obtained from $\phi$ by replacing each negated atomic proposition~$\neg p$ by $\overline{p}$. 
To conclude the proof, we show that $t \ltlmodels \overline{\phi}$ can be checked in non-deterministic polynomial space, exploiting the fact that $t$ is ultimately periodic and of the same characteristic as $S_0S_1S_2 \cdots$. 
However, as $s+p$ might be exponential, we cannot just construct a finite representation of $t$ of characteristic~$(s, p)$ and then check satisfaction in polynomial space.

Instead, we present an on-the-fly approach which is inspired by similar algorithms in the literature. It is based on two properties:
\begin{enumerate}
	\item Every $S_i$ can be represented in polynomial space, and from $S_i$ one can compute $S_{i+1}$ in polynomial time. 
	\item For every \LTL-formula~${\overline{\phi}}$, there is an equivalent non-deter\-mi\-nis\-tic B\"uchi automaton~$\aut_{\overline{\phi}}$ of exponential size (see, e.g.,~\cite{BaierKatoen08} for a formal definition of B\"uchi automata and for the construction of $\aut_{\overline{\phi}}$). States of $\aut_{\overline{\phi}}$ can be represented in polynomial space and given two states, one can check in polynomial time, whether one is a successor of the other. 
\end{enumerate}
These properties allow us to construct both $t$ and a run of $\aut_{\overline{\phi}}$ on $t$ on the fly.

In detail, the algorithm works as follows. It guesses a set~$S^* \subseteq W$ and a state~$q^*$ of $\aut_{\overline{\phi}}$ and checks whether there are $i < j$ satisfying the following properties:
\begin{itemize}
	\item $S^* = S_i = S_j$,
	\item $q^*$ is reachable from the initial state of $\aut_{\overline{\phi}}$ by some run on the prefix~$t(0) \cdots t(i)$, and
	\item $q^*$ is reachable from $q^*$ by some run on the infix~$t(i+1) \cdots t(j)$. This run has to visit at least one accepting state. 
\end{itemize}
By an application of the pigeonhole principle, we can assume w.l.o.g.\ that $j$ is at most exponential in $\size{W}$ and in $\size{\phi}$.

Let us argue that these properties can be checked in non-deter\-mi\-nis\-tic polynomial space. Given some guessed $S^*$, we can check the existence of $i<j$ as required by computing the sequence~$S_0 S_1 S_2\cdots$ on-the-fly, i.e., by just keeping the current set in memory, comparing it to $S^*$, then computing its successor, and then discarding the current set. While checking these reachability properties, the algorithm also guesses corresponding runs as required in the second and third property. As argued above, both tasks can be implemented in non-deterministic space. To ensure termination, we stop this search when the exponential upper bound on $j$ is reached. This is possible using a counter with polynomially many bits and does not compromise completeness, as argued above. 

It remains to argue that the algorithm is correct. First, assume $t \ltlmodels \overline{\phi}$, which implies that $\aut_{\overline{\phi}}$ has an accepting run on $t$. Recall that $t$ is ultimately periodic with characteristic~$(s,p)$ such that $s+p \le 2^{\size{W}}$ and that $\aut_{\overline{\phi}}$ is of exponential size. Thus, an application of the pigeonhole principle yields~$i <j$ with the desired properties.

Secondly, assume the algorithm finds~$i<j$ with the desired properties. Then, the run to $q$ and the one from $q$ to $q$ can be turned into an accepting run of $\aut_{\overline{\phi}}$ on $t$. That being so, $t \ltlmodels{\overline{\phi}}$.
\end{proof}

Note that as long as we disallow splitjunctions our algorithm is able to deal with contradictory negations and other extensions; we will return to this topic shortly in the next section.

The complexity of the general model checking problem is left open. It is trivially $\PSPACE$-hard, due to Theorem~\ref{thm:tpcs} and the fact that finite teams of ultimately periodic traces can be represented by Kripke structures.
However, the problem is potentially much harder as one has to deal with infinitely many splits of possibly uncountable teams with non-periodic traces, if a split occurs under the scope of an always-operator.

\section{Extensions of TeamLTL}\label{sec:extensions}
Next, we take a brief look into extensions of our logic. Extensions present a flexible way to delineate the expressivity and complexity of team-based logics. The philosophy behind extensions is to consider what are the fundamental hyperproperties that we want a logic to be able to express and add those as new atomic expressions.

The most well studied atomic expressions considered in team semantics are dependence and inclusion atoms. Intuitively, the dependence atom $\dep(p_1,\dots,p_n;q_1,\dots,q_m)$ expresses that the truth values of the variables $q_1,\dots,q_m$ are functionally determined by the truth values of $p_1,\dots,p_n$.
Formally, for Teams~$T \subseteq (\pow{\ap})^\omega$, the satisfaction of a dependence atom $T\models\!\dep(p_1,\dots,p_n;q_1,\dots,q_m)$ has the following meaning:
\[
	\forall t,t'\in T:\; (\agreeson{p_1}{t(0)}{t'(0)}\land \dots\land \agreeson{p_n}{t(0)}{t'(0)}) \text{ implies } (\agreeson{q_1}{t(0)}{t'(0)}\land\dots\land\agreeson{q_m}{t(0)}{t'(0)}),
\]
where $\agreeson{p}{t(0)}{t'(0)}$ means the sets $t(0)$ and $t'(0)$ agree on proposition $p$, i.e., both contain $p$ or not.
Observe that the formula $\dep(;p)$ merely means that $p$ has to be constant on the team.
Often, due to convenience we will write $\dep(p)$ instead of $\dep(;p)$. 
Note that the hyperproperty `input determinism', i.e., feeding the same input to a system multiple times yields identical behaviour, now can be very easily expressed via the formula\label{pg:input-nonint}
$
\dep(i_1,\dots,i_n;o_1,\dots,o_m),
$
where $i_j$ are the (public) input variables and $o_j$ are the (public) output variables.

Already in \cite{ehmmvv13} it was realised that in the team semantics of modal logic, it makes sense to allow modal formulae as parameters of dependence atoms. In fact, in the modal logic setting, this is necessary in order to obtain a logic that is expressively complete for downward closed team properties \cite[Corollary 4.5]{helusavi14}. Here, we take an analogous approach and allow parameters to dependence atoms to be LTL-formulae. Intuitively, the dependence atom $\dep(\varphi_1,\dots,\varphi_n;\psi_1,\dots,\psi_m)$ expresses that the truth values of the LTL-formulae $\psi_1,\dots,\psi_m$ are functionally determined by the truth values of $\varphi_1,\dots,\varphi_n$.
Formally, for Teams~$T \subseteq (\pow{\ap})^\omega$, the satisfaction of a dependence atom $T\models\!\dep(\varphi_1,\dots,\varphi_n;\psi_1,\dots,\psi_m)$ has the following meaning:
\[
	\forall t,t'\in T:\; (\agreeson{\varphi_1}{t}{t'}\land \dots\land \agreeson{\varphi_n}{t}{t'}) \text{ implies } (\agreeson{\psi_1}{t}{t'}\land\dots\land\agreeson{\psi_m}{t}{t'}),
\]
where $\agreeson{\varphi}{t}{t'}$ means the traces $t$ and $t'$ agree on the truth value of $\varphi$ with respect to the classical (trace-based) semantics of LTL.

\emph{Inclusion atoms} $(\varphi_1,\dots,\varphi_n) \subseteq (\psi_1,\dots,\psi_n)$ on the other hand express the inclusion dependency that all the truth values occurring for $\varphi_1,\dots,\varphi_n$ must also occur as truth values for $\psi_1,\dots,\psi_n$. Hence, if $o_1,\dots, o_n$ denote public observable bits and $c$ is a bit revealing confidential information, 
then the atom $(o_1,\dots o_n, c) \subseteq (o_1,\dots o_n, \neg c)$ expresses a form of non-inference by stating that an observer cannot infer the value of the confidential bit from the public observables.

One can also take a more abstract view and consider atoms whose semantics can be written in first-order (FO) logic over some trace properties; this includes both dependence and inclusion atoms as special cases.
The notion of generalised atoms in the setting of first-order team semantics was introduced by Kuusisto~\cite{kuusisto15}. 
An $n$-ary generalised atom is an expression of the form $\#(\phi_1,\dots,\phi_n)$ that takes $n$ \LTL-formulae as parameters.
We consider FO-formulae over the signature $(A_{x_i})_{1\leq i \leq n}$, where each $A_{x_i}$ is a unary predicate,  as defining formulae for $n$-ary generalised atoms.
We interpret a team~$T$ as a relational structure~$\mathfrak{A}(T)$ over the same signature with universe~$T$.
When we evaluate $\#(\phi_1,\dots,\phi_n)$, the interpretations of $A_{x_i}$ in $\mathfrak{A}(T)$ are determined by the interpretation of the parameters of $\#$. That is, $t \in T$ is in $ A^{\mathfrak A}_{x_i}$  if and only if $t\models \phi_i$ in the classical semantics of \LTL. 
\begin{definition}
	An FO-formula~$\psi$ \emph{defines} the $n$-ary generalised atom ~$\#$ if $T\models \#(\phi_1,\dots,\phi_n) \Longleftrightarrow \mathfrak A(T)\models\psi$. 
	In this case, $\#$ is also called an \emph{FO-definable} \emph{generalised atom}.\label{def:FO-definable}	
\end{definition}

In the view of FO-definable atoms, the dependence atom $\dep(p;q)$ is FO-definable by 
\[
\forall t\forall t'
\bigl((A_{x_1}(t)\leftrightarrow A_{x_1}(t'))\to(A_{x_2}(t)\leftrightarrow A_{x_2}(t'))\bigr)\]
We call an \TeamLTL-formula extended by a generalised atom~$\#$ an $\TeamLTL(\#)$-formula.
Similarly, we lift this notion and the corresponding decision problems to \emph{sets of generalised atoms} $\mathcal{D}$, i.e., $\TPC(\TeamLTL(\mathcal{D}))$ is the path checking problem for \TeamLTL-formulae which may use the generalised atoms in $\mathcal{D}$. 

Another way to extend \TeamLTL is to introduce additional connectives. Here the usual connectives to consider are the Boolean disjunction $\varovee$ and the contradictory negation $\sim$.
A team $T$ satisfies $\varphi\varovee\psi$ if it satisfies $\varphi$ or $\psi$ (or both).
Contradictory negation combined with team semantics allows for powerful constructions.
For instance, the complexity of model checking for propositional logic jumps from $\NC{1}$ to $\PSPACE$~\cite{muellerDiss}, whereas the complexity of validity and satisfiability jumps all the way to alternating exponential time with polynomially many alternations ($\ATIME(\exp,\pol)$)~\cite{Hannula:2018:CPL:3176362.3157054}.
Formally, we define that $T\models{\sim}\varphi$ if $T\not\models\varphi$. 
Note that the contradictory negation $\sim$ is not equivalent to the negation $\neg$ of atomic propositions defined earlier, i.e., ${\sim}p$ and $\neg p$ are not equivalent.

It turns out that the algorithm for $\TPC(\TeamLTL)$ (Algorithm~\ref{algo_pathchecking} on page~\pageref{algo_pathchecking}) is very robust to strengthenings of the logic via the aforementioned constructs. 
The result of Theorem \ref{thm:tpcs} can be extended to facilitate also the Boolean disjunction, contradictory negation and first-order definable generalised atoms.	

\begin{theorem}\label{thm:GenAtoms} 
	Let $\mathcal{D}$ be a finite set of first-order definable generalised atoms.
	$\TPC(\TeamLTL(\varovee, \sim,\mathcal D))$ is $\PSPACE$-complete w.r.t.\ $\leqpm$-reductions.
\end{theorem}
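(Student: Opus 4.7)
The plan is to establish both bounds. For PSPACE-hardness, I would simply invoke Theorem~\ref{thm:tpcs}: since classical $\LTL$ is a syntactic fragment of $\LTL(\varovee, {\sim}, \mathcal{D})$, the reduction from $\qbfval$ exhibited in Lemma~\ref{lem:TPCs_PSPACEhard} already witnesses $\PSPACE$-hardness.

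For the upper bound, the approach is to extend Algorithm~\ref{algo_pathchecking} with a case for each new construct while preserving its polynomial space budget. First, for $\varphi \varovee \psi$, the procedure simply makes two recursive calls and returns the (meta-level) disjunction; both calls are made sequentially so only polynomial space is used. Second, for ${\sim}\varphi$, the procedure recursively evaluates $\varphi$ and negates the result, using closure of $\PSPACE$ under complementation. Third, for a generalised atom $\#(\varphi_1,\dots,\varphi_n) \in \mathcal{D}$ on a current team encoding $\teamup$, I would compute, for each pair $(t,i)$ with $t$ an ultimately periodic trace in $\eval{\teamup}$ and each $i \le n$, the bit indicating whether $t \ltlmodels \varphi_i$ using the classical polynomial-time $\LTLPC$ algorithm; this yields the relational structure $\mathfrak{A}(\eval{\teamup})$. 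Since $\mathcal{D}$ is fixed, the defining FO-sentence~$\psi_\#$ has constant size, so $\mathfrak{A}(\eval{\teamup}) \models \psi_\#$ can be decided by standard FO model checking in polynomial time in $|\teamup|$.

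The main subtle point, and the step I expect to require the most care, is the splitjunction case in the presence of ${\sim}$. Without contradictory negation, downward closure (see Figure~\ref{fig:structuralpropertiesoverview}) justifies restricting the enumeration in line 5 of Algorithm~\ref{algo_pathchecking} to disjoint splits~$\teamup' \uplus (\teamup \setminus \teamup') = \teamup$. Since ${\sim}$ destroys downward closure, I would replace that line by an enumeration of all (not necessarily disjoint) covers~$\teamup_1 \cup \teamup_2 = \teamup$, guessing for every trace in $\teamup$ one of the three labels ``only in $\teamup_1$'', ``only in $\teamup_2$'', or ``in both''. Only one such cover is stored at any time, so the space cost remains polynomial.

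Finally, I would verify that the pumping argument underlying Remark~\ref{rem_pcprops} is unaffected: the equality~$\teamup[i,\infty) = \teamup[i + \lcm(\teamup),\infty)$ for $i \ge \prfx(\teamup)$ is a syntactic property of the encoding that does not depend on the formula, so the scope of $\U$ and $\R$ can still be bounded by $\prfx(\teamup) + \lcm(\teamup)$ even in the extended logic; together with the recursion depth bound by the formula size, this yields an overall $\PSPACE$ algorithm and matches the lower bound.
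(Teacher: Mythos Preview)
Your proposal is correct and follows essentially the same approach as the paper: the lower bound is inherited from Theorem~\ref{thm:tpcs}, and the upper bound extends Algorithm~\ref{algo_pathchecking} with recursive cases for $\varovee$ and ${\sim}$, replaces the disjoint split in the $\lor$-case by an enumeration of arbitrary covers~$\teamup_1\cup\teamup_2=\teamup$ (since downward closure fails), and handles each generalised atom by classical \LTL path checking on the parameters followed by fixed-formula FO model checking. Your explicit check that Remark~\ref{rem_pcprops} still applies is a point the paper leaves implicit but is indeed needed and correct.
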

\begin{proof}
The lower bound applies from Theorem~\ref{thm:tpcs}. 
For the upper bound, we extend the algorithm stated in the proof of Lemma~\ref{lem:TPCs_in_PSPACE} for the cases for the Boolean disjunction, contradictory negation, and FO-definable atoms.
The case for the Boolean disjunction is obtained by adding the following line to the recursive algorithm of the proof of Lemma~\ref{lem:TPCs_in_PSPACE} (where the latter $\varovee$ denotes the classical meta level disjunction):
  	\[\textbf{if } \varphi=\psi\lor\psi' \textbf{ then return } \texttt{chk} (\teamup,\psi) \varovee \texttt{chk}(\teamup,\psi')\]
The case for the contradictory negation is obtained by adding the following line to the recursive algorithm (where $\neg$ denotes the classical meta level negation):
  	\[ \textbf{if } \phi = \sim\!\phi' \textbf{then return } \neg\texttt{chk}(\teamup,\phi') \]
Note that, since the extension is not anymore a downward closed logic, we need to modify the case for the splitjunction to reflect this. Hence we use the case:
\[	
	{\textbf{if } \varphi=\psi\lor\psi' \textbf{ then return } \bigvarovee_{\teamup_1 \cup \teamup_2 = \teamup} \texttt{chk}(\teamup_1,\psi) \land \texttt{chk}(\teamup_2,\psi')}
	\]
Finally, whenever a first-order definable atom $\#$ appears in the computation of the algorithm, we need to solve an FO model checking problem and classical \LTL path checking problems.
As FO model checking is solvable in logarithmic space for any fixed formula~\cite{Immerman1998} and \LTL path checking can be done in NC (combined complexity)~\cite{kf09} the theorem follows.
\end{proof}

The next proposition translates a result from Hannula~et~al.~\cite{Hannula:2018:CPL:3176362.3157054} to our setting. 
They show completeness for $\ATIME(\exp,\pol)$ for the satisfiability problem of propositional team logic with contradictory negation.
This logic coincides with \TeamLTL-formulae without temporal operators. 

\begin{proposition}[\cite{Hannula:2018:CPL:3176362.3157054}]\label{prop:atime(exp,pol)}
	$\TSAT(\TeamLTL(\sim))$ for formulae without temporal operators is complete for the class $\ATIME(\exp,\pol)$ w.r.t.\ $\leqpm$-reductions.
\end{proposition}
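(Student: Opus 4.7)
My plan is to reduce the statement to the known completeness result of Hannula~et~al.~\cite{Hannula:2018:CPL:3176362.3157054} for propositional team logic with contradictory negation by establishing a tight semantic correspondence. The key observation is that an \LTL-formula without temporal operators (i.e., built from literals, $\land$, $\lor$, and $\sim$) only depends on the ``present'' position of each trace in a team. More precisely, an induction over the formula structure shows that for every such $\phi$ and every team $T \subseteq (\pow{\ap})^\omega$, satisfaction $T \models \phi$ only depends on the set~$T(0) \dfn \set{\,t(0) \mid t \in T\,} \subseteq \pow{\ap}$; temporal shifts never occur, so no later position is ever consulted.

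Next, I would set up the two directions of the reduction. For a temporal-operator-free $\phi$, I will identify each propositional assignment $s \in \pow{\ap}$ with the constant trace $s^\omega$. Given a team~$X \subseteq \pow{\ap}$ of propositional assignments, define $T_X \dfn \set{\, s^\omega \mid s \in X\,}$; conversely, given a trace team~$T$, consider $X_T \dfn T(0)$. A straightforward induction over $\phi$ (handling literals, $\land$, splitjunction~$\lor$, Boolean disjunction~$\varovee$, and contradictory negation~$\sim$) shows that $X \models_{\mathrm{PTL}} \phi \iff T_X \models \phi$ and, dually, $T \models \phi \iff X_T \models_{\mathrm{PTL}} \phi$. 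In particular, there exists a non-empty trace team satisfying $\phi$ if and only if there exists a non-empty propositional team satisfying $\phi$, i.e., $\phi$ is satisfiable in team semantics of propositional team logic (with $\sim$) exactly when $\phi \in \TSAT(\LTL(\sim))$. Since the syntactic objects on both sides are literally the same (propositional formulae with $\land, \lor, \varovee, \sim$), the translations are the identity and trivially computable in polynomial time.

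With this correspondence in place, the upper bound transfers: Hannula~et~al.'s $\ATIME(\exp,\pol)$ algorithm for propositional team logic satisfiability decides $\TSAT(\LTL(\sim))$ on temporal-operator-free input via the identity reduction. The matching lower bound is obtained in the same way, since their $\ATIME(\exp,\pol)$-hardness reduction produces propositional team logic formulae, which are syntactically already temporal-operator-free \LTL-formulae under our semantics, so the reduction immediately witnesses $\ATIME(\exp,\pol)$-hardness of $\TSAT(\LTL(\sim))$ on this fragment under $\leqpm$.

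The only non-trivial step is the inductive proof that team satisfaction on traces ``ignores'' positions beyond~$0$ for temporal-operator-free formulae; I expect this to be routine but it is the one place where care is needed, in particular for the splitjunction case, where one must verify that splits of $T$ correspond bijectively (modulo the possibility of having multiple traces with the same value at position $0$) to splits of $T(0)$. Downward closure is not available in the presence of $\sim$, so the bijective correspondence between trace teams of the form $T_X$ and propositional teams~$X$ is essential for carrying the splitjunction clause through.
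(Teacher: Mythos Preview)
Your proposal is correct and follows the same approach as the paper, which simply observes (without a written-out proof) that propositional team logic with contradictory negation ``coincides with \LTL-formulae without temporal operators with team semantics'' and then cites the Hannula~et~al.\ result directly. Your write-up just spells out this coincidence in detail; the only minor point is that $\varovee$ is not part of the syntax of $\LTL(\sim)$, so that case in your induction is superfluous (though harmless, since $\varovee$ is definable from $\sim$ and $\land$).
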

The result from the previous proposition will be utilised in the proof of the next theorem.
It shows that already a very simple fragment of $\TeamLTL(\sim)$ has an $\ATIME(\exp,\pol)$-hard model checking problem. L\"uck has established that the general problem is complete for third-order arithmetic~\cite{DBLP:journals/tcs/Luck20}.

\begin{theorem}\label{thm:tmca-tmcs-negation}
$\TMC(\TeamLTL(\sim))$ is $\ATIME(\exp,\pol)$-hard w.r.t.\ $\leqpm$-reductions. 
\end{theorem}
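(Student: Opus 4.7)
The plan is to reduce $\TSAT(\LTL(\sim))$ restricted to formulae without temporal operators (which is $\ATIME(\expo,\pol)$-hard by Proposition~\ref{prop:atime(exp,pol)}) to $\TMC(\LTL(\sim))$ in polynomial time. Given a temporal-operator-free input formula $\varphi$ over propositions $p_1,\dots,p_n$, I would build a Kripke structure $\kripke_n$ of size $O(n)$ whose trace set encodes all $2^n$ propositional assignments together with an $\LTL(\sim)$-formula $\psi$ that expresses the existence of a non-empty sub-team corresponding to a satisfying team of $\varphi$.

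For $\kripke_n$, I chain $n$ binary-choice gadgets: from an unlabeled initial state $w_I$, the structure branches to $s_1^0$ (unlabeled) or $s_1^1$ (labeled $\{p_1\}$); from $s_1^b$ to $s_2^0$ or $s_2^1$; and so on, ending with a self-looping unlabeled sink. Then $T(\kripke_n)$ consists of exactly $2^n$ traces $t_\alpha$, one per assignment $\alpha\colon\{p_1,\dots,p_n\}\to\{0,1\}$, with the property that $p_i\in t_\alpha(i)$ iff $\alpha(p_i)=1$, and $p_j\notin t_\alpha(i)$ for $j\ne i$. I then set
\[
\psi \;\dfn\; (\hat\varphi \land {\sim}\bot)\lor\top,
\]
where $\hat\varphi$ is obtained from $\varphi$ by replacing every literal $p_i$ (resp.\ $\lnot p_i$) with $\X^i p_i$ (resp.\ $\X^i\lnot p_i$). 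Both $\kripke_n$ and $\psi$ are clearly constructible in polynomial time.

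Correctness rests on two observations. First, $T\models\bot$ (for $\bot\equiv p\land\lnot p$) iff $T=\emptyset$, so $T\models{\sim}\bot$ expresses non-emptiness. Consequently $T(\kripke_n)\models\psi$ iff $T(\kripke_n)$ can be split as $T_1\cup T_2$ with $T_1\ne\emptyset$ and $T_1\models\hat\varphi$ (the other side is satisfied trivially by $\top$, e.g.\ by taking $T_2=T(\kripke_n)$). Second, the bijection $\alpha\leftrightarrow t_\alpha$ between assignments and traces in $T(\kripke_n)$ transports the propositional team semantics of $\varphi$ to the $\LTL(\sim)$ team semantics of $\hat\varphi$: by construction, $T_1\models\X^i p_i$ iff every $t_\alpha\in T_1$ has $\alpha(p_i)=1$, matching exactly the propositional team-semantic condition $T\models p_i$, and analogously for $\lnot p_i$. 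An induction over $\varphi$ (with the splitjunction, conjunction, and contradictory negation cases handled uniformly, as the bijection maps subteams of $T(\kripke_n)$ onto sets of assignments and commutes with union and complementation) then yields $T_1\models\hat\varphi$ iff the corresponding team of assignments satisfies $\varphi$. Combining both observations, $T(\kripke_n)\models\psi$ iff $\varphi$ is team-satisfiable by a non-empty team, as required.

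The main subtlety will be the inductive argument that the bijection faithfully transports team satisfaction through $\sim$: since $\sim$ flips truth on the full sub-team and our bijection is exact on subteams of $T(\kripke_n)$, the flip lifts correctly; however, this relies crucially on the fact that every team of assignments appears as (the image of) some $T_1\subseteq T(\kripke_n)$, which is why $\kripke_n$ must enumerate all $2^n$ assignments rather than a proper subset. Apart from this point, the remaining cases of the induction and the length/state-count bounds are routine.
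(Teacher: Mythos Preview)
Your proposal is correct and takes essentially the same approach as the paper: both reduce from propositional team satisfiability with $\sim$, build a Kripke structure whose traces enumerate all $2^n$ assignments, and wrap the translated formula in a $({\sim}\bot \land \cdots)\lor\top$ pattern to express existence of a non-empty satisfying subteam. The only cosmetic difference is that the paper encodes literals via $\F p_i$ and $\F\overline{p_i}$ (using fresh propositions $\overline{p_i}$ so that each symbol occurs at exactly one layer), whereas you use $\X^i p_i$ and $\X^i\neg p_i$ to address the relevant layer directly; both encodings yield polynomial-size formulae and the correctness argument is identical.
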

\begin{proof}
We will state a reduction from the satisfiability problem of propositional team logic with the contradictory negation $\sim$ (short $\PL(\sim)$). The stated hardness then follows from Proposition~\ref{prop:atime(exp,pol)}.

\begin{figure}
	\Description[A Kripke structure]{A Kripke structure that starts in the world r and then branches to two successor worlds. Intuitively, there are n column layers of two worlds which are connected to the successor layer in the way that every world is connected to every other in the next layer. The top worlds are called bi and the bottom worlds are called ai. The bi worlds have labeled propositions bar pi and the ai worlds have pi propositions labeled.}
\centering
	\begin{tikzpicture}[x=1.5cm,y=.5cm,thick]
		\node at (-.5,0) {$ \kripke_P$:};
		\node[dw] (root) at (0,0) {$r$};
		\node[d,label={270:$p_1$}] (a1) at (1,-1) {$a_1$};
		\node[d,label={90:$\overline{p_1}$}] (b1) at (1,1) {$b_1$};
		\node[d,label={270:$p_2$}] (a2) at (2,-1) {$a_2$};
		\node[d,label={90:$\overline{p_2}$}] (b2) at (2,1) {$b_2$};
		\node[d,label={270:$p_n$}] (an) at (3,-1) {$a_n$};
		\node[d,label={90:$\overline{p_n}$}] (bn) at (3,1) {$b_n$};
		
		\foreach \f/\t in {root/a1,root/b1,a1/b2,a1/a2,b1/b2,b1/a2}{
			\path[-stealth',draw] (\f) edge (\t);
		}
		
		\path[-stealth',draw,dotted] (a2) edge (bn);
		\path[-stealth',draw,dotted] (a2) edge (an);
		\path[-stealth',draw,dotted] (b2) edge (an);
		\path[-stealth',draw,dotted] (b2) edge (bn);
		
		\path[-, draw] (an) edge[loop right,->, >=stealth'] (an);
		\path[-, draw] (bn) edge[loop right,->, >=stealth'] (bn);

	\end{tikzpicture}
\caption{Kripke structure for the proof of Theorem~\ref{thm:tmca-tmcs-negation}.}\label{fig:kripke-tmca-tmcs-negation}
\end{figure}
For $P=\{p_1,\dots, p_n\}$, consider the tra\-ces starting from the root $r$ of the Kripke structure $\kripke_P$ depicted in Figure~\ref{fig:kripke-tmca-tmcs-negation} using proposition symbols $p_1,\dots, p_n,\overline{p_1},\dots, \overline{p_n}$.
Each trace in the model corresponds to a propositional assignment on $P$.
For $\varphi \in \PL{(\sim)}$, let $\varphi^*$ denote the $\TeamLTL(\sim)$-formula obtained by simultaneously replacing each (non-ne\-gated) variable $p_i$ by $\F p_i$ and each negated variable $\lnot p_i$ by $\F\overline{p_i}$. 
Let $P$ denote the set of variables that occur in $\varphi$. 
Recall that we defined $\top\coloneqq (p \lor \neg p)$ and $\bot\coloneqq p \land \neg p$, then $T(\kripke_P)\models \bigl(\top \lor ((\sim\!\!\bot) \land \varphi^*)\bigr)$ \emph{if and only if} $T' \models \varphi^*$ for some non-empty $T'\subseteq T(\kripke_P)$. 
It is easy to check that $T'\models\varphi^*$ \emph{if and only if} the propositional team  corresponding to $T'$ satisfies $\varphi$ and thus the above holds if and only if $\varphi$ is satisfiable.
\end{proof}

In the following, problems of the form $\TSAT(\TeamLTL(\dep))$, etc., refer to the corresponding problem for the extension of $\TeamLTL$ with dependence atoms. 
The following proposition follows from the corresponding result for classical \LTL using downward closure and the fact that on singleton teams dependence atoms are trivially fulfilled.
\begin{proposition}\label{thm:TSATa-TSATs-dep}
$\TSAT(\TeamLTL(\dep))$ is $\PSPACE$-complete w.r.t.\ $\leqpm$-reductions.
\end{proposition}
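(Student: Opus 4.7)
The first claim for plain \LTL is essentially a restatement of Proposition~\ref{prop:TMCa-TSATa-PSPACE} and follows by combining downward closure and singleton equivalence (recorded in Figure~\ref{fig:structuralpropertiesoverview}): a formula~$\phi$ has a non-empty team model iff it has a singleton team model~$\{t\}$, iff $t \ltlmodels \phi$ under classical \LTL semantics, so \TSAT reduces in both directions to classical \LTL satisfiability, which is $\PSPACE$-complete by standard results. I would simply cite this and move on to the dependence-atom case.

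For $\TSAT(\LTL(\dep))$, the $\PSPACE$ lower bound is immediate, since \LTL is syntactically a fragment of $\LTL(\dep)$ and $\TSAT$ is already hard at the fragment. The only real content is the matching upper bound. The plan is to show two structural properties of $\LTL(\dep)$ by a routine induction on formulae: (i) downward closure still holds in the presence of dependence atoms, because the atom $\dep(p_1,\dots,p_n;q_1,\dots,q_m)$ is a universally quantified condition over pairs of traces in the team, which is preserved under taking subsets; and (ii) on any singleton team~$\{t\}$ every dependence atom is trivially satisfied, since the universal quantification ranges only over the pair $(t,t)$.

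Combining (i) and (ii), a formula~$\phi \in \LTL(\dep)$ is satisfiable iff it has a singleton model~$\{t\}$, iff the formula~$\phi^\star$ obtained from $\phi$ by replacing every dependence atom by $\top$ has the singleton model~$\{t\}$. Another straightforward induction (essentially singleton equivalence extended to $\LTL(\dep)$) shows that $\{t\} \models \phi^\star$ iff $t \ltlmodels \phi^\star$ under classical \LTL semantics. Since $\phi \mapsto \phi^\star$ is a polynomial-time transformation into \LTL, this yields a $\leqpm$-reduction from $\TSAT(\LTL(\dep))$ to classical \LTL satisfiability, establishing the $\PSPACE$ upper bound.

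I do not anticipate any real obstacle; the only point that deserves care is verifying downward closure inductively through the temporal operators~$\U$ and $\R$ (using the fact that the witness index~$k$ and the witnessing splits from the inductive hypotheses transfer unchanged to subteams), and checking that the syntactic replacement $\phi \mapsto \phi^\star$ is truth-preserving on singleton teams at every subformula.
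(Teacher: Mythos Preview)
Your proposal is correct and follows exactly the approach the paper indicates: the paper's entire justification is the sentence ``follows from the corresponding result for classical \LTL using downward closure and the fact that on singleton teams dependence atoms are trivially fulfilled,'' and your argument makes precisely these two observations explicit (with the added cosmetic step of the $\phi \mapsto \phi^\star$ translation, which is just one concrete way to phrase the reduction to classical \LTL satisfiability).
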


The following result from Virtema talks about the validity problem of propositional team logic.
\begin{proposition}[\cite{DBLP:journals/iandc/Virtema17}]\label{prop:val-nexptime}
	Validity of propositional logic with dependence atoms is $\NEXPTIME$-complete w.r.t.\ $\leqpm$-reductions.
\end{proposition}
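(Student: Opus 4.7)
The plan is to prove both directions: $\NEXPTIME$ membership and $\NEXPTIME$-hardness of the validity problem for propositional logic extended with dependence atoms.

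For the upper bound, the key structural observation is that propositional logic with dependence atoms enjoys downward closure (which is inherited from the treatment of atoms and connectives via team semantics, exactly as for \TeamLTL in Figure~\ref{fig:structuralpropertiesoverview}, restricted to the propositional fragment without temporal operators). Hence a formula $\phi$ is valid if and only if the single maximal team~$T_{\max}$ consisting of all $2^{|\Vars{\phi}|}$ propositional assignments on $\Vars{\phi}$ satisfies $\phi$: every other team is a subteam of $T_{\max}$, and downward closure propagates satisfaction. This reduces validity to a single model checking instance over a team of exponential size. The plan is then to show that this instance is solvable in $\NEXPTIME$: non-deterministically guess, for every occurrence of a splitjunction in $\phi$, a partition of the current subteam of $T_{\max}$ into two parts, yielding a labelling of the syntax tree of $\phi$ by at most exponentially many subteams of $T_{\max}$, each of size at most $2^{|\Vars{\phi}|}$. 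Once all splits are fixed, the remaining literals, conjunctions, and dependence atoms can be verified in time polynomial in the size of $T_{\max}$, that is, in exponential time in $|\phi|$.

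For the lower bound, the plan is to reduce from a canonical $\NEXPTIME$-complete problem; a natural choice is the validity (or equivalently, via complementation structure, satisfiability of the dual) of dependency quantified Boolean formulae, or succinct $\threeSAT$. The idea is that the team acts as a succinct representation of an exponential object (e.g.\ the configuration table of a nondeterministic exponential-time Turing machine, or the set of assignments witnessing a DQBF), with each trace-like assignment encoding one row/cell index, and with dependence atoms of the form $\dep(\overline{x};\overline{y})$ forcing the desired functional dependencies between input and output bits that correspond to the quantifier dependency structure of the $\NEXPTIME$ problem. Propositional connectives then encode the local consistency constraints (e.g.\ transition compatibility of adjacent cells or clause satisfaction), while the outermost splitjunctions distribute the verification across the exponentially many implicit \emph{positions} coded inside a single team.

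The main obstacle is the hardness reduction: one has to carefully set up the correspondence between a team on polynomially many propositional variables and an exponentially large computation, and show that dependence atoms faithfully capture the quantifier dependencies of the $\NEXPTIME$-hard source problem. In particular, one needs to ensure that the maximal team supplies the full space of indices (so that consistency constraints have to hold everywhere), while strategically chosen splitjunctions allow the formula to isolate precisely the sub-team on which a given dependency must hold. The full details of this encoding are carried out in \cite{DBLP:journals/iandc/Virtema17}; the membership side is a direct consequence of downward closure combined with the $\NP$ upper bound for model checking of $\PL(\dep)$ on explicitly given teams.
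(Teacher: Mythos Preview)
The paper does not prove this proposition; it is imported as a known result from \cite{DBLP:journals/iandc/Virtema17} and used only as a black box in the subsequent reduction (Theorem~\ref{thm:TMCa-TMCs-dep}). Your sketch therefore goes beyond what the paper itself does, but it is accurate in outline: the upper bound via downward closure, reduction to the single maximal team of size $2^{|\Vars{\phi}|}$, and the $\NP$ model checking of $\PL(\dep)$ on explicitly given teams is exactly the standard argument; for the lower bound you correctly defer the encoding details to the cited source. There is nothing in the present paper to compare your argument against.
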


We use this result to obtain a lower bound on the complexity of $\TMC(\TeamLTL(\dep))$.

\begin{theorem}\label{thm:TMCa-TMCs-dep}
$\TMC(\TeamLTL(\dep))$ is $\NEXPTIME$-hard w.r.t.\ $\leqpm$-reductions.
\end{theorem}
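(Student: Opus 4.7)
The plan is to reduce the validity problem for propositional dependence logic to $\TMC(\LTL(\dep))$ in polynomial time; combined with Proposition~\ref{prop:val-nexptime}, this yields the claimed $\NEXPTIME$-hardness.

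Let $\varphi$ be a formula of propositional dependence logic over variables $P = \{p_1,\dots,p_n\}$. I would reuse the Kripke structure $\kripke_P$ defined in the proof of Theorem~\ref{thm:tmca-tmcs-negation} (Figure~\ref{fig:kripke-tmca-tmcs-negation}). Its paths are in canonical bijection with the propositional assignments $s\colon P \to \{0,1\}$: the path associated with $s$ visits $a_i$ at position~$i$ if $s(p_i)=1$ and $b_i$ otherwise. In particular, for every trace $t\in T(\kripke_P)$ one has $t \ltlmodels \F p_i$ iff $s_t(p_i)=1$ (and $t\ltlmodels \F \overline{p_i}$ iff $s_t(p_i)=0$), where $s_t$ denotes the assignment associated with $t$.

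Next, I would define a translation $\phi\mapsto \phi^*$ from propositional dependence logic to $\LTL(\dep)$ by $p_i^*\coloneqq \F p_i$, $(\lnot p_i)^*\coloneqq \F \overline{p_i}$, commuting with $\land$ and $\lor$, and mapping a dependence atom $\dep(p_{i_1},\dots,p_{i_k};p_{j_1},\dots,p_{j_m})$ to $\dep(\F p_{i_1},\dots,\F p_{i_k};\F p_{j_1},\dots,\F p_{j_m})$, which is legitimate because $\dep$ is a generalised atom and therefore accepts LTL-formulae as parameters (cf.\ the discussion preceding Theorem~\ref{thm:GenAtoms}). Clearly the pair $(\kripke_P,\phi^*)$ is computable from $\phi$ in polynomial time.

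The core claim is that $\phi$ is valid if and only if $T(\kripke_P)\models \phi^*$. Since propositional dependence logic is downward closed, $\phi$ is valid iff the maximal team $T_P = 2^P$ satisfies~$\phi$. A structural induction on $\phi$ then establishes, for every $T\subseteq T(\kripke_P)$, the equivalence
\[
\{s_t \mid t\in T\} \models \phi \iff T \models \phi^*,
\]
and specialising to $T=T(\kripke_P)$ (whose associated assignments form all of $T_P$) yields the reduction. The splitjunction case goes through because the bijection $t\mapsto s_t$ carries splittings of $T(\kripke_P)$ to splittings of $T_P$ and vice versa.

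The main obstacle is the dependence-atom step of the induction: one has to verify that two traces $t,t'\in T(\kripke_P)$ agree on each $\F p_{i_\ell}$ precisely when the assignments $s_t,s_{t'}$ agree on each $p_{i_\ell}$. This is a direct consequence of the bijection above, but it is exactly here that the reduction relies on interpreting $\dep$ as a generalised atom that accepts LTL-subformulae as arguments; the restriction to dependence atoms over atomic propositions only (which inspect $t(0)$) would be useless on $\kripke_P$, since $t(0)=\eta(w_I)$ is constant across all traces.
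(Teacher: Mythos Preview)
Your proposal is correct and follows the paper's approach exactly: reuse $\kripke_P$ from Theorem~\ref{thm:tmca-tmcs-negation}, translate via $p_i\mapsto\F p_i$ and $\lnot p_i\mapsto\F\overline{p_i}$, and invoke downward closure so that validity of $\varphi$ reduces to checking the maximal team $T(\kripke_P)\models\varphi^*$. You are also right to flag that the argument needs $\dep$ to be read as a generalised atom admitting \LTL-formulae as parameters (so that $\dep(\F p_{i_1},\dots)$ is well-formed)---the paper's brief proof leaves this implicit, but it is the intended reading given the convention set up before Definition~\ref{def:FO-definable}.
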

\begin{proof}
The proof of this result uses the same construction idea as in the proof of Theorem~\ref{thm:tmca-tmcs-negation}, but this time from a different problem, namely, validity of propositional logic with dependence atoms which settles the lower bound by Proposition~\ref{prop:val-nexptime}.
Due to downward closure the validity of propositional formulae with dependence atoms boils down to model checking the maximal team in the propositional (and not in the trace) setting, which essentially is achieved by $T(\kripke)$, where $\kripke$ is the Kripke structure from the proof of Theorem~\ref{thm:tmca-tmcs-negation}.
\end{proof}

\section{TeamLTL vs.\ HyperLTL} 

 \TeamLTL expresses hyperproperties~\cite{DBLP:journals/jcs/ClarksonS10}, that is, sets of teams, or equivalently, sets of sets of traces. 
\hyltl~\cite{DBLP:conf/post/ClarksonFKMRS14}, which extends \LTL by trace quantification, is another logic expressing hyperproperties. For example, input determinism can be expressed as follows: every pair of traces that coincides on their input variables, also coincides on their output variables (this can be expressed in \TeamLTL by a dependence atom $\dep$ as sketched in Section \ref{sec:extensions}). This results in a powerful formalism that has vastly different properties than \LTL~\cite{DBLP:conf/stacs/Finkbeiner017}. After introducing syntax and semantics of \hyltl, we compare the expressive power of \TeamLTL and \hyltl.

The formulae of \hyltl are given by the grammar 
\[
\phi {} \ddfn  \exists \pi.\phi \mid \forall \pi.\phi \mid \psi, \quad\quad \psi {}  \ddfn {}  p_\pi \mid \neg \psi \mid \psi \lor \psi \mid \X \psi \mid \psi \U \psi,
\]
where $p$ ranges over atomic propositions in $\ap$ and where $\pi$ ranges over a given countable set~$\var$ of \emph{trace variables}. The other Boolean connectives and the temporal operators release~$\R$, eventually~$\F$, and always~$\G$ are derived as usual, due to closure under negation. A sentence is a closed formula, i.e., one without free trace variables.

The semantics of \hyltl is defined with respect to trace assignments, which are partial mappings~$\Pi \colon \var \rightarrow (\pow{\ap})^\omega$. The assignment with empty domain is denoted by $\Pi_\emptyset$. Given a trace assignment~$\Pi$, a trace variable~$\pi$, and a trace~$t$, denote by $\Pi[\pi \rightarrow t]$ the assignment that coincides with $\Pi$ everywhere but at $\pi$, which is mapped to $t$. 
Furthermore, $\suffix{\Pi}{i}$ denotes the assignment mapping every $\pi$ in $\Pi$'s domain to $\Pi(\pi)[i,\infty) $.
For teams~$T$ and trace-assignments~$\Pi$ we define the following:\footnote{Note that we use the same $\models$ symbol for \hyltl that we used for \TeamLTL. It is clear from the context which semantics is used.}
\begin{tabbing}
	$(T, \Pi) \hyltlmodels  \psi_1 \lor \psi_2 $ \= if \= Rechts \kill
	$(T, \Pi) \hyltlmodels p_\pi$\> if \> $p \in \Pi(\pi)(0)$,\\
	$(T, \Pi) \hyltlmodels  \neg \psi$\> if \> $(T, \Pi) \nhyltlmodels  \psi$,\\
	$(T, \Pi) \hyltlmodels  \psi_1 \lor \psi_2 $\> if \> $(T, \Pi) \hyltlmodels  \psi_1$ or $(T, \Pi) \hyltlmodels  \psi_2$,\\
	$(T, \Pi) \hyltlmodels  \X \psi$\> if \> $(T,\suffix{\Pi}{1}) \hyltlmodels  \psi$,\\
	$(T, \Pi) \hyltlmodels  \psi_1 \U \psi_2$\> if \> $\exists k \ge 0$ such that $ (T,\suffix{\Pi}{k}) \hyltlmodels  \psi_2$ and\\
	\>\>$\forall 0 \le k' < k$ we have that $(T,\suffix{\Pi}{k'}) \hyltlmodels  \psi_1$, \\
	$(T, \Pi) \hyltlmodels  \exists \pi.\psi$\> if \> $\exists t \in T$ such that $(T,\Pi[\pi \rightarrow t]) \hyltlmodels  \psi$, and \\
	$(T, \Pi) \hyltlmodels  \forall \pi.\psi$\> if \> $\forall t \in T$ we have that $(T,\Pi[\pi \rightarrow t]) \hyltlmodels  \psi$. 
\end{tabbing}
We say that $T$ satisfies a sentence~$\phi$ if $(T, \Pi_\emptyset) \hyltlmodels  \phi$, and write $T \hyltlmodels  \phi$.

The semantics of \hyltl is synchronous; this can be seen from how the until and next operators are defined. Hence, one could expect that \hyltl is closely related to \TeamLTL as defined here, which is synchronous as well. 
In the following, we refute this intuition.
In fact, \hyltl is closely related to the asynchronous variant of \TeamLTL~\cite{DBLP:conf/foiks/KontinenSV24}.

Formally, a \hyltl sentence~$\phi$ and a \TeamLTL-formula~$\phi'$ are equivalent if for all teams~$T$ we have that $T \hyltlmodels \varphi$ if and only if $T \models \phi'$.
  \begin{theorem}
  \label{theorem_hyltlvsteam}\hfill

  	\begin{enumerate}
  		\item\label{theorem_hyltlvsteam_teamweakerthanshyltl} No \TeamLTL-formula is equivalent to $\exists\pi.p_\pi$.
  		\item\label{theorem_hyltlvsteam_hyltlweakerthansynchrteam} No \hyltl sentence is equivalent to the \TeamLTL formula~$\F p$. 
  	\end{enumerate}
  \end{theorem}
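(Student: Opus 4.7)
The plan for (1) is immediate from the empty team property (see Figure~\ref{fig:structuralpropertiesoverview}): every \TeamLTL formula is satisfied by $\emptyset$, yet $\emptyset \not\models \exists\pi.p_\pi$ because there is no trace in the empty team to serve as a witness for the existential quantifier. Consequently, no \TeamLTL formula can be equivalent to $\exists\pi.p_\pi$.

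For (2), the plan is a contradiction argument. For each $i \in \nats$ define the trace $t_i \in (\pow{\ap})^\omega$ by $t_i(i) = \emptyset$ and $t_i(j) = \{p\}$ for $j \ne i$, so that $t_i$ carries $p$ everywhere except at position~$i$. For $M \in \nats$ set $T_M = \{t_0,\ldots,t_M\}$ and $T_\infty = \{t_i \mid i \in \nats\}$. A direct computation gives $T_M \models \F p$ via the witness $k = M+1$, since every $t_i$ with $i \le M$ has $p$ at position $M+1$, whereas $T_\infty \not\models \F p$ because for every candidate $k$ the trace $t_k \in T_\infty$ satisfies $p \notin t_k(k)$. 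Therefore, if a \hyltl sentence $\phi$ were equivalent to $\F p$ under team semantics, then $T_M \models \phi$ would hold for every $M \ge 0$ while $T_\infty \not\models \phi$.

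The crux of the argument is then to show that for every fixed \hyltl sentence $\phi$ one can pick $M$ large enough so that $T_M \models \phi$ if and only if $T_\infty \models \phi$, contradicting the above. The plan is an Ehrenfeucht--Fra\"iss\'e-style pebble game on \hyltl in which the duplicator maintains a partial matching between traces chosen from $T_\infty$ and from $T_M$ that preserves three invariants: the equality pattern of the chosen indices, their linear order, and the identity of any index that lies below the size~$s$ of $\phi$. Choosing $M$ large enough compared to $s$ and to the quantifier rank of $\phi$ ensures the duplicator always has room to respond, no matter the alternation of the prenex quantifier prefix. Given such a matching, a routine induction on the quantifier-free kernel $\psi$ of $\phi$ shows that $\psi$ evaluates identically on matched $q$-tuples of traces: each such tuple induces an $\omega$-word over $\{0,1\}^q$ whose $0$-entries occur exactly at the chosen indices, and the preserved pattern is designed to guarantee \LTL-indistinguishability of the two induced $\omega$-words at depth~$s$. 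The main obstacle is precisely this \LTL-indistinguishability claim, since the temporal operators $\U$ and $\R$ have unbounded lookahead; handling this requires a standard rank-indexed game argument on the induced $\omega$-words, but the quantitative bookkeeping between the size of $\phi$ and the required~$M$ (and in particular ensuring enough gap room to interleave additional existential responses) is the technically delicate part.
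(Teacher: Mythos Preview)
Your argument for part~(1) is correct and in fact slightly slicker than the paper's. The paper instead uses downward closure: it exhibits the two-element team $T = \{\emptyset^\omega, \{p\}\emptyset^\omega\}$, which satisfies $\exists\pi.p_\pi$; passes to the subteam $\{\emptyset^\omega\}$ via downward closure; and derives the contradiction there. Both arguments exploit a structural closure property of \TeamLTL that $\exists\pi.p_\pi$ violates, and both extend to $\LTL(\dep)$ as the paper later remarks; yours just picks the smaller witness.

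For part~(2), the paper does not give a proof at all---it simply cites the result of Bozzelli, Maubert and Pinchinat that this property (bounded termination) is not \hyltl-definable. Your direct EF-style argument on the family $T_M$ versus $T_\infty$ is the right shape and is essentially how the cited result is obtained. However, the duplicator invariant you state is not sufficient as written: preserving the equality pattern, the linear order, and the exact identity of indices below $s$ does not guarantee \LTL-indistinguishability of the induced $\omega$-words. For example, with $s=5$ your invariant would permit matching the index pair $(100,103)$ to $(100,200)$, yet the formula $\F(\neg p_{\pi_1} \wedge \X\X\X\, \neg p_{\pi_2})$ distinguishes these two tuples. The missing ingredient is that pairwise \emph{gaps} between chosen indices must also be preserved up to the threshold~$s$ (gaps at most $s$ matched exactly, larger gaps matched to larger gaps). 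You clearly sense this---you flag the quantitative bookkeeping as the delicate part---so this is an incompletely specified invariant rather than a wrong approach, but the correction is essential for the induction on the quantifier-free kernel to go through.
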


  \begin{proof}
	(\ref{theorem_hyltlvsteam_teamweakerthanshyltl}) 
	Consider $T = \set{\emptyset^\omega, \set{p} \emptyset^\omega}$. 
	We have $T \hyltlmodels  \exists\pi.p_\pi$. 
	Assume there is an equivalent \TeamLTL-formula, call it $\phi$. 
	Then, $T \models \phi$ and thus $\set{\emptyset^\omega} \models \phi$ by downward closure. 
	Hence, by equivalence, $\set{\emptyset^\omega} \hyltlmodels  \exists\pi.p_\pi$, yielding a contradiction.

  (\ref{theorem_hyltlvsteam_hyltlweakerthansynchrteam}) Bozzelli et al.\ proved that the property encoded by the \TeamLTL formula~$\F p$ cannot be expressed in \hyltl~\cite{BozzelliMP15}.
    \end{proof}

Note that these separations are obtained by very simple formulae.
Furthermore, the same separation hold for $\TeamLTL(\dep)$, using the same arguments. 
   
  \begin{corollary}
  \hyltl and \TeamLTL are of incomparable expressiveness.
  \end{corollary}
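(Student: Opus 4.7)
The plan is to derive the corollary immediately from Theorem~\ref{theorem_hyltlvsteam}. Incomparability of expressiveness means that neither logic subsumes the other, so two witness properties suffice: one expressible in \hyltl but not in \TeamLTL, and one expressible in \TeamLTL but not in \hyltl. Both witnesses are already furnished by the theorem.

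For the first direction, the sentence $\exists \pi.\, p_\pi$ is a \hyltl sentence defining the hyperproperty ``some trace in the team satisfies $p$ initially''. By part~(\ref{theorem_hyltlvsteam_teamweakerthanshyltl}) of Theorem~\ref{theorem_hyltlvsteam}, no \TeamLTL-formula defines the same class of teams. Hence \hyltl is not subsumed by \TeamLTL in terms of expressive power over teams.

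For the second direction, the \TeamLTL-formula $\F p$ defines the hyperproperty ``there is a uniform bound $k$ such that every trace in the team has $p$ at position $k$'', i.e., synchronous bounded eventuality. By part~(\ref{theorem_hyltlvsteam_hyltlweakerthansynchrteam}) of Theorem~\ref{theorem_hyltlvsteam}, no \hyltl sentence captures this property. Hence \TeamLTL is not subsumed by \hyltl either.

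Combining both directions, the two logics define incomparable classes of hyperproperties over teams, which is exactly the claim of the corollary. There is no real obstacle here beyond invoking the theorem; the only subtlety is to emphasise that both statements of Theorem~\ref{theorem_hyltlvsteam} must be used simultaneously, and that the comparison is made with respect to hyperproperties of teams (so that it is meaningful to compare sentences of \hyltl with formulae of \TeamLTL via the common notion of equivalence defined just before the theorem).
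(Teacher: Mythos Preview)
Your proposal is correct and takes essentially the same approach as the paper: the corollary is immediate from the two parts of Theorem~\ref{theorem_hyltlvsteam}, and the paper does not spell out any further argument beyond stating it as a corollary.
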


\section{Team Semantics for CTL}

After having presented team semantics for \LTL, arguably the most important specification language for linear time properties, we now turn our attention to branching time properties.
Here, we present team semantics for \CTL. 

Recall that in the \LTL setting teams are sets of traces, as \LTL is evaluated on a single trace. 
\CTL is evaluated on a vertex of a Kripke structure. 
Consequently, teams in the \CTL setting are (multi-)sets of vertices. 
For the complexity results of this paper, it does not really matter whether team semantics for \CTL is defined with respect to multi-sets or plain sets.
However, we have chosen multi-sets since the subsequent logic will be conceptually simpler than a variant based on sets. 
See the work of \citeauthor{DBLP:conf/mfcs/KontinenSV23}~\cite{DBLP:conf/mfcs/KontinenSV23} for a development of asynchronous \TeamLTL under set-based semantics. The challenges there to define the right set-based semantics for asynchronous \TeamLTL are analogous to what would rise from the use of path quantifiers in \CTL.

We will reuse some notions from \LTL, e.g., the countably infinite set of propositions is $\ap$. 
The set of all \CTL-formulae is defined inductively via the following grammar:
$$
\varphi\Coloneqq
p\mid 
\lnot p\mid 
\varphi\land\varphi\mid
\varphi\lor\varphi\mid
\logicOpFont{P}\X\varphi\mid 
\logicOpFont{P}[\varphi\U\varphi]\mid 
\logicOpFont{P}[\varphi\R\varphi],
$$
where $\logicOpFont{P}\in\{\A,\E\}$ and $p\in\PROP$. 
We define the following usual shorthands: 
$\top\coloneqq p\vee \neg p$, 
$\bot\coloneqq p\wedge \neg p$, 
$\F\varphi \coloneqq [\top\U\varphi]$, and 
$\G \varphi \coloneqq [\bot\R\varphi]$. 
Note that the formulae are in negation normal form (NNF) which is the convention in team semantics. 
In the classical setting this is not a severe restriction as transforming a given formula into its NNF requires linear time in the input length.
In team semantics, the issue is more involved as here $\neg \phi$ does not have an agreed compositional semantics and the contradictory negation often increases the complexity and expressiveness of a logic considerably as was seen in Section \ref{sec:extensions}.

We will again use Kripke structures as previously defined for \LTL. 
When the initial world is not important, we will omit it from the tuple definition. 
By $\Pi(w)$, for a world $w$ in a Kripke structure~$\kripke$, we denote the (possibly infinite) set of all paths $\pi$ for which $\pi(0)=w$.
For a set $V\subseteq W$, we define $\Pi(V)\dfn \bigcup_{w\in V}\Pi(w)$.

\begin{definition}
Let $\kripke=(W,R,\eta)$ be a Kripke structure and let $w\in W$ a world. 
The satisfaction relation $\ctlmodels$ for \CTL is defined as follows:

\begin{tabbing}
$\kripke,w\ctlmodels\logicOpFont{A}[\varphi\U\psi]$  \= if \= $p\in\eta(w$),\kill
 $\kripke,w\ctlmodels p$ \> if \> $p\in\eta(w$),\\
 $\kripke,w\ctlmodels \lnot p$ \> if \> $p\notin\eta(w)$,\\
 $\kripke,w\ctlmodels \varphi\land\psi$ \> if \> $\kripke,w\ctlmodels\varphi$  and $\kripke,w\ctlmodels\psi$,\\
 $\kripke,w\ctlmodels \varphi\lor\psi$ \> if \> $\kripke,w\ctlmodels\varphi$ or $\kripke,w\ctlmodels\psi$,\\
 $\kripke,w\ctlmodels\logicOpFont{E}\X\varphi$ \> if \> $\exists\pi\in\Pi(w)$ such that $\kripke,\pi(1)\ctlmodels\varphi$,\\
 $\kripke,w\ctlmodels\logicOpFont{E}[\varphi\U\psi]$ \> if \> $\exists\pi\in\Pi(w)\exists k\in\N$ s.t.\ $\kripke,\pi(k)\ctlmodels\psi$ and $\forall  k'<k$ we have that $\kripke,\pi(k')\ctlmodels\varphi$,\\
  $\kripke,w\ctlmodels\logicOpFont{E}[\varphi\R\psi]$ \> if \> $\exists\pi\in\Pi(w)\forall k\in\N$ we have that $\kripke,\pi(k)\ctlmodels\psi$ or $\exists  k'<k$ s.t.\ $\kripke,\pi(k')\ctlmodels\varphi$,\\
 $\kripke,w\ctlmodels\logicOpFont{A}\X\varphi$ \> if \> $\forall\pi\in\Pi(w)$ we have that $\kripke,\pi(1)\ctlmodels\varphi$,\\
 $\kripke,w\ctlmodels\logicOpFont{A}[\varphi\U\psi]$ \> if \> $\forall\pi\in\Pi(w)\exists k\in\N$ s.t.\ $\kripke,\pi(k)\ctlmodels\psi$ and $\forall  k'<k$ we have that $\kripke,\pi(k')\ctlmodels\varphi$,\\
 $\kripke,w\ctlmodels\logicOpFont{A}[\varphi\R\psi]$ \> if \> $\forall\pi\in\Pi(w)\forall k\in\N$ we have that $\kripke,\pi(k)\ctlmodels\psi$ or $\exists  k'<k$ s.t.\ $\kripke,\pi(k')\ctlmodels\varphi$.
\end{tabbing}
\end{definition}

Next, we will introduce team semantics for \CTL based on multisets, obtaining the logic \TeamCTL.
Notice that, for \TeamLTL, a team is a set of traces while here it is a multiset of worlds mimicking the behaviour of satisfaction in vanilla \CTL.
A \emph{multiset} is a generalisation of the concept of a set that allows for having multiple instances of the same element. 
Technically, we define that a multiset is a functional set of pairs $(i,x)$, where $i$ is an element of some sufficiently large set $\Ind$ of indices, and $x$ is an element whose multiplicities we encode in the multiset.
The \emph{support} of a multiset $M$ is the image set $M[\Ind]\dfn \{x \mid (i,x)\in M,  i\in\Ind\}$, while the \emph{multiplicity} of an element $x$ in a multiset $M$ is the cardinality of the pre-image $M^{-1}[\{x\}] \dfn\{ i\in \Ind \mid (i,x) \in M \}$.
We often omit the index, and write simply $x$ instead of $(i,x)$ to simplify the notation.
We also write $\mset{\dots}$ to denote a multiset with its elements inside the curly brackets omitting the indices. 
For multisets $A$ and $B$, we write $A\uplus B\dfn \{((i,0),x) \mid (i,x)\in A \} \cup \{((i,1),x) \mid (i,x)\in B \}$ to denote the disjoint union of $A$ and $B$. 
Note that $A\cup B$ denotes the standard union of the sets $A$ and $B$, which is not always a (functional) multiset.
In the following, we introduce some notation for multisets.

\begin{definition}
	Let $A,B$ be two multisets over some index set $\Ind$.
	We write $A\meq B$ if there exists a permutation $\pi\colon\Ind\to\Ind$ such that $A=\{\,(\pi(i),x)\mid (i,x)\in B\,\}$. 
	Likewise, we write $A\msube B$ if there exists a permutation $\pi\colon\Ind\to\Ind$ such that $A\subseteq\{\,(\pi(i),x)\mid (i,x)\in B\,\}$. 
\end{definition}
Now, we turn to the definition of teams in the setting of \CTL. 
Note that this notion is different from the notion of a team in the context of \LTL, where a team is a possibly infinite set of traces.

\begin{definition}
Let $\kripke=(W,R,\eta)$ be a Kripke structure. 
Any multiset $T$ where $T[\Ind]\subseteq W$ is called a \emph{(multi)team of $\kripke$}.
\end{definition}

By abuse of notation, if the indices are irrelevant, we will write $w\in T$ instead of $(i,w)\in T$.

Let $\kripke=(W,R,\eta)$ be a Kripke structure and $T$ be a team of $\kripke$. 
If $f\colon T \to \Pi(W)$ is a function that maps elements of the team to paths in $W$ such that $f((i,x))\in\Pi(x)$ for all $(i,x)\in T$ then we call $f$ \emph{$T$-compatible}, and define $T[f]$ to be the following multiset of traces
\[
T[f]\coloneqq \{\, (i,f((i,x))) \mid (i,x)\in T\, \}.
\]
Let $n\in\nats$.
We define $T[f,n]$ to be the team of $\kripke$
\[
	T[f,n]\coloneqq\{\,(i,t(n))\mid (i,t)\in T[f]\,\},
\]
that is, the multiset of worlds reached by synchronously proceeding to the $n$-th element of each trace. 

The syntax of \TeamCTL is the same as for \CTL above.
Next, we define (synchronous) team semantics for \CTL.

\begin{definition}\label{def:CTLsemantics}
Let $\kripke=(W,R,\eta)$ be a Kripke structure, $T$ be a team of $\kripke$, and $\varphi,\psi$ be \CTL-formulae. 
The  satisfaction relation $\models$ for \TeamCTL is defined as follows.
\begin{tabbing}
$\kripke,T \models\logicOpFont{A}[\varphi\R\psi]$ \= if \=$\forall\, \compat{T}\, f \, \forall k\in\N$ we have that $\kripke,T [f,k]\models\psi$ or \kill
$\kripke,T \models p$ \>  if \> $\forall w\in T$: $p\in\eta(w)$.\\
$\kripke,T \models \lnot p$  \> if \> $\forall w\in T$: $p\notin\eta(w)$.\\
$\kripke,T \models \varphi\land\psi$ \> if \> $\kripke,T \models\varphi$ and $\kripke,T \models\psi$.\\
$\kripke,T \models \varphi\lor\psi$ \> if \> $\exists T_1 \cup T_2 = T$ s.t.\ $T_1 \cap T_2 = \emptyset,\, \kripke,T _1\models\varphi$ and $\kripke,T _2\models\psi$.\\
$\kripke,T \models\logicOpFont{E}\X\varphi$ \> if \> $\exists\, \compat{T}\, f$ s.t.\ $\kripke,T [f,1]\models\varphi$.\\
$\kripke,T \models\logicOpFont{E}[\varphi\U\psi]$ \> if \> $\exists\, \compat{T}\, f \, \exists k\in\N$ s.t.\ $\kripke,T [f,k]\models\psi$ and $\forall k'<k: \kripke, T[f,k']\models\varphi$\\
$\kripke,T \models\logicOpFont{E}[\varphi\R\psi]$ \> if \>$\exists\, \compat{T}\, f \, \forall k\in\N$: $\kripke,T [f,k]\models\psi$ or $\exists k'<k: \kripke, T[f,k']\models\varphi$\\
$\kripke,T \models\logicOpFont{A}\X\varphi$ \> if \> $\forall\, \compat{T}\, f$: $\kripke,T [f,1]\models\varphi$.\\
$\kripke,T \models\logicOpFont{A}[\varphi\U\psi]$ \> if \>$\forall\, \compat{T}\, f \, \exists k\in\N$ s.t.\ $\kripke,T [f,k]\models\psi$ and $\forall k'<k$: $\kripke, T[f,k']\models\varphi$.\\
$\kripke,T \models\logicOpFont{A}[\varphi\R\psi]$ \> if \>$\forall\, \compat{T}\, f \, \forall k\in\N$: $\kripke,T [f,k]\models\psi$ or $\exists k'<k$ s.t.\ $\kripke, T[f,k']\models\varphi$.
\end{tabbing}
\end{definition}

\paragraph*{Basic Properties}\label{par:properties}

In the following, we investigate several fundamental properties of the satisfaction relation. 
It is immediate that for every $\CTL$-formula $\varphi$ and multiteams $T$ and $S$ such that $T\meq S$, it holds that $\kripke,T \models \varphi$ if and only if $\kripke,S \models \varphi$.
Notice that the following properties also hold for \TeamLTL as defined in Section~\ref{sec:TeamLTL} on page~\pageref{fig:structuralpropertiesoverview}.
Observe that $\kripke, T \models \bot$ if and only if $T=\emptyset$. 
The proof of the following proposition is trivial.

\begin{proposition}[Empty team property]
For every Kripke structure~$\kripke$ we have that $\kripke, \emptyset \models \varphi$ holds for every \CTL-formula $\varphi$.
\end{proposition}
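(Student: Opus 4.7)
The plan is to prove the empty team property by straightforward structural induction on $\varphi$, exploiting the fact that whenever $T = \emptyset$, every universal quantification over elements of $T$ (or over $T$-compatible functions, path-indexed traces, etc.) is vacuously satisfied, and every existential quantification can be witnessed trivially.

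First I would treat the atomic cases $\varphi = p$ and $\varphi = \lnot p$: the semantic clauses are universal statements of the form $\forall w \in T \colon \cdots$, which hold vacuously when $T = \emptyset$. The case $\varphi = \psi_1 \land \psi_2$ is immediate from the induction hypothesis applied to both conjuncts. For $\varphi = \psi_1 \lor \psi_2$, I would observe that the decomposition $T_1 \uplus T_2 = \emptyset$ forces $T_1 = T_2 = \emptyset$, and then both parts follow from the induction hypothesis.

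The more interesting cases are the temporal ones. The key observation is that when $T = \emptyset$, the only $T$-compatible function $f$ is the empty function, and for every $n \in \N$ one has $T[f, n] = \emptyset$. For $\varphi = \E\X\psi$, I can witness the existential by taking $f$ to be the empty function, so that $T[f, 1] = \emptyset$, and conclude by the induction hypothesis. For $\varphi = \A\X\psi$, the universal over compatible $f$ ranges over the single empty function, and the same argument applies. For $\E[\psi_1 \U \psi_2]$ and $\A[\psi_1 \U \psi_2]$, I can again pick $f$ empty (or let it range over the empty function) and choose $k = 0$: then $T[f,0] = \emptyset$ satisfies $\psi_2$ by induction, while the inner universal over $1 \le i < 0$ is vacuous. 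For $\E[\psi_1 \R \psi_2]$ and $\A[\psi_1 \R \psi_2]$, the outer universal over $k$ is handled by noting that $T[f,k] = \emptyset \models \psi_2$ holds by induction for every $k$.

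There is no genuine obstacle here; the result is essentially a sanity check confirming that the semantics is set up so that the empty multiteam satisfies everything, analogously to Figure~\ref{fig:structuralpropertiesoverview} for \TeamLTL. The only mildly delicate point is making sure that the semantic clauses for the path quantifiers behave correctly when there are no elements in $T$ to be compatible with; this is resolved by the observation above that the empty function is the unique $T$-compatible function and yields $T[f,n] = \emptyset$ at every depth, which keeps the induction closed.
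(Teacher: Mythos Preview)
Your proposal is correct; the paper itself does not spell out a proof at all but simply declares the proposition trivial, and the structural induction you outline is exactly the intended elaboration. The only minor remark is that for the $\A$-quantified cases you could equally well argue that the universal over $T$-compatible $f$ is vacuous (or, as you do, note that the unique such $f$ is the empty function)---both viewpoints yield the same conclusion and neither is preferable.
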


When restricted to singleton teams, team semantics coincides with the traditional semantics of \CTL defined via pointed Kripke structures.
\begin{proposition}[Singleton equivalence]\label{prop:singleton}
 For every Kripke structure $\kripke=(W,R,\eta)$, every world $w\in W$, and every \CTL-formula~$\varphi$ the following equivalence holds:
 $$
\kripke,\mset{w}\models\varphi\Leftrightarrow \kripke,w\ctlmodels\varphi.
 $$
\end{proposition}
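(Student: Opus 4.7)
The plan is to proceed by structural induction on the \CTL-formula~$\varphi$. The base cases $p$ and $\lnot p$ are immediate: the team semantics universally quantifies over worlds in $T$, and since $\mset{w}$ contains $w$ with multiplicity one, the condition collapses to the classical condition at~$w$. The conjunction case follows componentwise from the inductive hypothesis. The splitjunction case hinges on the observation that, up to $\meq$, the only disjoint unions realising $T_1 \uplus T_2 = \mset{w}$ are $(\emptyset, \mset{w})$ and $(\mset{w}, \emptyset)$; combined with the empty team property this yields $\kripke, \mset{w} \models \varphi \lor \psi$ iff $\kripke, \mset{w} \models \varphi$ or $\kripke, \mset{w} \models \psi$, which by the inductive hypothesis matches classical disjunction.

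For the temporal operators, the key structural fact is that any $\mset{w}$-compatible function~$f$ corresponds precisely to the choice of a single path $\pi \in \Pi(w)$, namely $\pi = f((i,w))$, where $(i,w)$ is the unique element of $\mset{w}$. An easy unfolding of the definitions then gives $\mset{w}[f,k] = \mset{\pi(k)}$ for every $k \ge 0$, so singleton teams are preserved under the $[f,k]$ operation. Consequently, existential (respectively universal) quantification over $\mset{w}$-compatible~$f$ in the team semantics of $\E$- (respectively $\A$-)operators mirrors exactly the existential (respectively universal) quantification over $\pi \in \Pi(w)$ in the classical semantics. Applying the inductive hypothesis to the singleton subteams $\mset{\pi(k)}$ and $\mset{\pi(i)}$ then reduces every temporal clause of the team-satisfaction condition to its classical counterpart, and the equivalence propagates through $\EX$, $\AX$, $\E\U$, $\A\U$, $\E\R$, and $\A\R$ uniformly.

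The only mildly nontrivial step is the splitjunction case, where one has to invoke the empty team property to discard the two trivial splits of a singleton; all remaining cases are a direct translation of the team-semantic clauses under the singleton-preservation observation for $\mset{w}[f,k]$. No auxiliary constructions beyond these two observations are required.
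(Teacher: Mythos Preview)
Your proposal is correct and follows essentially the same approach as the paper: structural induction on~$\varphi$, with the disjunction case handled via the empty team property and the temporal cases handled via the observation that a $\mset{w}$-compatible function is nothing but a choice of a single path~$\pi \in \Pi(w)$ with $\mset{w}[f,k] = \mset{\pi(k)}$. The paper spells out only the $\E\X$ case and declares the remaining temporal operators analogous, whereas you make the singleton-preservation of $[f,k]$ explicit and treat all six temporal clauses uniformly; this is a presentational difference, not a methodological one.
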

\begin{proof}
Let $\kripke=(W,R,\eta)$ be an arbitrary Kripke structure. We prove the claim via induction on the structure of $\varphi$:

Assume that $\varphi$ is a (negated) proposition symbol $p$. Now
\begin{align*}
\kripke,w&\ctlmodels \varphi\\
&\text{iff}\quad \text{$p$ is (not) in $\eta(w)$}\\
&\text{iff}\quad \text{for all $w'\in \mset{w}$ it holds that $p$ is (not) in $\eta(w')$}\\
&\text{iff}\quad \kripke,\mset{w}\models\varphi.
\end{align*}

The case $\land$ trivial. For the $\lor$ case, assume that $\varphi = \psi\lor\theta$. Now it holds that
\begin{align*}
\kripke,&w\ctlmodels \psi\lor\theta\\ 
&\text{iff}\quad \kripke,w\ctlmodels \psi \text{ or } \kripke,w\ctlmodels \theta\\
&\text{iff}\quad \kripke,\mset{w}\models \psi \text{ or } \kripke,\mset{w}\models \theta\\
&\text{iff}\quad (\kripke,\mset{w}\models \psi \text{ and } \kripke,\emptyset \models \theta) \text{ or } (\kripke,\emptyset \models \psi \text{ and } \kripke,\mset{w}\models \theta)\\
&\text{iff}\quad \exists T_1\cup T_2 = \mset{w}\text{ s.t.\ } T_1\cap T_2 = \emptyset,\, \kripke,T _1\models\psi \text{ and }\kripke,T _2\models\theta\\
&\text{iff}\quad \kripke,\mset{w}\models \psi\lor\theta.
\end{align*}

Here the first equivalence holds by the semantics of disjunction, the second equivalence follows by the induction hypothesis, the third via the empty team property, the fourth equivalence follows since there is only two ways to partition a singleton to an ordered pair of sets, and the last by the team semantics of disjunction.

The cases for $\E\X$ and $\A\X$, until and weak until are all similar and straightforward. 
We show here the case for $\E\X$. 
Assume $\varphi = \E\X\psi$. 
Then, $\kripke,w\ctlmodels \E\X\psi$ iff there exists a trace $\pi\in\Pi(w)$ such that $\kripke,\pi(1) \ctlmodels \psi$. 
Thus, by the induction hypothesis $\kripke,\pi(1) \ctlmodels \psi$ iff $\kripke,\mset{\pi(1)} \models \psi$.
Hence, equivalently, there exists a $\mset{w}$-compatible function $f$ such that $\kripke,\mset{w} [f,1]\models\psi$ which is equivalent to $\kripke,\mset{w} \models \EX\psi$.
\end{proof}

\TeamCTL is \emph{downward closed} if the following holds for every Kripke structure $\kripke$, for every \CTL-formula $\varphi$, and for all teams $T$ and $T'$ of $\kripke$:
\[
\text{If $\kripke,T \models \varphi$ and $T'\msube T$ then $\kripke,T '\models \varphi$}. 
\]

The proof of the following lemma is analogous with the corresponding proofs for modal and first-order dependence logic (see~\cite{va07,va08}).
\begin{proposition}[Downward closure]\label{prop:dwclos}
 \TeamCTL is downward closed.
\end{proposition}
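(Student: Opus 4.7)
The plan is to proceed by induction on the structure of $\varphi$, following the standard template used for proving downward closure in modal and first-order team logics. Fix a Kripke structure $\kripke$, a team $T$ with $\kripke, T \models \varphi$, and a team $T' \msube T$; we must show $\kripke, T' \models \varphi$. For the atomic cases $\varphi = p$ and $\varphi = \neg p$, the semantics requires a condition on $\eta(w)$ for every $w \in T$, which is trivially preserved by passing to a submultiset $T' \msube T$. Conjunction is immediate from the induction hypothesis applied to each conjunct.

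For disjunction $\varphi = \psi_1 \lor \psi_2$, take the witnessing split $T = T_1 \uplus T_2$ with $\kripke, T_i \models \psi_i$. Using the permutation of indices witnessing $T' \msube T$, we can intersect $T'$ with each $T_i$ to obtain $T'_1, T'_2$ with $T' = T'_1 \uplus T'_2$ and $T'_i \msube T_i$; the induction hypothesis then gives $\kripke, T'_i \models \psi_i$, so $\kripke, T' \models \varphi$.

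For the existential path-quantifier cases ($\E\X$, $\E[{\cdot}\U{\cdot}]$, $\E[{\cdot}\R{\cdot}]$), the witnessing $T$-compatible function $f$ can simply be restricted to $T'$, yielding a $T'$-compatible $f'$ with $T'[f', n] \msube T[f, n]$ for every $n$. The $\psi$- and $\varphi$-conditions at the relevant indices $n$ then propagate to $T'$ by the induction hypothesis, reusing the same witness $k$ in the until/release cases. The universal cases ($\A\X$, $\A[{\cdot}\U{\cdot}]$, $\A[{\cdot}\R{\cdot}]$) are dual: given any $T'$-compatible $g$, extend it to a $T$-compatible $f$ by picking arbitrary paths from the worlds in $T \setminus T'$ (these exist because $R$ is left-total). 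The hypothesis applied to this $f$ yields $\kripke, T[f, n] \models \psi$ (together with the $\varphi$-conditions for $i < k$ in the until/release cases); since $T'[g, n] \msube T[f, n]$, the induction hypothesis transfers the satisfaction to $T'$.

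The main obstacle I expect is purely bookkeeping: the multiset definition of $\msube$ involves an index permutation, and one must check that this permutation is compatible with the operations $T \mapsto T[f, n]$ and with the disjoint union appearing in the disjunction case. This is resolved by viewing $T'$ as the image of a fixed injection into $T$'s indexed pairs and then pushing this injection through $f$-evaluation and through the splits; once set up, the rest is routine and closely parallels the downward closure arguments for first-order dependence logic referenced in the paper.
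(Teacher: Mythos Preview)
Your proposal is correct and follows essentially the same structural-induction approach as the paper; the paper's proof is in fact terser, treating only literals, $\land$, $\lor$, and $\E\X$ explicitly and declaring the remaining cases ``analogous''. Your explicit treatment of the universal path-quantifier cases (extending a $T'$-compatible $g$ to a $T$-compatible $f$ via left-totality of $R$) fills in a step the paper leaves implicit, and is correct.
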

\begin{proof}
The proof is by induction on the structure of $\varphi$. 
Let $\kripke=(W,R,\eta)$ be an arbitrary Kripke structure and $T'\msube T$ be some teams of $\kripke$.
Hence, there is a bijection~$\pi$ witnessing $T'\msube T$, i.e., it satisfies $T'\subseteq\{\,(\pi(i),x)\mid (i,x)\in T\,\}$. 

The cases for literals are trivial: 
Assume $\kripke,T \models p$. 
Then by definition $p\in\eta(w)$ for every $w\in T$. 
Now since $T'\msube T$, clearly $p\in\eta(w)$ for every $w\in T'$. 
Thus, we have that $\kripke,T '\models p$. 
The case for negated proposition symbols is identical.
 
The case for $\land$ is clear. 
For the case for $\varphi\lor\psi$ assume that $\kripke,T \models\varphi\lor\psi$. 
Now, by the definition of disjunction there exist $T_1\cup T_2 = T$ such that $T_1\cap T_2 = \emptyset$, $\kripke,T _1\models\varphi$ and $\kripke,T _2\models\psi$. 
Define $T_1'= \{\,(\pi(i),x) \in T' \mid (i,x)\in T_1\,\}$ and $T_2'$ analogously.
Then, we have $T_1'\msube T_1$ and $T_2'\msube T_2$ (the inclusions are both witnessed by $\pi$ from above).
Hence, by induction hypothesis it then follows that $\kripke,T_1'\models\varphi$ and $\kripke,T_2'\models\psi$. 
Finally, since $T' = T_1' \cup T_2'$ and $T_1' \cap T_2'=\emptyset$, it follows by the semantics of the disjunction that $\kripke,T '\models\varphi\lor\psi$.

Now consider $\EX\varphi$ and assume that $\kripke,T \models\EX\varphi$. 
We have to show that $\kripke,T '\models\EX\varphi$. 
Notice that by the semantics of $\EX$ there exists a $T$-compatible function $f$ such that $\kripke,T [f,1]\models\varphi$. 
Now, since $T'\msube T$, the function $f\upharpoonright T'$ (i.e., the restriction of $f$ to the domain of $T'$) is $T'$-compatible. 
Consequently, by induction hypothesis, $\kripke,T '[f\upharpoonright T',1]\models\varphi$ and thus $\kripke,T '\models\EX\varphi$.

The proofs for the cases for all remaining operators are analogous.
\end{proof}

Similar to \TeamLTL, \TeamCTL violates flatness and union closure. 
In this setting, a logic is \emph{union closed} if it satisfies
\[
\text{$\kripke,T \models\varphi$ and $\kripke,T '\models\varphi$ implies $\kripke,T \uplus T'\models\varphi$,}
\]
and has the \emph{flatness} property if
\[
\text{$\kripke,T \models\varphi$ if and only if for all $w\in T$ it holds that $\kripke,\mset{w}\models\varphi$}.
\]
\begin{proposition}
\CTL is neither union closed nor flat.
\end{proposition}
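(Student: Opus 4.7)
The plan is to give a single Kripke structure that simultaneously witnesses both failures, exploiting the synchronous time index~$k$ built into the team semantics of Definition~\ref{def:CTLsemantics}. The guiding intuition already appears in Figure~\ref{fig:async-vs-synch}~(left): if $p$ is reached along two paths at incompatible depths, no common~$k$ can satisfy $\EF p$ on the joint team, while each world alone is free to pick its own~$k$ through singleton equivalence.

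I would fix a Kripke structure $\kripke$ consisting of two disjoint deterministic chains that terminate in $p$-free self-loops. The first chain has edges $w_1\to v_1$, $v_1\to u_1$, and $u_1\to u_1$, with $\eta(v_1)=\{p\}$ and $\eta(w_1)=\eta(u_1)=\emptyset$. The second chain has edges $w_2\to a_2$, $a_2\to v_2$, $v_2\to u_2$, and $u_2\to u_2$, with $\eta(v_2)=\{p\}$ and $\eta(w_2)=\eta(a_2)=\eta(u_2)=\emptyset$. Then the unique path~$\pi_1$ out of $w_1$ carries $p$ exclusively at position~$1$, while the unique path~$\pi_2$ out of $w_2$ carries $p$ exclusively at position~$2$.

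Using Proposition~\ref{prop:singleton} I immediately obtain $\kripke,\mset{w_1}\models\EF p$ and $\kripke,\mset{w_2}\models\EF p$, as both hold in the classical semantics. For the team $T\coloneqq\mset{w_1,w_2}$, any $T$-compatible function~$f$ is forced to select $\pi_1$ and $\pi_2$, so $T[f,k]\meq\mset{\pi_1(k),\pi_2(k)}$ for every~$k$. The semantic clause for $\E[\top\U p]$ demands a single $k\in\N$ with $p\in\eta(\pi_1(k))$ and $p\in\eta(\pi_2(k))$ simultaneously, and by construction no such~$k$ exists. Hence $\kripke,T\not\models\EF p$.

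Both parts of the proposition then follow at once. Setting $T=\mset{w_1}$ and $T'=\mset{w_2}$ yields $T,T'\models\EF p$ while $T\uplus T'\not\models\EF p$, refuting union closure. The same multiset $\mset{w_1,w_2}$ refutes flatness, because every singleton submultiset satisfies $\EF p$ whereas $\mset{w_1,w_2}$ itself does not. The only subtlety I anticipate is keeping the branching deterministic so that the set of compatible functions collapses to a single choice; this isolates the obstruction entirely in the synchronous index~$k$, which is the conceptual point of the example.
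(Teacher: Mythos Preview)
Your proposal is correct and takes essentially the same approach as the paper: the paper's proof consists solely of a pointer to Figure~\ref{fig:async-vs-synch}~(left), which is precisely the two-deterministic-chains counterexample with $p$ at incompatible depths that you spell out in detail. Your write-up is more explicit than the paper's one-line proof, but the idea, the formula $\EF p$, and the reason it works are identical.
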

\begin{proof}
The counter example for both is depicted in Figure~\ref{fig:async-vs-synch} (left). 
\end{proof}
In addition, as exemplified in the right-hand side of Figure~\ref{fig:async-vs-synch}, \TeamCTL is sensitive for multiplicities.

\begin{figure}
	\Description[Two examples showing that multiplicites matter.]{The left example has two traces x and y which both are paths of length 3. Trace x has a p labeled in the third world while trace y has in the second world. The right example has one root world w and then it branches into two paths. The first has a p labeled in the second world and the other in the first.}
 \centering
\begin{tikzpicture}[c/.style={circle,fill=black,inner sep=0mm,minimum width=2mm},x=1cm,y=1cm]

\foreach \x/\l in {1/,2/,3/$p$,4/}
 \node[c,label={90:\l}] (a\x) at (\x,1) {};

\foreach \x/\l in {1/,2/$p$,3/,4/}
 \node[c,label={90:\l}] (b\x) at (\x,0) {};

\foreach \f/\t in {1/2,2/3,3/4}{
 \path[-stealth',black] (a\f) edge (a\t);
 \path[-stealth',black] (b\f) edge (b\t);
}

		\path[-stealth',black] (a4) edge[loop right] ();
		\path[-stealth',black] (b4) edge[loop right] ();

\node[label={180:\texttt{x}}] at (a1) {};
\node[label={180:\texttt{y}}] at (b1) {};

\draw[black,dashed,rounded corners] (.5,1.3) rectangle (1.3,-.3);
\node at (1,1.6) {$T$};
\end{tikzpicture}
\quad\quad\quad\quad
\begin{tikzpicture}[c/.style={circle,fill=black,inner sep=0mm,minimum width=2mm},x=1cm,y=1cm]

	\node[c,label={180:\texttt{w}}] (1) at (1,.5) {};

	\foreach \x/\l in {2/,3/$p$,4/}
	 \node[c,label={90:\l}] (a\x) at (\x,1) {};
	
	\foreach \x/\l in {2/$p$,3/,4/}
	 \node[c,label={90:\l}] (b\x) at (\x,0) {};
	
	\foreach \f/\t in {2/3,3/4}{
	 \path[-stealth',black] (a\f) edge (a\t);
	 \path[-stealth',black] (b\f) edge (b\t);
	}

	\path[-stealth',black] (1) edge (a2);
	\path[-stealth',black] (1) edge (b2);

		\path[-stealth',black] (a4) edge[loop right] ();
		\path[-stealth',black] (b4) edge[loop right] ();
	
	\draw[black,opacity=0,rounded corners] (.7,1.3) rectangle (1.3,-.3);
	\draw[black,dashed,rounded corners] (.4,.8) rectangle (1.3,.2);
	\node at (1,1.1) {$S$};
	\end{tikzpicture}
	\caption{(left) A team~$T$ does not satisfy the \TeamCTL formula~$\EF p$ while all singleton teams~$\mset{v} \subseteq T$ individually do satisfy the formula. (right) The multiplicity of elements in a team matters: If the multiplicity of the world~\texttt{w} in $S$ is $1$ then $S$ satisfies $\AF p$, while for larger multiplicities $S$ falsifies $\AF p$.}\label{fig:async-vs-synch}
\end{figure}

\begin{proposition}
Let $T$ and $T'$ be teams with the same support. 
In general, it need not hold that $\kripke,T \models \varphi$ if and only if $\kripke,T '\models \varphi$.
\end{proposition}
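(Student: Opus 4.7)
The plan is to exploit exactly the Kripke structure~$\kripke_S$ shown in the right-hand side of Figure~\ref{fig:async-vs-synch}. Recall that in this structure a single root world~$\mathtt{w}$ (with $p \notin \eta(\mathtt{w})$) branches to two successor chains; along one chain, $p$ holds at the first successor, while along the other chain, $p$ holds only at the second successor. The key observation is that the $\A$-quantifier in \TeamCTL is interpreted via \emph{all} $T$-compatible path assignments, and having multiple indexed copies of the same world allows such an assignment to send those copies down \emph{different} paths; a singleton team cannot do this.

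Concretely, I would take $T \coloneqq \mset{\mathtt{w}}$ and $T' \coloneqq \mset{\mathtt{w},\mathtt{w}}$, which clearly share the same support~$\{\mathtt{w}\}$, and show that $\kripke_S, T \models \AF p$ while $\kripke_S, T' \nmodels \AF p$. For $T$, any $T$-compatible function~$f$ sends the single element~$\mathtt{w}$ either to the top path or to the bottom path; in the first case $T[f,2] \models p$, and in the second case $T[f,1] \models p$. Hence for every $T$-compatible~$f$ there exists some $k$ with $T[f,k] \models p$, which is precisely the condition for $\AF p$.

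For $T'$, I exhibit the $T'$-compatible function~$f^\star$ that routes one indexed copy of~$\mathtt{w}$ along the top path and the other along the bottom path. Then $T'[f^\star,1]$ consists of one world without $p$ (first successor along the top) and one world with $p$ (first successor along the bottom), so $T'[f^\star,1] \nmodels p$; similarly $T'[f^\star,2] \nmodels p$, since $p$ holds on the top chain at that position but not on the bottom one. For $k \geq 3$, both copies have reached trap-like states where $p$ does not hold. Thus for every $k$ we have $T'[f^\star,k] \nmodels p$, which witnesses $\kripke_S,T' \nmodels \AF p$.

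The argument is genuinely short; the only potentially subtle point is being explicit about why the presence of two indices in~$T'$ enables an assignment that a singleton team cannot mimic, i.e., why the multiplicity (not merely the support) controls the space of $T$-compatible functions. This is exactly the asymmetry suggested in Figure~\ref{fig:async-vs-synch} (right), and it is the obstacle that the proof must make precise in order to justify that team semantics for \CTL is truly multiset-based rather than set-based.
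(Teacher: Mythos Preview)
Your proposal is correct and follows exactly the approach the paper intends: the paper does not spell out a proof for this proposition but simply points to the right-hand side of Figure~\ref{fig:async-vs-synch}, whose caption states that the singleton team~$\mset{\mathtt{w}}$ satisfies $\AF p$ while any team with higher multiplicity of~$\mathtt{w}$ does not. Your write-up makes this explicit with the witnessing assignment~$f^\star$, which is precisely the argument the figure is meant to convey.
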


\paragraph*{Comparison to TeamCTL with Time Evaluation Functions}
After the publication of the conference version of this work~\cite{kmv15}, \citeauthor{DBLP:conf/lics/GutsfeldMOV22}~\cite{DBLP:conf/lics/GutsfeldMOV22} revisited the foundations of temporal team semantics and introduced several logics based on a concept they aptly named \emph{time evaluation functions} (tef). 
Note first that, using a slightly different notation than in this paper, the satisfying element in (synchronous) \TeamLTL is a pair $(T,i)$, where $T$ is a set of traces and $i\in\mathbb{N}$ is the current time step. In this paper, we write $T[i,\infty]$ instead of $(T,i)$. 
Hence in synchronous \TeamLTL the traces share access to a common clock. 
Simply put, time evaluation functions $\tau\colon \mathbb{N} \times T \to \mathbb{N}$ model asynchronous evaluation of time on traces which are equipped with local clocks. 
The idea is that $\tau(i,t)$ denotes the value of the local clock of trace $t$ at global time $i$. 
Hence the satisfying element in this setting is a triplet~$(T,\tau,n)$, where $T$ is a set of traces, $\tau$ is a time evaluation function, and $n$ is the value of the global clock. 
\TeamLTL with tef's is then defined similar to our \TeamLTL, where logical operators modify $T$ and $n$. 
In particular, if $\tau$ is the synchronous tef (i.e., $\tau(i,t)=i$ for all $i$ and $t$), then exactly synchronous \TeamLTL, as defined in this paper, is obtained. 
Finally they consider a logic they also named \TeamCTL which takes the syntax of \CTL and reinterprets path quantifiers as quantifiers raging over tef's. 
For further details, we refer to the work of \citeauthor{DBLP:conf/lics/GutsfeldMOV22}~\cite{DBLP:conf/lics/GutsfeldMOV22}.

The team semantics of Definition \ref{def:CTLsemantics} for \CTL can be modified to a team semantics for the full branching time logic $\CTL^\star$ by working with multisets of traces with time steps $(T[f],n)$ instead of teams $T[f,n]$. 
This approach is essentially equivalent to synchronous time evaluation functions as in the work of \citeauthor{DBLP:conf/lics/GutsfeldMOV22}~\cite{DBLP:conf/lics/GutsfeldMOV22}, when interpreted over computation trees.

\section{Expressive Power of TeamCTL}\label{sec:expressive}

Next, we study the expressiveness of \TeamCTL and exemplify that some simple properties are not expressible in \CTL with classical semantics.
In order to relate the team semantics to the classical semantics, we first present a definition that lifts the classical semantics to multisets of worlds.

\begin{definition}\label{def:classical-CTL-multiteam-semantics}
	For every Kripke structure $\kripke=(W,R,\eta)$, every \CTL-formula $\varphi$, and every team $T$ of $\kripke$, we let 
	\[
		\kripke,T \ctlmodelsnew\varphi \text{ if and only if } \forall w\in T: \kripke,w\ctlmodels\varphi.
	\]
	We refer to this notion as \emph{classical (multiset) semantics}.
\end{definition}

Observe that classical multiset semantics is flat by definition (i.e., $\kripke,T \ctlmodelsnew\varphi$ if and only if for all $w\in T$ we have that $\kripke,\mset{w}\ctlmodelsnew\varphi$), and hence union and downward closed.
In the following, we introduce the notions of definability and $k$-definability.
\begin{definition}
For each \CTL-formula $\varphi$, define
\begin{align*}
&\mathfrak{F}^c_\varphi\coloneqq\{(\kripke,T )\mid \kripke,T  \ctlmodelsnew\varphi \text{ under classical multiset semantics}\} \text{ and}\\
&\mathfrak{F}^t_\varphi\coloneqq\{(\kripke,T )\mid \kripke,T \models\varphi \text{ under team semantics}\}.
\end{align*}

We say that $\varphi$ \emph{defines the class $\mathfrak{F}^c_\varphi$ in classical multiset semantics}. 
Analogously, we say that $\varphi$ \emph{defines the class $\mathfrak{F}^t_\varphi$ in  team semantics}.

A class $\mathfrak{F}$ of pairs of Kripke structures and teams is \emph{definable in classical multiset semantics} (in team semantics) if there exists some \TeamCTL-formula~$\psi$ such that $\mathfrak{F}=\mathfrak{F}^c_\psi$ ($\mathfrak{F}=\mathfrak{F}^t_\psi$). 
Furthermore, for $k\in\mathbb{N}$, define
\begin{align*}
\mathfrak{F}^{c,k}_\varphi\coloneqq\{(\kripke,T )\mid \kripke,T \ctlmodelsnew\varphi \text{ and } \lvert T\rvert \leq k \} \text{ and } 
\mathfrak{F}^{t,k}_\varphi\coloneqq\{(\kripke,T )\mid \kripke,T \models\varphi \text{ and } \lvert T\rvert \leq k \}.
\end{align*}

We say that $\varphi$ \emph{$k$-defines the class $\mathfrak{F}^{c,k}_\varphi$ ($\mathfrak{F}^{t,k}_\varphi$, resp.) in classical multiset (team, resp.) semantics}. 
The definition of \emph{$k$-definability} is analogous to that of definability. 
\end{definition}

Next we will show that there exists a class $\mathfrak{F}$ which is definable in  team semantics, but is not definable in classical multiset semantics.
\begin{theorem}\label{thm:undef}
The class $\mathfrak F_{\EF p}^t$ is not definable in classical multiset semantics.
\end{theorem}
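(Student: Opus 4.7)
The plan is to prove this by contradiction. Assume, towards a contradiction, that some $\CTL$-formula $\psi$ satisfies $\mathfrak F^c_\psi = \mathfrak F^s_{\EF p}$. I will derive a contradiction by exhibiting a single Kripke structure $\kripke$ together with a two-world team $T = \mset{u, v}$ whose two singleton subteams both belong to $\mathfrak F^s_{\EF p}$, while the full team does not.

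Concretely, I would build $\kripke$ as a disjoint union of two deterministic chains: one chain with worlds $u, u'$ and transitions $u \to u', u' \to u'$ where $p \in \eta(u)$ only; and a second chain with worlds $v, v_1, v'$ and transitions $v \to v_1, v_1 \to v', v' \to v'$ where $p \in \eta(v_1)$ only. (An arbitrary initial world may be added since teams need not contain $w_I$.) On this structure, $\kripke, \mset{u} \models \EF p$ via the witness $k = 0$, and $\kripke, \mset{v} \models \EF p$ via the witness $k = 1$. However, $\kripke, \mset{u, v} \not\models \EF p$: determinism forces the unique $T$-compatible function $f$ to select the two unique outgoing paths, so a common witness $k$ would need to satisfy both $f(u)(k) = u$ and $f(v)(k) = v_1$ at the same time, i.e.\ $k = 0 = 1$, which is impossible. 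For $k \geq 2$ neither $u'$ nor $v'$ carries $p$, so no $k$ works.

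To close the argument, I would invoke the assumption on both singletons: from $(\kripke, \mset{u}), (\kripke, \mset{v}) \in \mathfrak F^s_{\EF p} = \mathfrak F^c_\psi$ together with singleton equivalence (Proposition~\ref{prop:singleton}), one obtains $\kripke, u \ctlmodels \psi$ and $\kripke, v \ctlmodels \psi$ under the classical pointed semantics. Since classical multiset semantics is flat by Definition~\ref{def:classical-CTL-multiteam-semantics}, having $\psi$ hold at every world of $\mset{u, v}$ immediately yields $\kripke, \mset{u, v} \ctlmodelsnew \psi$, and hence $(\kripke, \mset{u, v}) \in \mathfrak F^c_\psi = \mathfrak F^s_{\EF p}$, contradicting the failure of $\EF p$ on $\mset{u, v}$ established above.

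I do not anticipate any real obstacle; the only step requiring care is the team-semantic failure $\mset{u, v} \not\models \EF p$, which I make unambiguous by restricting to deterministic chains so that the $T$-compatible functions have no freedom. Conceptually, the proof pins down that synchronous $\EF p$ witnesses a temporal \emph{alignment} across the team that any classical, pointed CTL property (being flat on multisets) is inherently unable to detect.
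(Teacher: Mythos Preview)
Your proposal is correct and follows essentially the same approach as the paper: assume a defining formula exists, exhibit a two-element team whose singleton subteams satisfy $\EF p$ under team semantics while the full team does not, and derive a contradiction from the flatness of classical multiset semantics. The paper uses the ready-made structure from Figure~\ref{fig:async-vs-synch} (left) and skips the detour through singleton equivalence (since $(\kripke,\mset{u})\in\mathfrak F^s_{\EF p}=\mathfrak F^c_\psi$ already gives $\kripke,\mset{u}\ctlmodelsnew\psi$ directly), but your construction and reasoning are equally valid.
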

\begin{proof}

For the sake of a contradiction, assume that there is a \CTL-formula~$\varphi$ such that $\mathfrak F^c_\varphi = \mathfrak F_{\EF p}^t$. 
Consider the Kripke structure from Figure~\ref{fig:async-vs-synch} (left). 
Clearly $\kripke,\mset{\texttt{x}}\models\EF p$ and $\kripke,\mset{\texttt{y}}\models\EF p$. 
Thus, by our assumption, it follows that $\kripke,\mset{\texttt{x}}\ctlmodelsnew\varphi$ and $\kripke,\mset{\texttt{y}}\ctlmodelsnew\varphi$. 
From Definition~\ref{def:classical-CTL-multiteam-semantics} it then follows that $\kripke,\mset{\texttt{x},\texttt{y}}\ctlmodelsnew\varphi$. 
But clearly $\kripke,\mset{\texttt{x},\texttt{y}}\not\models\EF p$.
\end{proof}

\begin{corollary}
For $k>1$, the class $\mathfrak F_{\EF p}^{t,k}$ is not $k$-definable in classical multiset semantics.
\end{corollary}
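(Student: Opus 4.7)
The plan is to reuse verbatim the counterexample constructed in the proof of Theorem~\ref{thm:undef}. The crucial observation is that every team appearing in that argument has cardinality at most two, so the proof transfers immediately to the $k$-definability setting as long as $k \geq 2$, which is exactly the range asserted by the corollary.

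More concretely, suppose for contradiction that there exists a \CTL-formula $\varphi$ with $\mathfrak F^{c,k}_\varphi = \mathfrak F_{\EF p}^{s,k}$ for some $k > 1$. Take the Kripke structure $\kripke$ from Figure~\ref{fig:async-vs-synch} (left). The three teams $\mset{\texttt{x}}$, $\mset{\texttt{y}}$, and $\mset{\texttt{x},\texttt{y}}$ all have cardinality at most $2 \leq k$, so the $k$-size restriction is inactive for them. As in the proof of Theorem~\ref{thm:undef}, we have $\kripke,\mset{\texttt{x}}\models \EF p$ and $\kripke,\mset{\texttt{y}}\models \EF p$, hence both pairs lie in $\mathfrak F^{s,k}_{\EF p}$, and consequently in $\mathfrak F^{c,k}_\varphi$ by our assumption. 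Unfolding Definition~\ref{def:classical-CTL-multiteam-semantics}, this gives $\kripke,\texttt{x}\ctlmodels\varphi$ and $\kripke,\texttt{y}\ctlmodels\varphi$, so $\kripke,\mset{\texttt{x},\texttt{y}}\ctlmodelsnew\varphi$ by the flatness of classical multiset semantics.

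Since $|\mset{\texttt{x},\texttt{y}}| = 2 \leq k$, the pair $(\kripke,\mset{\texttt{x},\texttt{y}})$ belongs to $\mathfrak F^{c,k}_\varphi$, and hence to $\mathfrak F^{s,k}_{\EF p}$. But $\kripke,\mset{\texttt{x},\texttt{y}}\not\models \EF p$ (as already observed in Theorem~\ref{thm:undef}), yielding the desired contradiction.

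There is no real obstacle here: the work has been done in Theorem~\ref{thm:undef}, and the only thing to verify is that the witness teams are small enough to fall under the $k$-definability bound, which they trivially are for any $k \geq 2$. Hence the corollary follows immediately.
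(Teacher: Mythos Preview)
Your proof is correct and is precisely the argument the paper intends: the corollary is stated without proof because the counterexample from Theorem~\ref{thm:undef} uses only teams of size at most two, and you have spelled out exactly that observation.
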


We conjecture that the class $\mathfrak F_{\EF p}^{c}$ is not definable in team semantics.  
However, we will see that the class $\mathfrak F_{\EF p}^{c,k}$ is $k$-definable in  team semantics.

\begin{theorem}
	For every $k\in\mathbb{N}$ and every \CTL-formula~$\varphi$, the class $\mathfrak F_{\varphi}^{c,k}$ is $k$-definable in team semantics.
\end{theorem}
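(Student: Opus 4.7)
The plan is to exhibit, for every $k \in \mathbb{N}$ and every \CTL-formula~$\varphi$, an explicit defining formula, namely the $k$-fold splitjunction
\[
\psi_k \;\dfn\; \underbrace{\varphi \lor \varphi \lor \cdots \lor \varphi}_{k\text{ copies}}
\]
(taking $\psi_0$ to be any formula, e.g.\ $\top$, since $|T|\le 0$ forces $T=\emptyset$ and both $\ctlmodelsnew$ and $\models$ are then vacuously satisfied by the empty team property). I would then verify the equivalence $\kripke,T\ctlmodelsnew \varphi \Longleftrightarrow \kripke,T\models \psi_k$ under the side condition $|T|\le k$, using three ingredients that are already available: singleton equivalence (Proposition~\ref{prop:singleton}), downward closure (Proposition~\ref{prop:dwclos}), and the empty team property.

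For the direction ``$\Rightarrow$'', assume $\kripke,T\ctlmodelsnew\varphi$ with $|T|\le k$. Enumerate the elements of $T$ as $w_1,\dots,w_m$ with $m\le k$ (respecting multiplicities) and split
\[
T \;=\; \mset{w_1}\uplus\mset{w_2}\uplus\cdots\uplus\mset{w_m}\uplus\underbrace{\emptyset\uplus\cdots\uplus\emptyset}_{k-m\text{ times}}.
\]
Each classical satisfaction $\kripke,w_i\ctlmodels\varphi$ yields $\kripke,\mset{w_i}\models\varphi$ by singleton equivalence, while the empty subteams satisfy $\varphi$ by the empty team property. Hence this splitting witnesses $\kripke,T\models\psi_k$.

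For the direction ``$\Leftarrow$'', suppose $\kripke,T\models\psi_k$, so that $T = T_1\uplus\cdots\uplus T_k$ with $\kripke,T_i\models\varphi$ for each $i$. For any $w\in T$, the singleton $\mset{w}$ satisfies $\mset{w}\msube T_i$ for the unique $i$ with $w\in T_i$, so downward closure gives $\kripke,\mset{w}\models\varphi$, which by singleton equivalence yields $\kripke,w\ctlmodels\varphi$. Since this holds for every $w\in T$, we obtain $\kripke,T\ctlmodelsnew\varphi$.

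I do not anticipate a genuine obstacle here: the construction is essentially a finite ``splittability'' argument, and all three auxiliary properties used have already been established for \TeamCTL. The only mildly subtle point is handling the multiset aspect cleanly in the forward direction, i.e.\ making sure that the singleton split respects multiplicities (so that, e.g., a team $\mset{w,w}$ with $k=2$ is split into two copies of $\mset{w}$ rather than into one copy of $\mset{w}$ together with something spurious); this is immediate from the definition of $\uplus$ on indexed multisets.
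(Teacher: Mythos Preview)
Your proposal is correct and is essentially the paper's own argument: the paper likewise takes $\varphi' \coloneqq \bigvee_{1\le i\le k}\varphi$ and derives the equivalence from singleton equivalence, the empty team property, and downward closure. Your write-up merely spells out the two directions in more detail than the paper's condensed equivalence chain.
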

\begin{proof}
	Fix $k\in \mathbb{N}$ and a \CTL formula~$\varphi$. We define
	\[
	\varphi' \coloneqq \bigvee_{1\leq i \leq k} \varphi.
	\]
	
	We will show that $\mathfrak F_{\varphi}^{c,k} = \mathfrak F_{\varphi'}^{t,k}$. 
	Let $\kripke$ be an arbitrary Kripke structure and $T$ be a team of $\kripke$ of size at most $k$. Then the following is true
	\begin{align*}
	\kripke,T  \ctlmodelsnew \varphi \quad\Leftrightarrow\quad& \forall w\in T : \kripke,w \ctlmodels \varphi \\
	\Leftrightarrow\quad& \forall w\in T : \kripke,\mset{w} \models \varphi \\
	\Leftrightarrow\quad& \kripke,T  \models \varphi'.
	\end{align*}
	
	The first equivalence follows by Definition~\ref{def:classical-CTL-multiteam-semantics}, the second is due to Proposition~\ref{prop:singleton}, and the last by the semantics of disjunction, empty team property, and downward closure.
\end{proof}

\section{Complexity Results for TeamCTL}\label{sec:complexity}

Next, we define the most important decision problems for \TeamCTL and classify their computational complexity.
Notice that, in comparison to \TeamLTL, the problems are defined in a way that a team is explicitly given as part of the input, while for \TeamLTL, the team is given implicitly by a Kripke structure. 

\problemdef{$\TMC(\TeamCTL)$ --- \TeamCTL Model Checking.}{A Kripke structure $\kripke$, a finite team $T$ of $\kripke$, and a \TeamCTL-formula $\varphi$.}{$\kripke,T \models\varphi$?}

\problemdef{$\TSAT(\TeamCTL)$ --- \TeamCTL Satisfiability.}{A \TeamCTL-formula $\varphi$.}{Does there exist a Kripke structure $\kripke$ and a non-empty team $T$ of $\kripke$ such that $\kripke,T \models\varphi$?}

\subsection{Model Checking}

We first investigate the computational complexity of model checking. Our benchmark here is the complexity of model checking for classical \CTL.

\begin{proposition}[\cite{clemsi86,sc02}]\label{prop:MC-P}
 Model checking for \CTL-formulae under classical semantics is $\P$-complete.
\end{proposition}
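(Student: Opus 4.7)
My plan is to prove the two directions of $\P$-completeness separately: membership in $\P$ via the classical bottom-up labelling algorithm and $\P$-hardness via a reduction from a canonical $\P$-complete problem.

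For the upper bound, I would describe the standard labelling algorithm of Clarke, Emerson, and Sistla. Given $\kripke = (W,R,\eta)$ and $\varphi$, for each subformula $\psi$ of $\varphi$ we compute $\llbracket \psi \rrbracket \dfn \{\, w \in W \mid \kripke, w \ctlmodels \psi \,\}$ in order of increasing subformula complexity. The base case for $\psi \in \{p, \lnot p\}$ is a single pass over $W$. Boolean connectives amount to set union/intersection. For $\E\X\psi$ and $\A\X\psi$ we look at one-step successors, which is linear in $\size{R}$. For the fixpoint cases $\E[\psi_1 \U \psi_2]$, $\A[\psi_1 \U \psi_2]$, $\E[\psi_1 \R \psi_2]$, and $\A[\psi_1 \R \psi_2]$ we compute a least (respectively greatest) fixpoint by iteratively saturating a working set using $\llbracket \psi_1 \rrbracket$ and $\llbracket \psi_2 \rrbracket$; this stabilises in at most $\size{W}$ rounds, each of which is polynomial in $\size{W} + \size{R}$. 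Since there are at most $\size{\varphi}$ subformulas, the total running time is polynomial in $\size{\kripke} + \size{\varphi}$.

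For the lower bound, the natural choice is to reduce from a well-known $\P$-complete problem such as the Monotone Circuit Value Problem (MCVP) or the Alternating Graph Accessibility Problem. I would favour MCVP: given a monotone Boolean circuit~$C$ with AND and OR gates, constant inputs~$0,1$, and output gate~$g^*$, decide whether $C$ evaluates to $1$. Construct a Kripke structure~$\kripke_C$ whose worlds are the gates of $C$, where each gate has edges to its inputs, and label constant-$1$ inputs with a proposition~$p$. Then define a formula $\varphi_{\mathrm{or}} \dfn \E\X p$ and $\varphi_{\mathrm{and}} \dfn \A\X p$ to be used at OR- and AND-gates respectively, and build an overall formula by structural recursion on the (topologically sorted) gate DAG. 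Evaluating $C$ and computing the label of $g^*$ in $\kripke_C$ coincide, giving $\P$-hardness. Since both the translation and its correctness are straightforward once the circuit topology is traversed bottom-up, this part is technically routine.

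The main conceptual point (rather than obstacle) is ensuring that the fixpoint iterations for the until and release operators are correctly expressed as polynomial-time computations; the rest is bookkeeping. Since this proposition is quoted from the literature~\cite{clemsi86,sc02}, the above two-part sketch suffices as a proof plan and a full detailed proof can be found in standard references.
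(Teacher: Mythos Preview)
The paper does not prove this proposition at all: it is stated with a citation to \cite{clemsi86,sc02} and used only as a benchmark for the subsequent team-semantics results. So there is nothing to compare against on the paper's side; your sketch already goes further than the paper does.

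Your upper bound is the standard labelling argument and is fine.

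Your lower-bound sketch has a gap. You propose to build the \CTL-formula ``by structural recursion on the (topologically sorted) gate DAG.'' If you literally unfold the DAG into a formula tree in this way, a gate with fan-out greater than one is duplicated each time it is reused, and the resulting formula can be exponentially larger than the circuit. Since $\P$-hardness of MCVP genuinely relies on sharing (tree-shaped monotone circuits are evaluable in $\mathrm{NC}$), you cannot restrict to tree circuits to avoid this. The usual fix is to keep the Kripke structure as you describe but use a \emph{single} formula whose size depends only on the (polynomial) depth~$d$ of a layered circuit: with propositions $\mathit{and}$, $\mathit{or}$ marking gate types and $p$ marking constant-$1$ inputs, let $\psi_0 \dfn p$ and $\psi_{i+1} \dfn (\mathit{and} \rightarrow \A\X\psi_i) \land (\mathit{or} \rightarrow \E\X\psi_i)$; then the output gate satisfies $\psi_d$ iff the circuit evaluates to~$1$. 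This keeps the reduction logspace-computable. With that correction your plan is sound, though for a cited classical result a pointer to \cite{clemsi86,sc02} is all the paper itself offers.
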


Now we turn to the model checking problem for \TeamCTL. 
Here we show that the problem becomes intractable under reasonable complexity class separation assumptions, i.e., $\P\neq\PSPACE$. 
The main idea is to exploit the synchronicity of the team semantics in a way to  check in parallel all clauses of a given quantified Boolean formula for satisfiability by a set of relevant assignments.

\newcommand{\lits}{\textrm{literals}}
\begin{lemma}\label{lem:mcs-pspacehard}
	$\TMC(\TeamCTL)$ is $\PSPACE$-hard  w.r.t.\ $\leqpm$-reductions.
\end{lemma}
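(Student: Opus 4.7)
The plan is to reduce from $\qbfval$, reusing the machinery developed for $\TPC(\LTL)$ in Lemma~\ref{lem:TPCs_PSPACEhard}. The underlying insight is that if the Kripke structure~$\kripke$ is \emph{path-deterministic} --- every world having exactly one outgoing edge --- then for any team~$T$ of worlds there is exactly one $T$-compatible function, so the $\E$ and $\A$ path quantifiers of the \CTL team semantics degenerate to the synchronous advance of all traces. Evaluating a \CTL formula on $T$ in $\kripke$ then coincides with evaluating its ``\LTL version'' (obtained by stripping path quantifiers) on the team of traces induced by the worlds of $T$, so the \PSPACE-hardness reduction for $\TPC(\LTL)$ can be lifted almost verbatim.

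Concretely, given a QBF $\varphi$ in prenex 3CNF form, I would realise each of the ultimately periodic traces used in Lemma~\ref{lem:TPCs_PSPACEhard} (the gadgets $U(i)$, $T(i,0)$, $T(i,1)$, $L(j,k)$) as its own vertex-disjoint tadpole in a Kripke structure $\kripke$: a short aperiodic chain followed by a cycle of length $3$ or $6$, labelled by the corresponding sets of propositions. The team~$T$ is then the multiset of the designated start worlds of these tadpoles. Because the tadpoles are pairwise disjoint and each world has a unique successor, $\kripke$ is path-deterministic and no two team elements can ever collide into a common world, so multisets and sets behave identically throughout the evaluation.

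Next I would translate the \LTL formula $f(\varphi)$ produced by Lemma~\ref{lem:TPCs_PSPACEhard} into a \CTL formula $f'(\varphi)$ by prefixing every temporal operator with~$\E$; that is, by rewriting each $\X\psi$ as $\E\X\psi$, each $\psi_1\U\psi_2$ as $\E[\psi_1\U\psi_2]$, and each $\psi_1\R\psi_2$ as $\E[\psi_1\R\psi_2]$, while atomic formulae and Boolean connectives (in particular the splitjunction) are left unchanged. This preserves negation normal form and runs in linear time. A routine induction on formula structure, using path-determinism to collapse the existential quantification over compatible functions to the unique such function and the vertex-disjointness to align multiset splits with trace splits, proves that $\kripke,T'\models f'(\psi)$ iff the corresponding trace team $\teamup'$ satisfies $\psi$ under \LTL team semantics, for every subformula $\psi$ and every subteam $T'$. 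Combined with the correctness statement in Lemma~\ref{lem:TPCs_PSPACEhard}, this yields $\varphi\in\qbfval$ iff $\kripke,T\models f'(\varphi)$, establishing \PSPACE-hardness.

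The main obstacle will be ensuring that the inductive correspondence goes through cleanly across all operator cases --- most delicately for the splitjunction, where one must match a multiset partition of $T$ with a set partition of the trace team, and for the until and release operators, where the compatible-function quantifier must be shown to be truly vacuous. The vertex-disjoint tadpole construction is what makes these steps routine: distinct team members trace out distinct worlds at every time step, so the multiset team effectively is a set of traces, and the unique-successor property guarantees that every compatible function is forced, aligning each $\E$-quantified step of \CTL with the corresponding synchronous shift in the \LTL team semantics.
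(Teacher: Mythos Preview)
Your proposal is correct but takes a genuinely different route from the paper's proof. You exploit the observation that on a path-deterministic Kripke structure the $\E$/$\A$ path quantifiers of \TeamCTL become vacuous, so the entire $\TPC(\LTL)$ reduction from Lemma~\ref{lem:TPCs_PSPACEhard} can be imported wholesale: encode each ultimately periodic trace gadget as its own tadpole, take the team of start worlds, and prefix every temporal operator in $f(\varphi)$ by $\E$. The correspondence between the \CTL multiset semantics and the \LTL set semantics is handled via vertex-disjointness (so multiplicities never exceed~$1$) together with downward closure (so overlapping \LTL splits can be made disjoint). This is sound and pleasantly modular.

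The paper instead gives a \emph{native} \CTL reduction that actually uses branching. It builds, for each variable~$x_i$, a small structure~$\kripke_{x_i}$ that branches at depth~$i$ into a ``true'' and a ``false'' arm, together with a clause structure~$\kripke_\varphi$ that branches into all clauses and then into all literals. The team is $\mset{w_1^{x_1},\dots,w_1^{x_n},w_1^c}$, and the formula is simply $\EX\AX\cdots\mathsf{P}\X\,\AX\EX\bigwedge_{i=1}^n \EF x_i$: the alternating $\EX/\AX$ prefix \emph{directly} simulates the QBF quantifier prefix via the team-level path-quantifier semantics, the subsequent $\AX$ fans out over all clauses, the $\EX$ picks a witnessing literal per clause, and the final conjunction of $\EF x_i$ checks consistency between the chosen literals and the chosen variable values. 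What the paper's approach buys is a much simpler target formula and a demonstration that hardness arises from the genuine interaction of $\E$ and $\A$ in the team semantics; what your approach buys is reuse of an already-proved lemma and the insight that hardness does not even require branching in~$\kripke$.
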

\begin{proof}
\renewcommand{\Game}{Q}
We will reduce from $\qbfval$. 
Let $\varphi\dfn\exists x_{1}\forall x_{2}\cdots\Game x_{n}\bigwedge_{i=1}^{m}\bigvee_{j=1}^{3}\ell_{i,j}$ be a closed quantified Boolean formula ($\QBF$) and $\Game=\exists$ if $n$ is odd, resp., $\Game=\forall$ if $n$ is even. 

Now define the corresponding structure $\kripke\coloneqq(W,R,\eta)$ as follows (the structure is given in Figure~\ref{fig:generalview-pspace-struc}). 
The structure $\kripke$ is constructed out of s	maller structures in a modular way. 
Formally, if $\kripke_1=(W_1,R_1,\eta_1)$ and $\kripke_2=(W_2,R_2,\eta_2)$ are two Kripke structures, then let $\kripke_1\cup\kripke_2$ be defined as the Kripke structure $(W_1\cup W_2,R_1\cup R_2,\eta_1\cup\eta_2)$. 
Without loss of generality, we will always assume that $W_1\cap W_2=\emptyset$, $R_1\cap R_2=\emptyset$, and $\eta_1\cap\eta_2=\emptyset$. 
For each $x_i$ define a Kripke structure $\kripke_{x_i}\coloneqq (W_{x_i},R_{x_i},\eta_{x_i})$, where
\begin{align*}
	W_{x_i}&\dfn \{w^{x_i}_{1},\dots,w^{x_i}_{i}\}\cup\{\,w^{x_i}_{j,1},w^{x_i}_{j,2}\mid i< j\leq n+4 \,\},\\
	R_{x_i} &\dfn\{\,(w^{x_i}_{j},w^{x_i}_{j+1})\mid 1\leq j< i\,\} \cup \{\,(w^{x_i}_i,w^{x_i}_{i+1,a})\mid a\in\{1,2\}\,\}\cup\\
	&\hspace{1.7em}\cup\{\,(w^{x_i}_{j,a},w^{x_i}_{j+1,a})\mid a\in\{1,2\}, i< j< n+4\,\}\cup\{\,((w^{x_i}_{n+4,a},w^{x_i}_{n+4,a}))\mid a\in\{1,2\}\,\}\\
	\eta_{x_i}&\dfn\{\,(w,\{x_1,\dots,x_n\})\mid w\in\{w^{x_i}_{n+3,1},w^{x_i}_{n+4,2}\}\,\}\cup\{\,(w,\{x_1,\dots,x_n\}\setminus\{x_i\})\mid w\in\{w^{x_i}_{n+4,1},w^{x_i}_{n+3,2}\}\,\}.
\end{align*}
 
If $\ell$ is a literal, then we will write $\var(\ell)$ to denote the corresponding variable of $\ell$. 
Furthermore, we define a Kripke structure $\kripke_\varphi\coloneqq (W_\varphi,R_\varphi,\eta_\varphi)$, where
\begin{align*}
	W_\varphi &\dfn \{w^{c}_{i}\mid 1\leq i\leq n+1\}\cup\{w^{c_j}\mid 1\leq j\leq m\} \cup\{w^{c_j}_{j,i,k}\mid 1\leq j\leq m,1\leq i\leq 3,1\leq k\leq 2\},\\
	R_\varphi &\dfn \{\,(w^{c}_{i},w^{c}_{i+1})\mid 1\leq i < n\,\}\cup\{\,(w^c_{n+1},w^{c_j})\mid 1\leq j\leq m\,\}\cup\\
	&\hspace{1.7em} \cup \{(w^{c_j},w^{c_j}_{j,i,1}),(w^{c_j}_{j,i,1}\,,w^{c_j}_{j,i,2})\mid 1\leq\! i\!\leq 3,1\leq\!j\!\leq m\}\cup\\
	&\hspace{1.7em} \cup \{(w^{c_j}_{j,i,2}\,,w^{c_j}_{j,i,2})\mid 1\leq\! i\!\leq 3,1\leq\!j\!\leq m\},\text{ and}\\
	\eta_\varphi &\dfn \big\{(w^{c_j}_{j,i,1}\,, \{x_k\mid \ell_{j,i}=x_k\}\cup\{x_k\mid x_k\neq\var(\ell_{j,i})\})\;\big|\; 1\leq j\leq m,1\leq i\leq 3\big\}\cup\\
	&\hspace{1.7em} \cup \big\{(w^{c_j}_{j,i,2}\,, \{x_k\mid \ell_{j,i}=\lnot x_k\}\cup\{x_k\mid x_k\neq\var(\ell_{j,i})\})\;\big|\; 1\leq j\leq m,1\leq i\leq 3\big\}.
\end{align*}
Finally, let $\kripke=(W,R,\eta)$ be the Kripke structure defined as 
\[\bigcup_{1\leq i\leq n}\kripke_{x_i}\cup\kripke_\varphi.\]
Furthermore, set 
\begin{align*}
  T\dfn\mset{w^{x_{1}}_{1},\dots,w^{x_{n}}_{1},w_{1}^{c}} \text{ and }
 \psi\dfn\underbrace{\EX\AX\cdots\mathsf{P}\X}_n\AX\EX\bigwedge_{i=1}^{n}\EF x_{i},
\end{align*}
where $\mathsf{P}=\E$ if $n$ is odd and $\mathsf{P}=\A$ if $n$ is even. 
We define the reduction as $f\colon\langle\varphi\rangle\mapsto\langle \kripke,T,\psi\rangle$. 

Figure~\ref{fig:example-pspace} shows an example of the reduction for the instance \[\exists x_{1}\forall x_{2}\exists x_{3}(x_{1}\lor\lnot{x_{2}}\lor\lnot{x_{3}})\land(\lnot{x_{1}}\lor x_{2}\lor x_{3})\land(\lnot{x_{1}}\lor\lnot{x_{2}}\lor\lnot{x_{3}}).\] Note that this formula is a valid $\QBF$ and hence belongs to $\QBFVAL$. 
The left three branching systems choose the values of the $x_{i}$'s. 
A decision for the left (right, resp.) path should be understood as setting the variable~$x_{i}$ to the truth value 1 (0, resp.).

\begin{figure*}
	\Description[Construction for hardness proof.]{}
	\centering
	\resizebox{\linewidth}{!}{\usetikzlibrary{decorations.pathreplacing}

\begin{tikzpicture}[c/.style={circle,fill=black,inner sep=0mm,minimum width=2mm},x=.85cm,y=1cm,scale=1.1]

  \begin{scope}[xshift = -1cm]
   \foreach \x/\y/\n/\l in 
    {0.25/0/0/$w_{1}^{x_{1}}$,
     -.5/-1/l1/$w_{2,1}^{x_{1}}$,
     1/-1/r1/$w_{2,2}^{x_{1}}$,
     -.5/-2/l2/$w_{3,1}^{x_{1}}$,
     1/-2/r2/$w_{3,2}^{x_{1}}$,
     -.5/-3/l3/$w_{n+1,1}^{x_{1}}$,
     1/-3/r3/$w_{n+1,2}^{x_{1}}$,
     -.5/-4/l4/$w_{n+2,1}^{x_{1}}$,
     1/-4/r4/$w_{n+2,2}^{x_{1}}$,
     -.5/-5/l5/$\substack{x_{1}\\{ \vphantom{\int\limits^x}\smash\vdots}\\x_{n}}$,
     1/-5/r5/$\substack{x_{2}\\{ \vphantom{\int\limits^x}\smash\vdots}\\x_{n}}$,
     -.5/-6/l6/$\substack{x_{2}\\{ \vphantom{\int\limits^x}\smash\vdots}\\x_{n}}$,
     1/-6/r6/$\substack{x_{1}\\{ \vphantom{\int\limits^x}\smash\vdots}\\x_{n}}$}
    \node[c,label={[xshift=1.3mm]180:\l}] (a\n) at (\x,\y) {};
  
   \foreach \f/\t in {0/l1,0/r1,l1/l2,l3/l4,l4/l5,l5/l6,r1/r2,r3/r4,r4/r5,r5/r6}{
    \path[-stealth',black] (a\f) edge (a\t);
   }
   
   \path[-stealth',dotted,black] (ar2) edge (ar3);
   \path[-stealth',dotted,black] (al2) edge (al3);
  
   \draw [decorate,decoration={brace,amplitude=5pt,mirror,raise=4pt},yshift=0pt]
  (-1.5,-6.5) -- (1.2,-6.5) node [black,midway,yshift=-1.6em] {$\kripke_{x_1}$};
  
   \foreach \x/\y/\n/\l in 
    {3.25/0/0/$w_{1}^{x_{2}}$,
     3.25/-1/l1/$w_{2}^{x_{2}}$,
     2.5/-2/l2/$w_{3,1}^{x_{2}}$,
     4/-2/r2/$w_{3,2}^{x_{2}}$,
     2.5/-3/l3/$w_{n+1,1}^{x_{2}}$,
     4/-3/r3/$w_{n+1,2}^{x_{2}}$,
     2.5/-4/l4/$w_{n+2,1}^{x_{2}}$,
     4/-4/r4/$w_{n+2,2}^{x_{2}}$,
     2.5/-5/l5/$\substack{x_{1}\\{ \vphantom{\int\limits^x}\smash\vdots}\\x_{n}}$,
     4/-5/r5/$\substack{x_{1}\\x_{3}\\{ \vphantom{\int\limits^x}\smash\vdots}\\x_{n}}$,
     2.5/-6/l6/$\substack{x_{1}\\x_{3}\\{ \vphantom{\int\limits^x}\smash\vdots}\\x_{n}}$,
     4/-6/r6/$\substack{x_{1}\\{ \vphantom{\int\limits^x}\smash\vdots}\\x_{n}}$}
    \node[c,label={[xshift=1.3mm]180:\l}] (b\n) at (\x,\y) {};

   \foreach \f/\t in {0/l1,l1/l2,l1/r2,l3/l4,l4/l5,l5/l6,r3/r4,r4/r5,r5/r6}{
    \path[-stealth',black] (b\f) edge (b\t);
   }

   \path[-stealth',dotted,black] (br2) edge (br3);
   \path[-stealth',dotted,black] (bl2) edge (bl3);
  
   \draw [decorate,decoration={brace,amplitude=5pt,mirror,raise=4pt},yshift=0pt]
  (1.5,-6.5) -- (4.2,-6.5) node [black,midway,yshift=-1.6em] {$\kripke_{x_2}$};
   
   \foreach \x in {al6,ar6,bl6,br6}{ 
    \path[-stealth',black] (\x) edge[>=stealth',loop below,] (\x);
   } 
  \end{scope}
  
   \node at (4,-1) {$\cdots$};
  
   \foreach \x/\y/\n/\l in 
    {5.75/0/0/$w_{1}^{x_{n}}$,
    5.75/-1/l1/$w_{2}^{x_{n}}$,
    5.75/-2/l2/$w_{3}^{x_{n}}$,
    5/-3/l3/$w_{n+1,1}^{x_{n}}$,
    6.5/-3/r3/$w_{n+1,2}^{x_{n}}$,
    5/-4/l4/$w_{n+2,1}^{x_{n}}$,
    6.5/-4/r4/$w_{n+2,2}^{x_{n}}$,
    5/-5/l5/$\substack{x_{1}\\{ \vphantom{\int\limits^x}\smash\vdots}\\x_{n}}$,
    6.5/-5/r5/$\substack{x_{1}\\{ \vphantom{\int\limits^x}\smash\vdots}\\x_{n\!-\!1}}$,
    5/-6/l6/$\substack{x_{1}\\{ \vphantom{\int\limits^x}\smash\vdots}\\x_{n\!-\!1}}$,
    6.5/-6/r6/$\substack{x_{1}\\{ \vphantom{\int\limits^x}\smash\vdots}\\x_{n}}$}
    \node[c,label={[xshift=1.3mm]180:\l}] (c\n) at (\x,\y) {};
  
   \foreach \f/\t in {0/l1,l1/l2,l3/l4,l4/l5,l5/l6,r3/r4,r4/r5,r5/r6}{
    \path[-stealth',black] (c\f) edge (c\t);
   }
  
   \path[-stealth',dotted,black] (cl2) edge (cr3);
   \path[-stealth',dotted,black] (cl2) edge (cl3);
  
   \draw [decorate,decoration={brace,amplitude=5pt,mirror,raise=4pt},yshift=0pt]
  (4,-6.5) -- (6.7,-6.5) node [black,midway,yshift=-1.6em] {$\kripke_{x_n}$};
  
  \draw [decorate,decoration={brace,amplitude=10pt,mirror,raise=4pt},yshift=0pt]
  (6.7,-3.2) -- (6.7,0.2) node [black,midway,xshift=0.8cm,text width=2cm,anchor=west] {quantification\newline of variables};
  
   \foreach \x/\y/\n/\l in 
   {12/0/0/$w_{1}^{c}\;\;$,
   12/-1/1/$w_{2}^{c}\;\;$,
   12/-2/2/$w_{3}^{c}\;\;$,
   12/-3/3/$w_{n+1}^{c}\;\;$,
   9.5/-4/l4/$w^{c_{1}}\;$,
   14.5/-4/r4/$w^{c_{m}}\;\;$,
   8/-5/ll5/$w_{1,1,1}^{c_{1}}$,
   9.5/-5/lm5/$w_{1,2,1}^{c_{1}}$,
   11/-5/lr5/$w_{1,3,1}^{c_{1}}$,
   8/-6/ll6/$w_{1,1,2}^{c_{1}}$,
   9.5/-6/lm6/$w_{1,2,2}^{c_{1}}$,
   11/-6/lr6/$w_{1,3,2}^{c_{1}}$,
   13/-5/rl5/$w_{m,1,1}^{c_{m}}$,
   14.5/-5/rm5/$w_{m,2,1}^{c_{m}}$,
   16/-5/rr5/$w_{m,3,1}^{c_{m}}$,
   13/-6/rl6/$w_{m,1,2}^{c_{m}}$,
   14.5/-6/rm6/$w_{m,2,2}^{c_{m}}$,
   16/-6/rr6/$w_{m,3,2}^{c_{m}}$}
    \node[c,label={[xshift=1.3mm]180:\l}] (d\n) at (\x,\y) {};
  
   \foreach \f/\t in {0/1,1/2,3/l4,3/r4,
     l4/lm5,ll5/ll6,lm5/lm6,lr5/lr6,%
    r4/rm5,rl5/rl6,rm5/rm6,rr5/rr6}{
    \path[-stealth',black] (d\f) edge (d\t);
   }
   
   \path[-stealth',dotted,black] (d2) edge (d3);

   \foreach \x in {cl6,cr6,dll6,dlm6,dlr6,drl6,drm6,drr6}{ 
    \path[-stealth',black] (\x) edge[>=stealth',loop below,] (\x);
   }
   
   \foreach \f/\t in {l4/ll5,r4/rl5}{
    \path[-stealth',black] (d\f) edge[bend right] (d\t);
   }
   
   \foreach \f/\t in {l4/lr5,r4/rr5}{
    \path[-stealth',black] (d\f) edge[bend left] (d\t);
   }
  
   \node at (12,-4) {$\cdots$};
   
   \draw[black,dashed,rounded corners] (-2.5,-.3) rectangle (12.3,.3);
   \node[anchor=west] at (12.5,0) {Team $T$};
   
   \draw [decorate,decoration={brace,amplitude=5pt,mirror,raise=4pt},yshift=0pt]
  (7,-6.5) -- (16.2,-6.5) node [black,midway,yshift=-1.6em] {$\kripke_{\varphi}$};
   
  \end{tikzpicture}}
	
	\caption{General view on the created Kripke structure in the proof of Lemma~\ref{lem:mcs-pspacehard}. For $\kripke_{x_i}$ choosing the left path of the structure corresponds to setting $x_i$ to 1, choosing the right path to setting $x_i$ to 0. Synchronicity of the variable Kripke structures $\kripke_{x_i}$ and the structure $\kripke_\varphi$ ensures choosing consistent values for the $x_i$'s while satisfying all clauses.}
	\label{fig:generalview-pspace-struc}
	\end{figure*}

For the correctness of the reduction we need to show that $\varphi\in\QBFVAL$ iff $f(\varphi)\in\TMC(\TeamCTL)$.

``$\Rightarrow$'': Let $\varphi\in\QBFVAL$, $\varphi=\exists x_{1}\forall x_{2}\cdots\Game x_{n}\chi$, where $\chi=\bigwedge_{i=1}^{m}\bigvee_{j=1}^{3}\ell_{i,j}$. 
Thus, there exists a subtree $S$ of the full binary assignment tree on the variable set, obtained by encoding the existential choices of the quantifier prefix $\exists x_{1}\forall x_{2}\cdots\Game x_{n}$, such that $s\models \chi$ for every leaf $s$ in the subtree $S$. 
The tree $S$ specifies the choices for each existential variable depending on the preceeding universal choices.

Now we will prove that $\kripke,T\models\psi$, where $T=\mset{w_{1}^{x_{1}},\dots,w_{1}^{x_{n}},w_{1}^{c}}$. 
For $w_{1}^{c}$ there is no choice in the next $n$ steps defined by the prefix of $\varphi$. 
For $w_{1}^{x_{1}},\dots,w_{1}^{x_{n}}$, the successors $w^{x_i}_{i+1,1}$ or $w^{x_i}_{i+1,2}$ are determined depending on the subtree $S$ and its respective path inside to a leaf, i.e., the corresponding assignment $s$. 
If $x_i$ is mapped to $1$ the choice in step $i$ of this prefix from $w^{x_{i}}_{i}$ is the successor world $w^{x_{i}}_{i+1,1}$. 
If $x_{i}$ is mapped to $0$ the choice is $w^{x_{i}}_{i+1,2}$ instead. 

Now fix an arbitrary path in the subtree $S$ which specifies a particular leaf $s$. 
Let us write $s(x)$ for the value of $x$ in the leaf $s$. 
Furthermore, recall that $s\models\chi$, so all clauses in $\chi$ are satisfied by $s$. 
Let us start with the evaluation of $\kripke,T\models\psi$.
After $n$ steps, the current team $T'$ then is $\mset{w^{c}_{n+1}}\cup\mset{w^{x_{i}}_{n+1,1}\mid s(x_{i})=1,1\leq i\leq n}\cup\mset{w^{x_{i}}_{n+1,2}\mid s(x_{i})=0,1\leq i\leq n}$ (note that now the team agrees with the assignment in $s$). 
In the next step, an $\AX$ in $\psi$ has to be evaluated. 
Hence, from any $T'$-compatible function $f$ the successor team $T'[f,1]$ has to be considered. 
Choose an arbitrary such $f$ in the next step which maps $w^{c}_{n+1}$ to some $w^{c_{j}}$ for $1\leq j\leq m$ (meaning that any clause in $\chi$ has to be satisfied) while for the $\kripke_{x_i}$-vertices one proceeds one step to $w^{x_i}_{n+2,k}$ depending on the respective $k\in\{1,2\}$. 
As a result, $T'[f,1]$ is $\mset{w^{c_{j}}}\cup\mset{w^{x_{i}}_{n+2,1}\mid s(x_{i})=1,1\leq i\leq n}\cup\mset{w^{x_{i}}_{n+2,2}\mid s(x_{i})=0,1\leq i\leq n}$. 
Now continuing with an $\EX$ in $\psi$, the team member $w^{c_j}$ has to select a compatible function $f$ which chooses some successor world. 
This choice is made according to the assignment $s$ under consideration. 
From $s\models\chi$, we know that for the $j$th clause $C_{j}$, we have that $s\models C_{j}$, hence, at least one literal is satisfied in that clause.
W.l.o.g.\ assume that in clause $C_{j}$ the literal $\ell_{j,r}$ (for some $r\in\{1,2,3\}$) is satisfied by $s(\ell_{j,r})=1$. 
Then, the function $f$ maps to the world $w^{c_{j}}_{j,r,1}$ as a successor from $w^{c_{j}}$. 
For the $\kripke_{x_i}$ vertices $w^{x_{i}}_{n+2,k}$ with $k\in\{1,2\}$, we again have no choice and proceed to $w^{x_{i}}_{n+3,k}$. 

Now, the remainder of $\psi$ which is $\bigwedge_{i=1}^{n}\EF x_{i}$ has to be satisfied. 
Observe that for the variable team members $w^{x_{i}}_{n+3,1}$ (i.e., those that are set to $1$ by $s$) all variables are labeled in the current world, but in the successor world $w^{x_{i}}_{n+4,1}$ the variable $x_i$ is not labeled, i.e., $x_{i}\notin \eta(w^{x_{i}}_{n+4,1})$. 
The opposite is true for the $w^{x_{i}}_{n+3,2}$ worlds (i.e., those that are set to $0$ by $s$) where $x_{i}\notin \eta(w^{x_{i}}_{n+3,2})$ but all variables are labeled in the successor world $x_{i}\in \eta(w^{x_{i}}_{n+4,2})$. 
Depending on what kind of literal $\ell_{j,r}$ is, the corresponding variable $x_i$ is labeled either in the world $w^{c_j}_{j,r,1}$ or in the world $w^{c_j}_{j,r,2}$. 
If $\ell_{j,r}=x_i$, then $x_i\in\eta(w^{c_j}_{j,r,1})$ and if $\ell_{j,r}=\lnot x_i$, then $x_i\in\eta(w^{c_j}_{j,r,2})$. 
As $s(\ell_{j,r})=1$, it follows that if $\ell_{j,r}=x_i$, then $s(x_i)=1$ and hence $x_i\in\eta(w^{c_j}_{j,r,1})$ and $x_i\in\eta(w^{x_i}_{n+3,1})$ (similarly, if $\ell_{j,r}=\lnot x_i$, then $s(x_i)=0$ and hence $x_i\in\eta(w^{c_j}_{j,r,2})$ and $x_i\in\eta(w^{x_i}_{n+4,2})$). 
Thus, in the first case, $\EF x_i$ is satisfied for all $1\leq i\leq n$, by choosing the $k$ in the $\EF$ to be $0$ and thereby proceeding in $w^{c_j}_{j,r,1}$, resp., in $w^{x_i}_{n+3,1}$ for all $1\leq i\leq n$, i.e., $T[f,0]$ is chosen to satisfy $x_i$. 
This is correct, as in this case all variables are labeled in these worlds. 
For the second case, the choice is $T[f,1]$ and this is also correct, as in this case all variables are labeled in the successor worlds. 

As we considered an arbitrary path in $S$, and arbitrary functions $f$ for the branching choices from $w^c_{n+1}$, it follows that $\kripke,T\models\psi$.

``$\Leftarrow$'': Let $\kripke,T\models\psi$, where $T=\mset{w_{1}^{x_{1}},\dots,w_{1}^{x_{n}},w_{1}^{c}}$. 
Then, by the semantics of $\psi$, there is a sequence of $n$ choices for the $\EX$ and $\AX$ operators in the prefix of $\psi$ such that for any choice of the $\AX$ operators, the formula $\AX\EX\bigwedge_{i=1}^{n}\EF x_{i}$ is satisfied in the end. 
Let us denote by a fixed binary vector $s\in\{0,1\}^{n}$ a possible sequence of choices for the $\EX$/$\AX$ operators in the prefix of $\psi$. 
To be more precise, if the $i$th choice is to proceed to $w^{x_i}_{i+1,1}$, then set the $i$th bit of $s$ to $s_i=1$ and otherwise (where one proceeds to $w^{x_i}_{i+1,2}$) to $s_i=0$. 
Irrespectively how the choices for the $\AX$ operators are made, in the end $\AX\EX\bigwedge_{i=1}^{n}\EF x_{i}$ has to be satisfied by assumption. 

The next $\AX$-choice allows to choose any $w^{c_j}$ for $1\leq j\leq m$. 
Then, the next $\EX$-choice allows to choose any $w^{c_j}_{j,r,1}$ or $w^{c_j}_{j,r,2}$ for $1\leq r\leq 3$. 
Finally, the formula $\bigwedge_{i=1}^{n}\EF x_{i}$ has to be satisfied. 
These $n$ $\EF$-choices have to be made consistently with respect to the states $w^{x_i}_{n+3,1}$ and $w^{x_i}_{n+3,2}$ for $1\leq i\leq n$ (which have been decided before already by the choices encoded in the vector $s$). 
Here, consistent means that choices that lead to $w^{x_i}_{n+3,1}$ have to choose $k=0$ in the $\EF$ operator and those that lead to $w^{x_i}_{n+3,2}$ have to choose $k=1$ in the $\EF$ operator. 
This is due to the labeling of the variables in these worlds. 
From $s$ we can construct an assignment $s'$ in the obvious way: $s'(x_i)=s_i$ for all $1\leq i\leq n$. 
This assignment $s'$ is indeed well-defined. 
Moreover, depending on the $w^{c_j}_{j,r,1}$ or $w^{c_j}_{j,r,2}$ (for $1\leq r\leq 3$) that has been decided, we can deduce that the corresponding literal $\ell_{j,r}$ is satisfied by $s'$.
As $\bigwedge_{i=1}^{n}\EF x_{i}$ is satisfied, irrespective on the choice of $w^{c_j}$, we can deduce that $s'\models\bigvee_{j=1}^{3}\ell_{i,j}$ for all $1\leq j\leq m$. 
From this we can conclude that $s'\models\chi$, and as $s$ was determined according to the $\EX/\AX$-prefix of length $n$ of $\psi$, finally, $\varphi\in\QBFVAL$. 
\end{proof}

\begin{figure}
	\Description[An example structure for the construction in the proof.]{}
	\centering
\begin{tikzpicture}[c/.style={circle,fill=black,inner sep=0mm,minimum width=1.5mm},x=.7cm,y=.8cm]

 \foreach \x/\y/\n/\l in {0/0/0/,-.5/-1/l1/,.5/-1/r1/,-.5/-2/l2/,.5/-2/r2/,-.5/-3/l3/,.5/-3/r3/,-.5/-4/l4/,.5/-4/r4/,%
 	-.5/-5/l5/$\substack{x_{1}\\x_{2}\\x_{3}}$,.5/-5/r5/$\substack{x_{2}\\x_{3}}$,%
	-.5/-6/l6/$\substack{x_{2}\\x_{3}}$,.5/-6/r6/$\substack{x_{1}\\x_{2}\\x_{3}}$}
  \node[c,label={[xshift=1.3mm]180:\footnotesize\l}] (a\n) at (\x,\y) {};

 \foreach \x/\y/\n/\l in {2.5/0/0/,2.5/-1/l1/,2/-2/l2/,3/-2/r2/,2/-3/l3/,3/-3/r3/,2/-4/l4/,3/-4/r4/,%
 	2/-5/l5/$\substack{x_{1}\\x_{2}\\x_{3}}$,3/-5/r5/$\substack{x_{1}\\x_{3}}$,%
	2/-6/l6/$\substack{x_{1}\\x_{3}}$,3/-6/r6/$\substack{x_{1}\\x_{2}\\x_{3}}$}
  \node[c,label={[xshift=1.3mm]180:\footnotesize\l}] (b\n) at (\x,\y) {};

 \foreach \x/\y/\n/\l in {5/0/0/,5/-1/l1/,5/-2/l2/,4.5/-3/l3/,5.5/-3/r3/,4.5/-4/l4/,5.5/-4/r4/,%
 	4.5/-5/l5/$\substack{x_{1}\\x_{2}\\x_{3}}$,5.5/-5/r5/$\substack{x_{1}\\x_{2}}$,%
	4.5/-6/l6/$\substack{x_{1}\\x_{2}}$,5.5/-6/r6/$\substack{x_{1}\\x_{2}\\x_{3}}$}
  \node[c,label={[xshift=1.3mm]180:\footnotesize\l}] (c\n) at (\x,\y) {};

 \foreach \f/\t in {0/l1,0/r1,l1/l2,l2/l3,l3/l4,l4/l5,l5/l6,r1/r2,r2/r3,r3/r4,r4/r5,r5/r6}{
  \path[-stealth',black] (a\f) edge (a\t);
 }
 \foreach \f/\t in {0/l1,l1/l2,l1/r2,l2/l3,l3/l4,l4/l5,l5/l6,r2/r3,r3/r4,r4/r5,r5/r6}{
  \path[-stealth',black] (b\f) edge (b\t);
 }
 \foreach \f/\t in {0/l1,l1/l2,l2/l3,l2/r3,l3/l4,l4/l5,l5/l6,r3/r4,r4/r5,r5/r6}{
  \path[-stealth',black] (c\f) edge (c\t);
 }

 \foreach \x/\y/\n/\l in 
 {12/0/0/,
 12/-1/1/,
 12/-2/2/,
 12/-3/3/,
 8.5/-4/l4/,
 12/-4/m4/,
 15.5/-4/r4/,%
 7.5/-5/ll5/$\substack{x_{1}\\x_{2}\\x_{3}}$,
 8.5/-5/lm5/$\substack{x_{1}\\x_{3}}$,
 9.5/-5/lr5/$\substack{x_{1}\\x_{2}}$,
 7.5/-6/ll6/$\substack{x_{2}\\x_{3}}$,
 8.5/-6/lm6/$\substack{x_{1}\\x_{2}\\x_{3}}$,
 9.5/-6/lr6/$\substack{x_{1}\\x_{2}\\x_{3}}$,
 11/-5/ml5/$\substack{x_{2}\\x_{3}}$,
 12/-5/mm5/$\substack{x_{1}\\x_{2}\\x_{3}}$,
 13/-5/mr5/$\substack{x_{1}\\x_{2}\\x_{3}}$,
 11/-6/ml6/$\substack{x_{1}\\x_{2}\\x_{3}}$,
 12/-6/mm6/$\substack{x_{1}\\x_{3}}$,
 13/-6/mr6/$\substack{x_{1}\\x_{2}}$,
 14.5/-5/rl5/$\substack{x_{2}\\x_{3}}$,
 15.5/-5/rm5/$\substack{x_{1}\\x_{3}}$,
 16.5/-5/rr5/$\substack{x_{1}\\x_{2}}$,
 14.5/-6/rl6/$\substack{x_{1}\\x_{2}\\x_{3}}$,
 15.5/-6/rm6/$\substack{x_{1}\\x_{2}\\x_{3}}$,
 16.5/-6/rr6/$\substack{x_{1}\\x_{2}\\x_{3}}$}
  \node[c,label={[xshift=1.3mm]180:\footnotesize\l}] (d\n) at (\x,\y) {};

 \foreach \f/\t in {0/1,1/2,2/3,3/l4,3/m4,3/r4,%
 	l4/lm5,ll5/ll6,lm5/lm6,lr5/lr6,%
	m4/mm5,ml5/ml6,mm5/mm6,mr5/mr6,%
	r4/rm5,rl5/rl6,rm5/rm6,rr5/rr6}{
  \path[-stealth',black] (d\f) edge (d\t);
 }
 
 \foreach \x in {al6,ar6,bl6,br6,cl6,cr6,dll6,dlm6,dlr6,dml6,dmm6,dmr6,drl6,drm6,drr6}{ 
  \path[-stealth',black] (\x) edge[>=stealth',loop below,] (\x);
 }
 
 \foreach \f/\t in {l4/ll5,m4/ml5,r4/rl5}{
  \path[-stealth',black] (d\f) edge[bend right] (d\t);
 }
 
 \foreach \f/\t in {l4/lr5,m4/mr5,r4/rr5}{
  \path[-stealth',black] (d\f) edge[bend left] (d\t);
 }
 
 \draw[black,dashed,rounded corners] (-.3,-.3) rectangle (12.3,.3);
 \node[anchor=west] at (12.5,0) {Team $T$};
 
 \node[text width=2cm,align=center] at (8,-3) {\footnotesize agreed assignment};
 \draw[black,dashed] (-1,-3) -- (7.5,-3);
 
\end{tikzpicture}
 \caption{Example structure built in proof of Lemma~\ref{lem:mcs-pspacehard} for the qBf $\exists x_1\forall x_2\exists x_3(x_1\lor\lnot x_2\lor \lnot x_3)\land(\lnot x_1\lor x_2\lor x_3)\land(\lnot x_1\lor \lnot x_2\lor \lnot x_3)$.}\label{fig:example-pspace}
\end{figure}

Now, we will turn towards proving the $\PSPACE$ upper bound for $\TMC(\TeamCTL)$. 
Before, we will need a definition and two auxiliary lemmas. 
Given two teams $T_1,T_2$ of a Kripke structure $\kripke$, we say $T_2$ is a \emph{successor team} of $T_1$ if there exists a $T_1$-compatible function $f$ such that $T_2\meq T_1[f,1]$.

\begin{lemma}\label{lem:aux-number-of-successor-teams}
	Let $\kripke$ be a Kripke structure and $T$ be a team of $\kripke$. 
	Furthermore, let $T_1,T_2,\dots,T_m$ be a sequence of teams such that $T_1 = T$ and  such that $T_{i+1}$ is a successor team of $T_i$ for all $i$. 
	Then the largest $m$ such that there exists no $1\leq i\neq j\leq m$ with $T_i\meq T_j$ is bounded from above by $|W|^{|T|}$. 
	In particular, if there is a path of length $|W|^{|T|}+1$, then there is an infinite path.
\end{lemma}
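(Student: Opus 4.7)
The plan is a straightforward counting argument with two ingredients. First, I would verify that the successor team relation preserves cardinality. By inspection of the definition, if $T_{i+1}\meq T_i[f,1]$ for some $T_i$-compatible $f$, then $T_i[f,1] = \{(i,f((i,x))(1)) \mid (i,x)\in T_i\}$ uses exactly the same index set as $T_i$, and $\meq$ is induced by a bijection of indices; hence $|T_{i+1}| = |T_i|$. A trivial induction on $i$ then shows that all $T_i$ in the sequence share a common cardinality, which (taking $T_1 = T$ as implicit in the statement) equals $|T|$.

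Second, I would bound the number of $\meq$-equivalence classes of teams of cardinality exactly $n$ over the world set $W$. Since $\meq$ quotients out precisely the identity of indices while preserving the assignment of worlds to slots, each equivalence class is determined by the underlying (index-free) multiset of worlds. Sending each function from an $n$-element index set to $W$ to the multiset of its values yields a surjection from $W^n$ onto the set of multisets of size $n$ over $W$, so the number of equivalence classes is at most $|W|^n$.

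Combining these two facts, the sequence $T_1,\dots,T_m$ consists of pairwise non-$\meq$-equivalent teams, each of cardinality $|T|$. Consequently $m$ cannot exceed the number of $\meq$-equivalence classes of teams of that cardinality, which is at most $|W|^{|T|}$, proving the claim. The argument has no substantive obstacle; the only subtlety is to be explicit about how $\meq$ interacts with the indexed presentation of multisets so that the counting estimate can be read off cleanly. A sharper bound such as $\binom{|W|+|T|-1}{|T|}$ would follow by counting multisets directly, but the stated bound is enough for the intended application.
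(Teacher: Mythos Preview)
Your proposal is correct and follows essentially the same counting approach as the paper. The paper's proof is a terse one-liner observing that teams in the sequence correspond to functions from $T$ to $W$, of which there are $|W|^{|T|}$; you unpack the same idea more carefully by making explicit both the cardinality preservation along the successor relation and the surjection from $W^{|T|}$ onto the $\meq$-classes.
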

\begin{proof}
	The number of different successor teams (up-to $\meq$) is bounded by the number of functions $f\colon T\to W$. 
	As $|W|^{|T|}$ is the number of all functions from $T$ to $W$, the claim follows.
\end{proof}

The following lemma shows that the successor relation over teams can be decided in polymonial time. Here, we measure the size of a team by the cardinality of the multiset and the size of the Kripke structure by the number of its worlds.

\begin{lemma}\label{lem:succ-check-for-teams-in-p}	
The question whether a given team~$T_2$ is a successor team of a given team~$T_1$ (w.r.t.\ a given Kripke structure~$\kripke$) can be decided in polynomial time in $|T_1|+|T_2|+|\kripke|$.
\end{lemma}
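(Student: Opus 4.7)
The plan is to reduce the problem to perfect bipartite matching, which is solvable in polynomial time (e.g., by the Hopcroft--Karp algorithm).

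First I would unfold the definition of a successor team. By definition, $T_2$ is a successor of $T_1$ iff there exists a $T_1$-compatible function $f$ with $T_2 \meq T_1[f,1]$. Since only the values $f((i,x))(1)$ (i.e., the immediate successors on the chosen paths) influence $T_1[f,1]$, and since $R$ is left-total (so every chosen successor extends to some infinite path), the existence of such an $f$ is equivalent to the existence of a function $g \colon T_1 \to W$ satisfying $(x, g((i,x))) \in R$ for every $(i,x) \in T_1$, together with the multiset equality $T_2 \meq \{\,(i, g((i,x))) \mid (i,x) \in T_1\,\}$. The latter, by the definition of $\meq$, is equivalent to the existence of a bijection (as indexed multisets) $\beta \colon T_1 \to T_2$ such that whenever $\beta((i,x)) = (j,y)$, the value $y$ equals $g((i,x))$, and in particular $(x,y) \in R$.

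Next I would cast this as a bipartite matching problem. Form the bipartite graph $G$ whose left vertex set is $T_1$, whose right vertex set is $T_2$, and which has an edge between $(i,x) \in T_1$ and $(j,y) \in T_2$ exactly when $(x,y) \in R$. Then $T_2$ is a successor team of $T_1$ if and only if $G$ admits a perfect matching: a perfect matching supplies the required bijection $\beta$ (and hence $g$), and conversely any valid $g$ together with the $\meq$-witnessing permutation yields a perfect matching. A necessary preliminary check is $|T_1| = |T_2|$; if this fails, reject immediately.

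Finally I would analyse the complexity. Constructing $G$ takes time $O(|T_1|\cdot|T_2|)$, with each edge test being a single lookup in $R$ and thus doable in time polynomial in $|\kripke|$. Testing for the existence of a perfect matching in a bipartite graph is in $\P$ by any standard algorithm. The overall running time is therefore polynomial in $|T_1|+|T_2|+|\kripke|$, as required. The only subtle point, which I would make explicit, is that the indices attached to repeated worlds in the multisets must be treated as distinct vertices of $G$; this is precisely what enforces the multiset (rather than set) semantics of $\meq$ and is the reason a bipartite matching, rather than a mere existence-of-successors check, is needed.
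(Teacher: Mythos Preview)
Your reduction to bipartite perfect matching is correct and is genuinely different from the paper's argument. The paper claims that $T_2$ is a successor team of $T_1$ iff (i)~$|T_1|=|T_2|$, (ii)~every $w\in T_1$ has an $R$-successor in $T_2$, and (iii)~every $w\in T_2$ has an $R$-predecessor in $T_1$, and then checks these three conditions in polynomial time.

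Your approach is in fact the right one: the paper's three-condition characterisation is false for multiset teams. Take $W=\{a,b,c,d\}$ with $R=\{(a,c),(b,c),(b,d),(c,c),(d,d)\}$, $T_1=\mset{a,a,b}$, and $T_2=\mset{c,d,d}$. All three of the paper's conditions hold (sizes match; $a$ and $b$ each have a successor in the support of $T_2$; $c$ and $d$ each have a predecessor in the support of $T_1$), yet there is no $T_1$-compatible $f$ with $T_2\meq T_1[f,1]$, because the two copies of $a$ can only be sent to $c$, of which $T_2$ contains a single copy. Your bipartite-matching formulation, with one vertex per indexed multiset element, is exactly what enforces the multiset semantics correctly, and your observation about left-totality of $R$ (so that any chosen successor extends to a full path) is the right justification for restricting attention to the first step of $f$. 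So your argument both differs from and repairs the paper's.
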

\begin{proof}
	Let $T_1=\mset{t_{1},\dots,t_{n}}$ and $T_2=\mset{t_{1}',\dots,t_{n}'}$. 
	Then, $T_2$ is a successor team of $T_1$ if and only if the following is true
	\begin{enumerate}
		\item $|T_1|=|T_2|$, 
		\item for all $w\in T_1$ there exists a $w'\in T_2$ such that $wRw'$, and
		\item for all $w\in T_2$ there exists a $w'\in T_1$ such that $w'Rw$.
	\end{enumerate}
	The first two items alone do not suffice because some $z$ in $T_2$ could be reached from two different $y$'s in $T_1$.
	The last item ensures that there is no $x$ in $T_2$ that has no $R$-predecessor in $T_1$.

	The first item can be checked in time $O(|T_1|)$. 
	The second and third item can both be checked in polynomial time in~$|T_1|\cdot |T_2|\cdot|\kripke|$. 
	Together this is polynomial time in the input length.
\end{proof}

\SetKw{proce}{Procedure}
	\SetKwFunction{pPSearch}{EG-PathSearch}
	\SetKwFunction{pmc}{MC}
	\LinesNumbered
	\begin{algorithm}\caption{Algorithm for EG-PathSearch Procedure.}\label{algo:PathSearch}
		\small
		\proce \pPSearch{Kripke structure $\kripke$, team $T_1$, team $T_2$ with $|T_1|=|T_2|$, formula $\varphi$, integer $c$}\;
		\lIf{$c=0$}{\Return $(T_1\meq T_2)\land \pmc(\kripke,T _1,\varphi)$}
		\lIf{$c=1$}{\Return $(T_2$ is a successor team of $T_1)\land \pmc(\kripke,T _1,\varphi)\land \pmc(\kripke,T _2,\varphi)$}
		existentially branch on all teams~$T_{\text{mid}}$ of $\kripke$ with $|T_{\text{mid}}|=|T_1|=|T_2|$ and\linebreak \Return $\pPSearch(\kripke,T _1,T_{\text{mid}},\varphi,\lfloor c/2\rfloor)\land\pPSearch(\kripke,T _{\text{mid}},T_2,\varphi,\lceil c/2\rceil)$
	\end{algorithm}

\SetKw{proce}{Procedure}
	\SetKwFunction{pnPSearch}{AF-PathSearch}
	\LinesNumbered
	\begin{algorithm}\caption{Algorithm for AF-PathSearch Procedure.}\label{algo:NegPathSearch}
		\small
		\proce \pnPSearch{Kripke structure $\kripke$, team $T_1$, team $T_2$ with $|T_1|=|T_2|$, formula $\varphi$, integer $c$}\;
		\lIf{$c=0$}{\Return $(T_1\not\meq T_2)\varovee \pmc(\kripke,T _1,\varphi)$}
		\lIf{$c=1$}{\Return $(T_2$ is \textbf{not} a successor team of $T_1)\varovee \pmc(\kripke,T _1,\varphi)\varovee \pmc(\kripke,T _2,\varphi)$}
		universally branch on all teams~$T_{\text{mid}}$ of $\kripke$ with $|T_{\text{mid}}|=|T_1|=|T_2|$ and\linebreak \Return $\pnPSearch(\kripke,T _1,T_{\text{mid}},\varphi,\lfloor c/2\rfloor)\varovee\pnPSearch(\kripke,T_{\text{mid}}, T_2,\varphi,\lfloor c/2\rfloor)$
	\end{algorithm}

	\LinesNumbered
	\begin{algorithm}\caption{Algorithm for $\TMC(\TeamCTL)$.}\label{algo:TMC-CTL}
	\small
	\proce \pmc{Kripke structure $\kripke$, team $T$, formula $\varphi$}\;
			\If{$\varphi = p$ (resp., $\varphi=\lnot p$) and $p\in\PROP$}{
				\leIf{for all $w\in T$ we have that $p\in\eta(w)$ (resp., $p\notin\eta(w)$)}{\Return true}{\Return false}
			}
			\lIf{$\varphi=\psi_1\land\psi_2$}{
				\Return $\pmc(\kripke,T ,\psi_1)\land\pmc(\kripke,T ,\psi_2)$
			}
			\If{$\varphi=\psi_1\lor\psi_2$}{
				existentially branch on all $T_{1}\cup T_{2}=T$ s.t.\ $T_1\cap T_2=\emptyset$\; 
				\Return $\pmc(\kripke,T _{1},\psi_1)\land\pmc(\kripke,T _{2},\psi_2)$
			}
			\If(\tcp*[f]{Lemma~\ref{lem:succ-check-for-teams-in-p} ensures that this in in $\P$}){$\varphi=\EX\psi$}{
				existentially branch on all successors $T'$ of $T$\;\Return $\pmc(\kripke,T ',\psi)$
			}
			\If(\tcp*[f]{Lemma~\ref{lem:succ-check-for-teams-in-p} ensures that this in in $\P$}){$\varphi=\AX\psi$}{
				universally branch on all successors $T'$ of $T$\; 
				\Return $\pmc(\kripke,T ',\psi)$
			}
			\If{$\varphi=\E[\alpha\U\beta]$}{
				existentially branch on path length $c\in[0,|W|^{|T|}]$ \tcp*[r]{Lemma~\ref{lem:aux-number-of-successor-teams}}
				\lIf{$c=0$}{\Return $\pmc(\kripke,T ,\beta)$}
				\Else{
					existentially branch on $T_{\text{end}-1},T_{\text{end}}$ s.t.\ $T_{\text{end}}$ is a successor of $T_{\text{end}-1}$ and $|T|=|T_{\text{end}}|=|T_{\text{end}-1}|$\; 
					\Return $\pmc(\kripke,T _{\text{end}},\beta)\land\pPSearch(\kripke,T ,T_{\text{end}-1},\alpha,c-1)$
				}
			}
			\If{$\varphi=\E[\alpha\R\beta]$}{
				existentially branch on $\{\texttt{no-}\alpha, \texttt{some-}\alpha\}$\;
				\If{\texttt{no-}$\alpha$}{
					existentially branch on $T_{\text{end}}$ with $|T|=|T_{\text{end}}|$\; 
					\Return $\pPSearch(\kripke,T ,T_{\text{end}},\beta,|W|^{|T|}+1)$\tcp*[f]{Lemma~\ref{lem:aux-number-of-successor-teams}, infinite path}
				}
				\ElseIf{\texttt{some-}$\alpha$}{
					existentially branch on path length $c\in[0,|W|^{|T|}]$ \tcp*[r]{Lemma~\ref{lem:aux-number-of-successor-teams}}
						existentially branch on $T_{\text{end}}$ with $|T|=|T_{\text{end}}|$\; 
						\Return $\pmc(\kripke,T_{\text{end}},\alpha)\land\pPSearch(\kripke,T, T_{\text{end}},\beta,c)$
				}
			}
			\If{$\varphi=\A[\alpha\U\beta]$}{
				universally branch on $\{\texttt{no-not-}\alpha,\texttt{some-not-}\alpha\}$\;
				\If{\texttt{no-not-}$\alpha$}{
					universally branch on $T_{\text{end}}$ with $|T|=|T_{\text{end}}|$\; 
					\Return $\pnPSearch(\kripke,T ,T_{\text{end}},\beta,|W|^{|T|}+1)$\tcp*[f]{Lemma~\ref{lem:aux-number-of-successor-teams}, infinite path}
				}
				\ElseIf{\texttt{some-not-}$\alpha$}{
					universally branch on path length $c\in[0,|W|^{|T|}]$ \tcp*[r]{Lemma~\ref{lem:aux-number-of-successor-teams}}
						universally branch on $T_{\text{end}}$ with $|T|=|T_{\text{end}}|$\;
						\Return $\pmc(\kripke,T_{\text{end}},\alpha)\varovee\pmc(\kripke,T_{\text{end}},\beta)\varovee\pnPSearch(\kripke,T, T_{\text{end}},\beta,c)$
				}

			}
			\If{$\varphi=\A[\alpha\R\beta]$}{
				universally branch on path length $c\in[0,|W|^{|T|}]$ \tcp*[r]{Lemma~\ref{lem:aux-number-of-successor-teams}}
				\lIf{$c=0$}{\Return $\pmc(\kripke,T ,\beta)$}
				\Else{universally branch on $T_{\text{end}-1},T_{\text{end}}$ s.t.\ $T_{\text{end}}$ is a successor of $T_{\text{end}-1}$ and $|T|=|T_{\text{end}}|=|T_{\text{end}-1}|$\;  
				\Return $\pmc(\kripke,T _{\text{end}},\beta)\varovee\pnPSearch(\kripke,T ,T_{\text{end}-1},\alpha,c-1)$}
			}
	\end{algorithm}	

\begin{lemma}\label{lem:smcpspace}
 $\TMC(\TeamCTL)$ is in $\PSPACE$.
\end{lemma}
\begin{proof}
	We construct an algorithm that runs in alternating polynomial time. As $\mathrm{APTIME}=\mathrm{PSPACE}$~\cite{ChandraKS81}, this proves the desired upper bound.
	
	Before we come to the model checking algorithm, we want to shortly discuss two subroutines that are used in it.
	The first subroutine (depicted in Algorithm~\ref{algo:PathSearch}) is used in the cases for the existential binary temporal operators, and also makes use of the main algorithm $\pmc$ (depicted in Algorithm~\ref{algo:TMC-CTL}):
	The procedure \pPSearch will recursively determine whether there exists a path of length~$c$ between two given teams such that a given formula~$\varphi$ is satisfied at each team of the path.
	Intuitively, the procedure works similarly as the path search method in the proof of Savitch's Theorem~\cite{DBLP:conf/stoc/Savitch69}
(also, see Sipser's textbook~\cite[Section~8.1]{sipser13}).
	It will be called for path lengths $c$ that are bounded from above by $|W|^{|T|}+1$ (Lemma~\ref{lem:aux-number-of-successor-teams}).
	As every recursion halves the path length, the recursion depth is bounded by $O(\log(|W|^{|T|}))=O(|T|\log|W|)$ and hence is polynomial in the input length.
	
	The second subroutine (depicted in Algorithm~\ref{algo:NegPathSearch}) is used for the universal binary temporal connectives, and also makes  use of the main algorithm~$\pmc$. Intuitively, it is the dual of Algorithm~\ref{algo:PathSearch}, i.e., it will recursively determine wether all paths of length~$c$ between two given teams are such that a given formula~$\varphi$ is satisfied at at least one team of the path. 
	Thus, the subroutine will reject if there is a path of length~$c$ between the two teams such that $\varphi$ is \emph{not} satisfied at all teams of the path.	
	Again, it will be called for path lengths $c$ that are bounded from above by $|W|^{|T|}+1$, which implies that the recursion depth is polynomial in the input length.

	Next, we define the main algorithm.
	Let $\varphi$ be a \TeamCTL-formula, $\kripke$ be a Kripke structure, and $T$ be a team of $\kripke$. 
	Given these, the algorithmic call $\texttt{MC}(\kripke,T ,\varphi)$ returns true if and only if $\kripke,T \models\varphi$. 
	The algorithm is depicted in Algorithm~\ref{algo:TMC-CTL}.
	
	Intuitively, the algorithm is a recursive alternating tableaux algorithm that branches on the operators of the subformulae of $\varphi$. 
	For the existential temporal operators we are searching (using nondeterminism) for paths witnessing the satisfaction while for the universal temporal operators we are searching (using universal branching) for the absence of paths witnessing violation of the formula.
	Note that we do not allow classical negations in our logic, hence we use duality of alternating Turing machines to handle negations.
	
	Notice that the recursion depth of the main algorithm is bounded by the number of subformulae, hence linearly in the input length.
	The recursion depth of the calls to the subroutines is also bounded polynomially, as argued above.
	Also note that the size of each guessed value is polynomial in the input size; for the value of $c$ notice that it is encoded in binary. 
	
	As for the correctness of the algorithm, each step directly implements the corresponding semantics. 
	From Lemma~\ref{lem:succ-check-for-teams-in-p} we know that the successor team check is in $\P$ for the guessed teams.
	The $\U$ ($\R$, resp.) case makes use of Lemma~\ref{lem:aux-number-of-successor-teams} and thereby restricts the path length to an exponential value: if there is a path witnessing satisfaction of a subformula, then there is also one of at most exponential length.
	Dually, if there is no path of at most exponential length witnessing violation of a subformula, then there is no such path at all.

	This overall guarantees polynomial runtime as well as the correctness of the alternating algorithm.	
\end{proof}

\begin{theorem}\label{thm:TMC(CTL)-c}
 $\TMC(\TeamCTL)$ is $\PSPACE$-complete w.r.t.\ $\leqpm$-reductions.
\end{theorem}
\begin{proof}
	By Lemma~\ref{lem:smcpspace} we know that $\TMC(\TeamCTL)$ is in $\PSPACE$. 
	By Lemma~\ref{lem:mcs-pspacehard} we know that $\TMC(\TeamCTL)$ is $\PSPACE$-hard. 
	Thus, $\TMC(\TeamCTL)$ is $\PSPACE$-complete.
\end{proof}

Similar to \LTL, we can consider FO-definable atoms (see Definition~\ref{def:FO-definable}) in the \CTL-setting. Here we restrict the parameters to generalised atoms to be formulae in propositional logic. 
Analogous to the proof of Theorem~\ref{thm:GenAtoms}, we can extend the model checking algorithm of Lemma \ref{lem:smcpspace} to deal with FO-definable generalised atoms and the contradictory negation.

\begin{theorem}\label{thm:tmc-ctl-d-sim}
	Let $\mathcal{D}$ be a finite set of first-order definable generalised atoms. 
	$\TMC(\TeamCTL(\mathcal D, {\sim}))$ is $\PSPACE$-complete w.r.t.\ $\leqpm$-reductions.
\end{theorem}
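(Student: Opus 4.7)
The lower bound is immediate: $\CTL$ is a syntactic fragment of $\CTL(\mathcal{D}, \sim)$, so Lemma~\ref{lem:mcs-pspacehard} gives $\PSPACE$-hardness. For the matching upper bound, the plan is to extend the alternating polynomial-time algorithm of Lemma~\ref{thm:smcpspace} (Algorithm~\ref{algo:TMC-CTL}) by one new case for each additional construct and to verify that the extended algorithm still runs in $\APTIME = \PSPACE$~\cite{ChandraKS81}.

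For a first-order definable generalised atom $\#(\phi_1,\ldots,\phi_n)$ whose parameters $\phi_1,\ldots,\phi_n$ are propositional formulas and whose defining formula is $\psi_{\#}$, the algorithm can, for each world $w \in T$ and each $i$, evaluate $w \models \phi_i$ directly by inspecting $\eta(w)$ in polynomial time. This yields the interpretations of the unary predicates $A_{x_i}$ in the relational structure $\mathfrak{A}(T)$ with universe $T$. It then remains to decide $\mathfrak{A}(T) \models \psi_{\#}$; since $\mathcal{D}$ is fixed and finite, each $\psi_{\#}$ is of constant size, so this is FO model checking with a fixed formula and can be carried out in logarithmic space~\cite{Immerman1998}, comfortably within the polynomial-time budget of a single step.

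For the contradictory negation, when the algorithm is invoked on $(\kripke, T, \sim\varphi)$ it must decide $\kripke, T \not\models \varphi$. The plan is to exploit closure of $\APTIME$ under complement by carrying a polarity bit in each recursive call: positive polarity means \emph{accept iff the team semantics accepts}, negative polarity means \emph{accept iff it rejects}. Every occurrence of $\sim$ toggles the polarity; under negative polarity the existential vs.\ universal branches are swapped, the outer conjunctions and disjunctions of the inductive cases are swapped, and the accept/reject verdicts at literals are flipped. For instance, under negative polarity the splitjunction $\psi_1 \lor \psi_2$ is handled by universally branching over all decompositions $T_1 \uplus T_2 = T$ and then existentially recursing into $\sim\psi_1$ on $T_1$ or $\sim\psi_2$ on $T_2$. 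The recursion depth stays bounded by the number of subformulae and each local step is still polynomial time, so the overall complexity remains $\APTIME$.

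The main obstacle I anticipate is verifying that this polarity-flipping mechanism interacts correctly with the genuinely alternating steps of Algorithm~\ref{algo:TMC-CTL}, in particular the splitjunction and the $\EX, \AX, \U, \R$ cases, each of which is defined by an outer quantifier over successor teams or decompositions followed by an inner Boolean combination. Correctness reduces to a routine structural induction whose inductive statement is parametrised by the polarity and whose inductive step uses the metalanguage dualities $\neg\exists \equiv \forall\neg$ and De~Morgan. A mild subtlety is that downward closure fails in the presence of $\sim$, but the splitjunction case of the original algorithm already enumerates \emph{all} decompositions rather than relying on downward closure, so no further modification beyond the polarity bookkeeping is required.
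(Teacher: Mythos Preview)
Your proposal is correct and matches the paper's approach, which is itself just a one-line pointer to the analogous argument for Theorem~\ref{thm:GenAtoms}. Your polarity-bit treatment of $\sim$ is the standard way to realise, inside an alternating machine, what the paper's Theorem~\ref{thm:GenAtoms} does for the deterministic $\PSPACE$ algorithm by simply negating the recursive return value; equivalently one could just appeal to $\APTIME = \PSPACE$ and closure of $\PSPACE$ under complement without spelling out the dualisation.
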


\subsection{Satisfiability}

Again our benchmark here is \CTL satisfiability with classical semantics.

\begin{proposition}[\cite{fila79,pr80}]\label{prop:CTLSAT-complexity}
 Satisfiability for \CTL-formulae under classical semantics is $\EXPTIME$-complete w.r.t.\ $\leqpm$-reductions.
\end{proposition}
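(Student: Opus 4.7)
The plan is to establish the two directions separately, following the classical template for modal/temporal fixpoint logics.

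For the upper bound, I would take the automata-theoretic route. Given a \CTL-formula $\varphi$, construct an alternating tree automaton $\aut_\varphi$ of size polynomial in $|\varphi|$ that accepts exactly the (unravellings of) Kripke structures satisfying $\varphi$; the construction is essentially one state per subformula, with transitions implementing the fixpoint characterisations of the $\U$ and $\R$ operators. By the standard simulation theorem, $\aut_\varphi$ can be converted to an equivalent nondeterministic tree automaton of size $2^{O(|\varphi|)}$, and emptiness of such an automaton can be decided in polynomial time in its size, yielding overall $\EXPTIME$. As an alternative one can use the tableau/elimination approach: compute the Fischer–Ladner closure $\closure{\varphi}$, take as nodes the maximally locally consistent subsets of $\closure{\varphi}$ (at most $2^{O(|\varphi|)}$ many), connect them by a candidate successor relation, and iteratively remove nodes that fail local consistency or whose eventualities cannot be discharged on some infinite path; $\varphi$ is satisfiable iff a node containing $\varphi$ survives. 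This also delivers the small-model property: every satisfiable formula has a Kripke model of size at most $2^{O(|\varphi|)}$.

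For the lower bound, I would reduce from acceptance of a polynomial-space alternating Turing machine, which captures $\EXPTIME$ by the Chandra–Kozen–Stockmeyer theorem. Given such an ATM~$M$ with space bound $s(n)$ and input $w$, I would design a CTL-formula $\varphi_{M,w}$ of size polynomial in $|M|+|w|$ whose models are essentially the accepting computation trees of $M$ on $w$. Atomic propositions encode a tape cell content, the head position, and the current state; a block of $s(n)$ consecutive worlds represents one configuration. Using $\AG$-guarded formulae one enforces that configurations are well-formed and that any $\X$-successor configuration is the correct result of one transition step; at existential states one demands an appropriate $\EX$-witness, at universal states every $\AX$-successor must continue the computation; and a final $\AF$ clause forces acceptance on every branch. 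The propagation of the tape contents between consecutive configurations is the delicate part, but it can be handled cell-by-cell with formulae of size linear in $s(n)$ using auxiliary marker propositions.

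The main obstacle is the reduction in the lower bound: faithfully aligning the existential/universal branching of the ATM with the $\E$/$\A$ path quantifiers of \CTL while keeping formula size polynomial, and ensuring that only genuine accepting subtrees witness satisfiability (no spurious models that cheat on tape contents or transition validity). The upper bound, by contrast, is routine given the standard alternating-automaton machinery on infinite trees, although it implicitly relies on the nontrivial fact that alternating parity tree automata admit nondeterminisation with only single-exponential blow-up.
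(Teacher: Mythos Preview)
The paper does not prove this proposition at all: it is stated as a known result and attributed to the cited references~\cite{fila79,pr80}, with no argument given. So there is nothing to compare your proposal against.

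That said, your sketch is a faithful outline of the classical proof. The upper bound via the Fischer--Ladner closure and tableau elimination (or, equivalently, the alternating-tree-automaton route) and the lower bound via encoding an alternating polynomial-space Turing machine are exactly the standard arguments found in the literature. One minor remark: the paper's citations point to the original PDL/dynamic-logic papers, whereas the $\EXPTIME$-completeness of \CTL satisfiability in the form you sketch is more commonly attributed to Emerson and Halpern; your ATM reduction and tableau construction are essentially their arguments adapted to \CTL.
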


The same complexity result is easily transferred to \TeamCTL.

\begin{theorem}\label{thm:SAT}
 $\TSAT(\TeamCTL)$ is $\EXPTIME$-complete w.r.t.\ $\leqpm$-reductions.
\end{theorem}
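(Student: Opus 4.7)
The plan is to reduce $\TSAT(\CTL)$ to classical \CTL satisfiability (and vice versa) via the identity transformation on formulae, exploiting two structural properties of $\TeamCTL$ established earlier in the paper: singleton equivalence (Proposition~\ref{prop:singleton}) and downward closure (Proposition~\ref{prop:dwclos}). Since classical \CTL satisfiability is $\EXPTIME$-complete by Proposition~\ref{prop:CTLSAT-complexity}, both bounds follow simultaneously. This mirrors exactly how Proposition~\ref{thm:TSATa-TSATs-dep} was obtained in the \LTL setting.

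For the $\EXPTIME$ upper bound, I would argue as follows. Suppose $\varphi$ is in $\TSAT(\CTL)$, witnessed by a Kripke structure~$\kripke=(W,R,\eta)$ and a non-empty team~$T$ of $\kripke$ with $\kripke, T \models \varphi$. Pick any $w \in T$; then $\mset{w} \msube T$, so by downward closure $\kripke, \mset{w} \models \varphi$. Singleton equivalence then gives $\kripke, w \ctlmodels \varphi$, so $\varphi$ is satisfiable under classical semantics. Conversely, any classical model~$(\kripke, w)$ of $\varphi$ yields the non-empty team~$\mset{w}$ satisfying $\varphi$ under team semantics, again by singleton equivalence. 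Hence $\varphi \in \TSAT(\CTL)$ if and only if $\varphi$ is satisfiable under classical \CTL semantics, and an $\EXPTIME$ decision procedure for the latter (Proposition~\ref{prop:CTLSAT-complexity}) decides the former.

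The $\EXPTIME$ lower bound follows from the same equivalence: the identity map is a $\leqpm$-reduction from classical \CTL satisfiability to $\TSAT(\CTL)$, and Proposition~\ref{prop:CTLSAT-complexity} supplies the $\EXPTIME$-hardness of the source problem.

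There is no substantial obstacle here; the entire proof boils down to invoking the two semantic properties already proved for \TeamCTL together with the known complexity of classical \CTL satisfiability. The non-emptiness side condition in the definition of $\TSAT(\CTL)$ is precisely what makes downward closure usable (an empty team would trivially satisfy anything, collapsing the problem), and singleton equivalence is what bridges the team-semantic and pointed-Kripke worlds.
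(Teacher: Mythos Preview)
Your proposal is correct and matches the paper's own proof essentially line for line: the paper likewise invokes downward closure (Proposition~\ref{prop:dwclos}) to reduce to singleton teams, singleton equivalence (Proposition~\ref{prop:singleton}) to pass to classical semantics, and then cites Proposition~\ref{prop:CTLSAT-complexity} for both bounds. If anything, your write-up is more explicit about the two directions of the equivalence and the role of the non-emptiness condition than the paper's terse version.
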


\begin{proof}
 The problem merely asks whether there exists a Kripke structure $\kripke$ and a non-empty team $T$ of $\kripke$ such that $\kripke,T\models\varphi$ for given \CTL-formula~$\varphi$. 
 By downward closure (Proposition~\ref{prop:dwclos}), it suffices to check whether $\varphi$ is satisfied by a singleton team. 
 By singleton equivalence (Proposition~\ref{prop:singleton}) we then immediately obtain the same complexity bounds as for classical \CTL satisfiability. 
 Hence, the claim follows from Proposition~\ref{prop:CTLSAT-complexity}.
\end{proof}

\section{Conclusions}

We introduced and studied team semantics for the temporal logics \LTL and \CTL. 
We concluded that \TeamLTL (with and without generalized atoms) is a valuable logic which allows to express relevant hyperproperties and complements the expressiveness of \hyltl while allowing for computationally simpler decision problems.

For \TeamCTL, the complexity of the model checking problem increases from $\PTime$-complete for usual \CTL to $\PSPACE$-complete for \TeamCTL. This fact stems from the expressive notion of synchronicity between team members and is in line with the results of~\citeauthor{kvw00}~\cite{kvw00}.

We conclude with some directions of future work and open problems.

\subsection{Future Work for TeamLTL}

We showed that some important properties that cannot be expressed in $\hyltl$ (such as uniform termination) can be expressed in \TeamLTL. Moreover input determinism can be expressed in \TeamLTL($\dep$). Can we identify tractable \TeamLTL variants (i.e., syntactic fragments with particular dependency atoms) that can  express a rich family of hyperproperties?

We showed that with respect to expressive power, $\hyltl$ and \TeamLTL are incomparable. 
However, the expressive power of \hyltl and the different extensions of \TeamLTL introduced here is left open. 
For example, the $\hyltl$ formula $\exists \pi.p_\pi$ is expressible in $\TeamLTL(\sim)$. 

Another interesting question is whether we can characterise the expressive power of relevant extensions of \TeamLTL as has been done in first-order and modal contexts. 
Recent works have shown limits of expressivity of \TeamLTL variants via translations to extensions and fragments of $\hyltl$~\cite{DBLP:conf/fsttcs/VirtemaHFK021,DBLP:conf/foiks/KontinenSV24} and FO~\cite{DBLP:conf/wollic/KontinenS21}. 
It is also open whether the one-way translations in those papers can be strengthened to precise characterisations of expressivity as was done in the work of \citeauthor{DBLP:conf/foiks/KontinenSV24}~\cite{DBLP:conf/foiks/KontinenSV24} for asynchronous set-based \TeamLTL.

 We studied the complexity of the path-checking, model checking, and satisfiability problem of \TeamLTL and its extensions. However, many problems are still open: Can we identify matching upper and lower bounds for the missing cases and partial results of Figure \ref{fig:overview} on page \pageref{fig:overview}? In particular, what is the complexity of model-checking \TeamLTL when splitjunctions are allowed?
 
 Finally, the complexity of the validity and implication problems are open for almost all cases.

\subsection{Future Work for TeamCTL}

We only scratched the surface of the complexity of the satisfiability problem of \TeamCTL. There are two obvious directions for future work here. Complexity of \TeamCTL extended with atoms and connectives that preserve downward closure, and complexity of \TeamCTL extended with non-downward closed atoms and connectives. The complexity of the former is expected to stay relatively low and comparable to vanilla \TeamCTL, since there it still suffices to consider only singleton teams. The complexity of the latter is expected to be much higher and be perhaps even undecidable. Note that \TeamLTL with the contradictory negation is highly undecidable~\cite{DBLP:journals/tcs/Luck20}.

The tautology or validity problem for this new logic is quite interesting and seems to have a higher complexity than the related satisfiability problem. This is due to alternation of set quantification: the validity problem quantifies over teams universally while the splitjunction implements an existential set quantification. We leave the considerations related to the validity problem as future work. Formally the corresponding problem is defined as follows:
\problemdef{$\TVAL(\CTL$) --- \TeamCTL Validity Problem.}{A \TeamCTL-formula $\varphi$.}{Does $\kripke,T \models\varphi$ hold for every Kripke structure $\kripke$ and every team $T$ of $\kripke$?}

In the context of team-based modal logics the computational complexity of the validity problem has been studied by~\citeauthor{DBLP:journals/iandc/Virtema17}~\cite{DBLP:journals/iandc/Virtema17},~\citeauthor{DBLP:journals/lmcs/Hannula19}~\cite{DBLP:journals/lmcs/Hannula19} and~\citeauthor{DBLP:conf/csl/Luck18}~\cite{DBLP:conf/csl/Luck18}.~\citeauthor{DBLP:journals/iandc/Virtema17} and~\citeauthor{DBLP:journals/lmcs/Hannula19} showed that the problem for modal dependence logic is $\NEXPTIME$-complete whereas~\citeauthor{DBLP:conf/csl/Luck18} established that the problem for modal logic extended with the contradictory negation is complete for the complexity class $\mathrm{TOWER}$(poly), the class of problems that can be solved in time that is bounded by some $n$-fold exponential, where $n$ itself is bounded polynomially in the input length.

It is well-known that there are several ways to measure the complexity of a model checking problem. In general, a model and a formula are given, and then one needs to decide whether the model satisfies the formula. \emph{System complexity} considers the computational complexity for the case of a fixed formula whereas \emph{specification complexity} fixes the underlying Kripke structure. We considered in this paper the \emph{combined complexity} where both a formula and a model belong to the given input. Yet the other two approaches might give more specific insights into the intractability of the model checking case we investigated. In particular, the study of the \emph{specification complexity}, where the team or the team size is assumed to be fixed, might as well be of independent interest.
 
 Finally this leads to the consideration of different kinds of restrictions on the problems. In particular for the quite strong $\PSPACE$-completeness result for model checking in team semantics it is of interest how this intractability can be explained. Hence the investigation of fragments by means of allowed temporal operators and/or Boolean operators will lead to a better understanding of this presumably untameably high complexity.

\begin{acks}
	We thank the anonymous reviewers for their valuable feedback. 
This work was supported by the DFG projects ME4279/1-1 and ME4279/3-1, and DIREC -- Digital Research Centre Denmark.
\end{acks}
\bibliographystyle{ACM-Reference-Format}
\bibliography{doi-teamltl}


\begin{thebibliography}{72}


\ifx \showCODEN    \undefined \def \showCODEN     #1{\unskip}     \fi
\ifx \showDOI      \undefined \def \showDOI       #1{#1}\fi
\ifx \showISBNx    \undefined \def \showISBNx     #1{\unskip}     \fi
\ifx \showISBNxiii \undefined \def \showISBNxiii  #1{\unskip}     \fi
\ifx \showISSN     \undefined \def \showISSN      #1{\unskip}     \fi
\ifx \showLCCN     \undefined \def \showLCCN      #1{\unskip}     \fi
\ifx \shownote     \undefined \def \shownote      #1{#1}          \fi
\ifx \showarticletitle \undefined \def \showarticletitle #1{#1}   \fi
\ifx \showURL      \undefined \def \showURL       {\relax}        \fi
\providecommand\bibfield[2]{#2}
\providecommand\bibinfo[2]{#2}
\providecommand\natexlab[1]{#1}
\providecommand\showeprint[2][]{arXiv:#2}

\bibitem[Abramsky et~al\mbox{.}(2025)]%
        {abramsky2021team}
\bibfield{author}{\bibinfo{person}{Samson Abramsky}, \bibinfo{person}{Joni Puljuj{\"{a}}rvi}, {and} \bibinfo{person}{Jouko V{\"{a}\"{a}}n{\"{a}}nen}.} \bibinfo{year}{2025}\natexlab{}.
\newblock \showarticletitle{Team Semantics and Independence Notions in Quantum Physics}.
\newblock \bibinfo{journal}{\emph{The Bulletin of Symbolic Logic}} (\bibinfo{year}{2025}), \bibinfo{pages}{1--58}.
\newblock
\urldef\tempurl%
\url{https://doi.org/10.1017/bsl.2025.10089}
\showDOI{\tempurl}


\bibitem[Albert and Gr{\"{a}}del(2022)]%
        {AlbertG22}
\bibfield{author}{\bibinfo{person}{Rafael Albert} {and} \bibinfo{person}{Erich Gr{\"{a}}del}.} \bibinfo{year}{2022}\natexlab{}.
\newblock \showarticletitle{Unifying hidden-variable problems from quantum mechanics by logics of dependence and independence}.
\newblock \bibinfo{journal}{\emph{Ann. Pure Appl. Log.}} \bibinfo{volume}{173}, \bibinfo{number}{10} (\bibinfo{year}{2022}), \bibinfo{pages}{103088}.
\newblock
\urldef\tempurl%
\url{https://doi.org/10.1016/J.APAL.2022.103088}
\showDOI{\tempurl}


\bibitem[Baier and Katoen(2008)]%
        {BaierKatoen08}
\bibfield{author}{\bibinfo{person}{Christel Baier} {and} \bibinfo{person}{Joost-Pieter Katoen}.} \bibinfo{year}{2008}\natexlab{}.
\newblock \bibinfo{booktitle}{\emph{Principles of Model Checking}}.
\newblock \bibinfo{publisher}{The MIT Press}, \bibinfo{address}{Cambridge, MA, USA}.
\newblock
\showISBNx{026202649X, 9780262026499}


\bibitem[Barbero and Virtema(2024)]%
        {DBLP:conf/csl/BarberoV24}
\bibfield{author}{\bibinfo{person}{Fausto Barbero} {and} \bibinfo{person}{Jonni Virtema}.} \bibinfo{year}{2024}\natexlab{}.
\newblock \showarticletitle{Expressivity Landscape for Logics with Probabilistic Interventionist Counterfactuals}. In \bibinfo{booktitle}{\emph{{CSL}}} \emph{(\bibinfo{series}{LIPIcs}, Vol.~\bibinfo{volume}{288})}. \bibinfo{publisher}{Schloss Dagstuhl - Leibniz-Zentrum f{\"{u}}r Informatik}, \bibinfo{pages}{15:1--15:19}.
\newblock
\urldef\tempurl%
\url{https://doi.org/10.4230/LIPICS.CSL.2024.15}
\showDOI{\tempurl}


\bibitem[Bartocci et~al\mbox{.}(2023)]%
        {DBLP:conf/concur/BartocciHNC23}
\bibfield{author}{\bibinfo{person}{Ezio Bartocci}, \bibinfo{person}{Thomas~A. Henzinger}, \bibinfo{person}{Dejan Nickovic}, {and} \bibinfo{person}{Ana~Oliveira da Costa}.} \bibinfo{year}{2023}\natexlab{}.
\newblock \showarticletitle{Hypernode Automata}. In \bibinfo{booktitle}{\emph{{CONCUR}}} \emph{(\bibinfo{series}{LIPIcs}, Vol.~\bibinfo{volume}{279})}. \bibinfo{publisher}{Schloss Dagstuhl - Leibniz-Zentrum f{\"{u}}r Informatik}, \bibinfo{pages}{21:1--21:16}.
\newblock
\urldef\tempurl%
\url{https://doi.org/10.4230/LIPICS.CONCUR.2023.21}
\showDOI{\tempurl}


\bibitem[Bauland et~al\mbox{.}(2011)]%
        {DBLP:journals/tocl/BaulandM0SSV11}
\bibfield{author}{\bibinfo{person}{Michael Bauland}, \bibinfo{person}{Martin Mundhenk}, \bibinfo{person}{Thomas Schneider}, \bibinfo{person}{Henning Schnoor}, \bibinfo{person}{Ilka Schnoor}, {and} \bibinfo{person}{Heribert Vollmer}.} \bibinfo{year}{2011}\natexlab{}.
\newblock \showarticletitle{The tractability of model checking for {LTL:} The good, the bad, and the ugly fragments}.
\newblock \bibinfo{journal}{\emph{{ACM} Trans. Comput. Log.}} \bibinfo{volume}{12}, \bibinfo{number}{2} (\bibinfo{year}{2011}), \bibinfo{pages}{13:1--13:28}.
\newblock
\urldef\tempurl%
\url{https://doi.org/10.1145/1877714.1877719}
\showDOI{\tempurl}


\bibitem[Bauland et~al\mbox{.}(2009)]%
        {DBLP:journals/corr/abs-0812-4848}
\bibfield{author}{\bibinfo{person}{Michael Bauland}, \bibinfo{person}{Thomas Schneider}, \bibinfo{person}{Henning Schnoor}, \bibinfo{person}{Ilka Schnoor}, {and} \bibinfo{person}{Heribert Vollmer}.} \bibinfo{year}{2009}\natexlab{}.
\newblock \showarticletitle{The Complexity of Generalized Satisfiability for Linear Temporal Logic}.
\newblock \bibinfo{journal}{\emph{Log. Methods Comput. Sci.}} \bibinfo{volume}{5}, \bibinfo{number}{1} (\bibinfo{year}{2009}).
\newblock
\urldef\tempurl%
\url{https://doi.org/10.2168/LMCS-5%281%3A1%292009}
\showDOI{\tempurl}


\bibitem[Baumeister et~al\mbox{.}(2021)]%
        {DBLP:conf/cav/BaumeisterCBFS21}
\bibfield{author}{\bibinfo{person}{Jan Baumeister}, \bibinfo{person}{Norine Coenen}, \bibinfo{person}{Borzoo Bonakdarpour}, \bibinfo{person}{Bernd Finkbeiner}, {and} \bibinfo{person}{C{\'{e}}sar S{\'{a}}nchez}.} \bibinfo{year}{2021}\natexlab{}.
\newblock \showarticletitle{A Temporal Logic for Asynchronous Hyperproperties}. In \bibinfo{booktitle}{\emph{{CAV} 2021, Part {I}}} \emph{(\bibinfo{series}{LNCS}, Vol.~\bibinfo{volume}{12759})}, \bibfield{editor}{\bibinfo{person}{Alexandra Silva} {and} \bibinfo{person}{K.~Rustan~M. Leino}} (Eds.). \bibinfo{publisher}{Springer}, \bibinfo{pages}{694--717}.
\newblock
\urldef\tempurl%
\url{https://doi.org/10.1007/978-3-030-81685-8\_33}
\showDOI{\tempurl}


\bibitem[Bellier et~al\mbox{.}(2023)]%
        {DBLP:journals/tocl/BellierBMM23}
\bibfield{author}{\bibinfo{person}{Dylan Bellier}, \bibinfo{person}{Massimo Benerecetti}, \bibinfo{person}{Dario~Della Monica}, {and} \bibinfo{person}{Fabio Mogavero}.} \bibinfo{year}{2023}\natexlab{}.
\newblock \showarticletitle{Good-for-Game {QPTL:} An Alternating Hodges Semantics}.
\newblock \bibinfo{journal}{\emph{{ACM} Trans. Comput. Log.}} \bibinfo{volume}{24}, \bibinfo{number}{1} (\bibinfo{year}{2023}), \bibinfo{pages}{4:1--4:57}.
\newblock
\urldef\tempurl%
\url{https://doi.org/10.1145/3565365}
\showDOI{\tempurl}


\bibitem[Beutner and Finkbeiner(2023)]%
        {DBLP:conf/tacas/BeutnerF23}
\bibfield{author}{\bibinfo{person}{Raven Beutner} {and} \bibinfo{person}{Bernd Finkbeiner}.} \bibinfo{year}{2023}\natexlab{}.
\newblock \showarticletitle{AutoHyper: Explicit-State Model Checking for HyperLTL}. In \bibinfo{booktitle}{\emph{{TACAS} {(1)}}} \emph{(\bibinfo{series}{Lecture Notes in Computer Science}, Vol.~\bibinfo{volume}{13993})}. \bibinfo{publisher}{Springer}, \bibinfo{pages}{145--163}.
\newblock
\urldef\tempurl%
\url{https://doi.org/10.1007/978-3-031-30823-9\_8}
\showDOI{\tempurl}


\bibitem[Bonakdarpour and Finkbeiner(2016)]%
        {DBLP:conf/rv/BonakdarpourF16}
\bibfield{author}{\bibinfo{person}{Borzoo Bonakdarpour} {and} \bibinfo{person}{Bernd Finkbeiner}.} \bibinfo{year}{2016}\natexlab{}.
\newblock \showarticletitle{Runtime Verification for {HyperLTL}}. In \bibinfo{booktitle}{\emph{RV 2016}} \emph{(\bibinfo{series}{LNCS}, Vol.~\bibinfo{volume}{10012})}, \bibfield{editor}{\bibinfo{person}{Yli{\`{e}}s Falcone} {and} \bibinfo{person}{C{\'{e}}sar S{\'{a}}nchez}} (Eds.). \bibinfo{publisher}{Springer}, \bibinfo{address}{Madrid, Spain}, \bibinfo{pages}{41--45}.
\newblock
\showISBNx{978-3-319-46981-2}
\urldef\tempurl%
\url{https://doi.org/10.1007/978-3-319-46982-9_4}
\showDOI{\tempurl}


\bibitem[Bozzelli et~al\mbox{.}(2015)]%
        {BozzelliMP15}
\bibfield{author}{\bibinfo{person}{Laura Bozzelli}, \bibinfo{person}{Bastien Maubert}, {and} \bibinfo{person}{Sophie Pinchinat}.} \bibinfo{year}{2015}\natexlab{}.
\newblock \showarticletitle{Unifying Hyper and Epistemic Temporal Logics}. In \bibinfo{booktitle}{\emph{{FoSSaCS} 2015}} \emph{(\bibinfo{series}{LNCS}, Vol.~\bibinfo{volume}{9034})}, \bibfield{editor}{\bibinfo{person}{Andrew~M. Pitts}} (Ed.). \bibinfo{publisher}{Springer}, \bibinfo{address}{London, UK}, \bibinfo{pages}{167--182}.
\newblock
\urldef\tempurl%
\url{https://doi.org/10.1007/978-3-662-46678-0_11}
\showDOI{\tempurl}


\bibitem[Bozzelli et~al\mbox{.}(2021)]%
        {DBLP:conf/lics/BozzelliPS21}
\bibfield{author}{\bibinfo{person}{Laura Bozzelli}, \bibinfo{person}{Adriano Peron}, {and} \bibinfo{person}{C{\'{e}}sar S{\'{a}}nchez}.} \bibinfo{year}{2021}\natexlab{}.
\newblock \showarticletitle{Asynchronous Extensions of HyperLTL}. In \bibinfo{booktitle}{\emph{{LICS}}}. \bibinfo{publisher}{{IEEE}}, \bibinfo{pages}{1--13}.
\newblock
\urldef\tempurl%
\url{https://doi.org/10.1109/LICS52264.2021.9470583}
\showDOI{\tempurl}


\bibitem[Chandra et~al\mbox{.}(1981)]%
        {ChandraKS81}
\bibfield{author}{\bibinfo{person}{Ashok~K. Chandra}, \bibinfo{person}{Dexter Kozen}, {and} \bibinfo{person}{Larry~J. Stockmeyer}.} \bibinfo{year}{1981}\natexlab{}.
\newblock \showarticletitle{Alternation}.
\newblock \bibinfo{journal}{\emph{J. {ACM}}} \bibinfo{volume}{28}, \bibinfo{number}{1} (\bibinfo{year}{1981}), \bibinfo{pages}{114--133}.
\newblock
\urldef\tempurl%
\url{https://doi.org/10.1145/322234.322243}
\showDOI{\tempurl}


\bibitem[Clarke et~al\mbox{.}(1986)]%
        {clemsi86}
\bibfield{author}{\bibinfo{person}{Edmund~M. Clarke}, \bibinfo{person}{E.~Allen Emerson}, {and} \bibinfo{person}{A.~Prasad Sistla}.} \bibinfo{year}{1986}\natexlab{}.
\newblock \showarticletitle{Automatic Verification of Finite-State Concurrent Systems Using Temporal Logic Specifications}.
\newblock \bibinfo{journal}{\emph{{ACM} Trans. Program. Lang. Syst.}} \bibinfo{volume}{8}, \bibinfo{number}{2} (\bibinfo{year}{1986}), \bibinfo{pages}{244--263}.
\newblock
\urldef\tempurl%
\url{https://doi.org/10.1145/5397.5399}
\showDOI{\tempurl}


\bibitem[Clarkson et~al\mbox{.}(2014)]%
        {DBLP:conf/post/ClarksonFKMRS14}
\bibfield{author}{\bibinfo{person}{Michael~R. Clarkson}, \bibinfo{person}{Bernd Finkbeiner}, \bibinfo{person}{Masoud Koleini}, \bibinfo{person}{Kristopher~K. Micinski}, \bibinfo{person}{Markus~N. Rabe}, {and} \bibinfo{person}{C{\'{e}}sar S{\'{a}}nchez}.} \bibinfo{year}{2014}\natexlab{}.
\newblock \showarticletitle{Temporal Logics for Hyperproperties}. In \bibinfo{booktitle}{\emph{{POST}}} \emph{(\bibinfo{series}{Lecture Notes in Computer Science}, Vol.~\bibinfo{volume}{8414})}. \bibinfo{publisher}{Springer}, \bibinfo{pages}{265--284}.
\newblock
\urldef\tempurl%
\url{https://doi.org/10.1007/978-3-642-54792-8\_15}
\showDOI{\tempurl}


\bibitem[Clarkson and Schneider(2010)]%
        {DBLP:journals/jcs/ClarksonS10}
\bibfield{author}{\bibinfo{person}{Michael~R. Clarkson} {and} \bibinfo{person}{Fred~B. Schneider}.} \bibinfo{year}{2010}\natexlab{}.
\newblock \showarticletitle{Hyperproperties}.
\newblock \bibinfo{journal}{\emph{J. Comput. Secur.}} \bibinfo{volume}{18}, \bibinfo{number}{6} (\bibinfo{year}{2010}), \bibinfo{pages}{1157--1210}.
\newblock
\urldef\tempurl%
\url{https://doi.org/10.3233/JCS-2009-0393}
\showDOI{\tempurl}


\bibitem[D'Argenio et~al\mbox{.}(2017)]%
        {DBLP:conf/esop/DArgenioBBFH17}
\bibfield{author}{\bibinfo{person}{Pedro~R. D'Argenio}, \bibinfo{person}{Gilles Barthe}, \bibinfo{person}{Sebastian Biewer}, \bibinfo{person}{Bernd Finkbeiner}, {and} \bibinfo{person}{Holger Hermanns}.} \bibinfo{year}{2017}\natexlab{}.
\newblock \showarticletitle{Is Your Software on Dope? - Formal Analysis of Surreptitiously "enhanced" Programs}. In \bibinfo{booktitle}{\emph{{ESOP}}} \emph{(\bibinfo{series}{Lecture Notes in Computer Science}, Vol.~\bibinfo{volume}{10201})}. \bibinfo{publisher}{Springer}, \bibinfo{pages}{83--110}.
\newblock
\urldef\tempurl%
\url{https://doi.org/10.1007/978-3-662-54434-1\_4}
\showDOI{\tempurl}


\bibitem[Durand et~al\mbox{.}(2016)]%
        {DKV16}
\bibfield{author}{\bibinfo{person}{Arnaud Durand}, \bibinfo{person}{Juha Kontinen}, {and} \bibinfo{person}{Heribert Vollmer}.} \bibinfo{year}{2016}\natexlab{}.
\newblock \showarticletitle{Expressivity and complexity of dependence logic}.
\newblock In \bibinfo{booktitle}{\emph{Dependence Logic: Theory and Applications}}. \bibinfo{publisher}{Birkh\"auser}, \bibinfo{address}{Switzerland}, \bibinfo{pages}{5--32}.
\newblock
\urldef\tempurl%
\url{https://doi.org/10.1007/978-3-319-31803-5_2}
\showDOI{\tempurl}


\bibitem[Ebbing et~al\mbox{.}(2013)]%
        {ehmmvv13}
\bibfield{author}{\bibinfo{person}{Johannes Ebbing}, \bibinfo{person}{Lauri Hella}, \bibinfo{person}{Arne Meier}, \bibinfo{person}{Julian{-}Steffen M{\"{u}}ller}, \bibinfo{person}{Jonni Virtema}, {and} \bibinfo{person}{Heribert Vollmer}.} \bibinfo{year}{2013}\natexlab{}.
\newblock \showarticletitle{Extended Modal Dependence Logic}. In \bibinfo{booktitle}{\emph{Logic, Language, Information, and Computation - 20th International Workshop, WoLLIC 2013, Darmstadt, Germany, August 20-23, 2013. Proceedings}} \emph{(\bibinfo{series}{Lecture Notes in Computer Science}, Vol.~\bibinfo{volume}{8071})}, \bibfield{editor}{\bibinfo{person}{Leonid Libkin}, \bibinfo{person}{Ulrich Kohlenbach}, {and} \bibinfo{person}{Ruy J. G.~B. de~Queiroz}} (Eds.). \bibinfo{publisher}{Springer}, \bibinfo{pages}{126--137}.
\newblock
\urldef\tempurl%
\url{https://doi.org/10.1007/978-3-642-39992-3\_13}
\showDOI{\tempurl}


\bibitem[Finkbeiner and Hahn(2016)]%
        {FinkbeinerH16}
\bibfield{author}{\bibinfo{person}{Bernd Finkbeiner} {and} \bibinfo{person}{Christopher Hahn}.} \bibinfo{year}{2016}\natexlab{}.
\newblock \showarticletitle{Deciding Hyperproperties}. In \bibinfo{booktitle}{\emph{{CONCUR} 2016}} \emph{(\bibinfo{series}{LIPIcs}, Vol.~\bibinfo{volume}{59})}, \bibfield{editor}{\bibinfo{person}{Jos{\'{e}}e Desharnais} {and} \bibinfo{person}{Radha Jagadeesan}} (Eds.). \bibinfo{publisher}{Schloss Dagstuhl - Leibniz-Zentrum für Informatik}, \bibinfo{pages}{13:1--13:14}.
\newblock
\urldef\tempurl%
\url{https://doi.org/10.4230/LIPIcs.CONCUR.2016.13}
\showDOI{\tempurl}


\bibitem[Finkbeiner et~al\mbox{.}(2017a)]%
        {DBLP:conf/rv/FinkbeinerHST17}
\bibfield{author}{\bibinfo{person}{Bernd Finkbeiner}, \bibinfo{person}{Christopher Hahn}, \bibinfo{person}{Marvin Stenger}, {and} \bibinfo{person}{Leander Tentrup}.} \bibinfo{year}{2017}\natexlab{a}.
\newblock \showarticletitle{Monitoring Hyperproperties}. In \bibinfo{booktitle}{\emph{{RV}}} \emph{(\bibinfo{series}{Lecture Notes in Computer Science}, Vol.~\bibinfo{volume}{10548})}. \bibinfo{publisher}{Springer}, \bibinfo{pages}{190--207}.
\newblock
\urldef\tempurl%
\url{https://doi.org/10.1007/978-3-319-67531-2\_12}
\showDOI{\tempurl}


\bibitem[Finkbeiner et~al\mbox{.}(2017b)]%
        {634}
\bibfield{author}{\bibinfo{person}{Bernd Finkbeiner}, \bibinfo{person}{Christian M{\"{u}}ller}, \bibinfo{person}{Helmut Seidl}, {and} \bibinfo{person}{Eugen Zalinescu}.} \bibinfo{year}{2017}\natexlab{b}.
\newblock \showarticletitle{Verifying Security Policies in Multi-agent Workflows with Loops}. In \bibinfo{booktitle}{\emph{{CCS}}}. \bibinfo{publisher}{{ACM}}, \bibinfo{pages}{633--645}.
\newblock
\urldef\tempurl%
\url{https://doi.org/10.1145/3133956.3134080}
\showDOI{\tempurl}


\bibitem[Finkbeiner et~al\mbox{.}(2015)]%
        {DBLP:conf/cav/FinkbeinerRS15}
\bibfield{author}{\bibinfo{person}{Bernd Finkbeiner}, \bibinfo{person}{Markus~N. Rabe}, {and} \bibinfo{person}{C{\'{e}}sar S{\'{a}}nchez}.} \bibinfo{year}{2015}\natexlab{}.
\newblock \showarticletitle{{Algorithms for Model Checking HyperLTL and HyperCTL*}}. In \bibinfo{booktitle}{\emph{{CAV} {(1)}}} \emph{(\bibinfo{series}{Lecture Notes in Computer Science}, Vol.~\bibinfo{volume}{9206})}. \bibinfo{publisher}{Springer}, \bibinfo{pages}{30--48}.
\newblock
\urldef\tempurl%
\url{https://doi.org/10.1007/978-3-319-21690-4_3}
\showDOI{\tempurl}


\bibitem[Finkbeiner and Zimmermann(2017)]%
        {DBLP:conf/stacs/Finkbeiner017}
\bibfield{author}{\bibinfo{person}{Bernd Finkbeiner} {and} \bibinfo{person}{Martin Zimmermann}.} \bibinfo{year}{2017}\natexlab{}.
\newblock \showarticletitle{The First-Order Logic of Hyperproperties}. In \bibinfo{booktitle}{\emph{{STACS}}} \emph{(\bibinfo{series}{LIPIcs}, Vol.~\bibinfo{volume}{66})}. \bibinfo{publisher}{Schloss Dagstuhl - Leibniz-Zentrum f{\"{u}}r Informatik}, \bibinfo{pages}{30:1--30:14}.
\newblock
\urldef\tempurl%
\url{https://doi.org/10.4230/LIPIcs.STACS.2017.30}
\showDOI{\tempurl}


\bibitem[Fischer and Ladner(1979)]%
        {fila79}
\bibfield{author}{\bibinfo{person}{Michael~J. Fischer} {and} \bibinfo{person}{Richard~E. Ladner}.} \bibinfo{year}{1979}\natexlab{}.
\newblock \showarticletitle{Propositional Dynamic Logic of Regular Programs}.
\newblock \bibinfo{journal}{\emph{J. Comput. Syst. Sci.}} \bibinfo{volume}{18}, \bibinfo{number}{2} (\bibinfo{year}{1979}), \bibinfo{pages}{194--211}.
\newblock
\urldef\tempurl%
\url{https://doi.org/10.1016/0022-0000(79)90046-1}
\showDOI{\tempurl}


\bibitem[Fortin et~al\mbox{.}(2025)]%
        {FKTZ21}
\bibfield{author}{\bibinfo{person}{Marie Fortin}, \bibinfo{person}{Louwe~B. Kuijer}, \bibinfo{person}{Patrick Totzke}, {and} \bibinfo{person}{Martin Zimmermann}.} \bibinfo{year}{2025}\natexlab{}.
\newblock \showarticletitle{{HyperLTL} Satisfiability Is Highly Undecidable, {HyperCTL${}^*$} is Even Harder}.
\newblock \bibinfo{journal}{\emph{Log. Methods Comput. Sci.}} \bibinfo{volume}{21}, \bibinfo{number}{1}, \bibinfo{pages}{3}.
\newblock
\urldef\tempurl%
\url{https://doi.org/10.46298/LMCS-21(1:3)2025}
\showDOI{\tempurl}


\bibitem[Frenkel and Zimmermann(2025)]%
        {DBLP:conf/csl/Frenkel025}
\bibfield{author}{\bibinfo{person}{Hadar Frenkel} {and} \bibinfo{person}{Martin Zimmermann}.} \bibinfo{year}{2025}\natexlab{}.
\newblock \showarticletitle{The Complexity of Second-Order {HyperLTL}}. In \bibinfo{booktitle}{\emph{{CSL} 2025}} \emph{(\bibinfo{series}{LIPIcs}, Vol.~\bibinfo{volume}{326})}, \bibfield{editor}{\bibinfo{person}{J{\"{o}}rg Endrullis} {and} \bibinfo{person}{Sylvain Schmitz}} (Eds.). \bibinfo{publisher}{Schloss Dagstuhl - Leibniz-Zentrum f{\"{u}}r Informatik}, \bibinfo{pages}{10:1--10:23}.
\newblock
\urldef\tempurl%
\url{https://doi.org/10.4230/LIPICS.CSL.2025.10}
\showDOI{\tempurl}


\bibitem[Galliani(2012)]%
        {Galliani12}
\bibfield{author}{\bibinfo{person}{Pietro Galliani}.} \bibinfo{year}{2012}\natexlab{}.
\newblock \showarticletitle{Inclusion and exclusion dependencies in team semantics - On some logics of imperfect information}.
\newblock \bibinfo{journal}{\emph{Ann. Pure Appl. Logic}} \bibinfo{volume}{163}, \bibinfo{number}{1} (\bibinfo{year}{2012}), \bibinfo{pages}{68--84}.
\newblock
\urldef\tempurl%
\url{https://doi.org/10.1016/j.apal.2011.08.005}
\showDOI{\tempurl}


\bibitem[Gr{\"{a}}del and V{\"{a}}{\"{a}}n{\"{a}}nen(2013)]%
        {gv13}
\bibfield{author}{\bibinfo{person}{Erich Gr{\"{a}}del} {and} \bibinfo{person}{Jouko V{\"{a}}{\"{a}}n{\"{a}}nen}.} \bibinfo{year}{2013}\natexlab{}.
\newblock \showarticletitle{Dependence and Independence}.
\newblock \bibinfo{journal}{\emph{Studia Logica}} \bibinfo{volume}{101}, \bibinfo{number}{2} (\bibinfo{year}{2013}), \bibinfo{pages}{399--410}.
\newblock
\urldef\tempurl%
\url{https://doi.org/10.1007/s11225-013-9479-2}
\showDOI{\tempurl}


\bibitem[Gutsfeld et~al\mbox{.}(2022)]%
        {DBLP:conf/lics/GutsfeldMOV22}
\bibfield{author}{\bibinfo{person}{Jens~Oliver Gutsfeld}, \bibinfo{person}{Arne Meier}, \bibinfo{person}{Christoph Ohrem}, {and} \bibinfo{person}{Jonni Virtema}.} \bibinfo{year}{2022}\natexlab{}.
\newblock \showarticletitle{Temporal Team Semantics Revisited}. In \bibinfo{booktitle}{\emph{{LICS}}}. \bibinfo{publisher}{{ACM}}, \bibinfo{pages}{44:1--44:13}.
\newblock
\urldef\tempurl%
\url{https://doi.org/10.1145/3531130.3533360}
\showDOI{\tempurl}


\bibitem[Gutsfeld et~al\mbox{.}(2021)]%
        {DBLP:journals/pacmpl/GutsfeldMO21}
\bibfield{author}{\bibinfo{person}{Jens~Oliver Gutsfeld}, \bibinfo{person}{Markus M{\"{u}}ller{-}Olm}, {and} \bibinfo{person}{Christoph Ohrem}.} \bibinfo{year}{2021}\natexlab{}.
\newblock \showarticletitle{Automata and fixpoints for asynchronous hyperproperties}.
\newblock \bibinfo{journal}{\emph{Proc. {ACM} Program. Lang.}} \bibinfo{volume}{5}, \bibinfo{number}{{POPL}} (\bibinfo{year}{2021}), \bibinfo{pages}{1--29}.
\newblock
\urldef\tempurl%
\url{https://doi.org/10.1145/3434319}
\showDOI{\tempurl}


\bibitem[Hannula(2019)]%
        {DBLP:journals/lmcs/Hannula19}
\bibfield{author}{\bibinfo{person}{Miika Hannula}.} \bibinfo{year}{2019}\natexlab{}.
\newblock \showarticletitle{Validity and Entailment in Modal and Propositional Dependence Logics}.
\newblock \bibinfo{journal}{\emph{Log. Methods Comput. Sci.}} \bibinfo{volume}{15}, \bibinfo{number}{2} (\bibinfo{year}{2019}).
\newblock
\urldef\tempurl%
\url{https://doi.org/10.23638/LMCS-15(2:4)2019}
\showDOI{\tempurl}


\bibitem[Hannula et~al\mbox{.}(2020b)]%
        {abs-2003-00644}
\bibfield{author}{\bibinfo{person}{Miika Hannula}, \bibinfo{person}{Juha Kontinen}, \bibinfo{person}{Jan~Van den Bussche}, {and} \bibinfo{person}{Jonni Virtema}.} \bibinfo{year}{2020}\natexlab{b}.
\newblock \showarticletitle{Descriptive complexity of real computation and probabilistic independence logic}. In \bibinfo{booktitle}{\emph{{LICS}}}. \bibinfo{publisher}{{ACM}}, \bibinfo{pages}{550--563}.
\newblock
\urldef\tempurl%
\url{https://doi.org/10.1145/3373718.3394773}
\showDOI{\tempurl}


\bibitem[Hannula et~al\mbox{.}(2020a)]%
        {HannulaKV20}
\bibfield{author}{\bibinfo{person}{Miika Hannula}, \bibinfo{person}{Juha Kontinen}, {and} \bibinfo{person}{Jonni Virtema}.} \bibinfo{year}{2020}\natexlab{a}.
\newblock \showarticletitle{Polyteam semantics}.
\newblock \bibinfo{journal}{\emph{J. Log. Comput.}} \bibinfo{volume}{30}, \bibinfo{number}{8} (\bibinfo{year}{2020}), \bibinfo{pages}{1541--1566}.
\newblock
\urldef\tempurl%
\url{https://doi.org/10.1093/logcom/exaa048}
\showDOI{\tempurl}


\bibitem[Hannula et~al\mbox{.}(2018)]%
        {Hannula:2018:CPL:3176362.3157054}
\bibfield{author}{\bibinfo{person}{Miika Hannula}, \bibinfo{person}{Juha Kontinen}, \bibinfo{person}{Jonni Virtema}, {and} \bibinfo{person}{Heribert Vollmer}.} \bibinfo{year}{2018}\natexlab{}.
\newblock \showarticletitle{Complexity of Propositional Logics in Team Semantic}.
\newblock \bibinfo{journal}{\emph{ACM Trans. Comput. Logic}} \bibinfo{volume}{19}, \bibinfo{number}{1}, Article \bibinfo{articleno}{2} (\bibinfo{date}{Jan.} \bibinfo{year}{2018}), \bibinfo{numpages}{14}~pages.
\newblock
\showISSN{1529-3785}
\urldef\tempurl%
\url{https://doi.org/10.1145/3157054}
\showDOI{\tempurl}


\bibitem[Hella et~al\mbox{.}(2014)]%
        {helusavi14}
\bibfield{author}{\bibinfo{person}{Lauri Hella}, \bibinfo{person}{Kerkko Luosto}, \bibinfo{person}{Katsuhiko Sano}, {and} \bibinfo{person}{Jonni Virtema}.} \bibinfo{year}{2014}\natexlab{}.
\newblock \showarticletitle{The Expressive Power of Modal Dependence Logic}. In \bibinfo{booktitle}{\emph{Advances in Modal Logic 10, invited and contributed papers from the tenth conference on "Advances in Modal Logic," held in Groningen, The Netherlands, August 5-8, 2014}}, \bibfield{editor}{\bibinfo{person}{Rajeev Gor{\'{e}}}, \bibinfo{person}{Barteld~P. Kooi}, {and} \bibinfo{person}{Agi Kurucz}} (Eds.). \bibinfo{publisher}{College Publications}, \bibinfo{pages}{294--312}.
\newblock
\urldef\tempurl%
\url{http://www.aiml.net/volumes/volume10/Hella-Luosto-Sano-Virtema.pdf}
\showURL{%
\tempurl}


\bibitem[Hodges(1997)]%
        {Hodges97c}
\bibfield{author}{\bibinfo{person}{Wilfrid Hodges}.} \bibinfo{year}{1997}\natexlab{}.
\newblock \showarticletitle{Compositional Semantics for a Language of Imperfect Information}.
\newblock \bibinfo{journal}{\emph{Logic Journal of the IGPL}} \bibinfo{volume}{5}, \bibinfo{number}{4} (\bibinfo{year}{1997}), \bibinfo{pages}{539--563}.
\newblock
\urldef\tempurl%
\url{https://doi.org/10.1093/jigpal/5.4.539}
\showDOI{\tempurl}


\bibitem[Immerman(1999)]%
        {Immerman1998}
\bibfield{author}{\bibinfo{person}{Neil Immerman}.} \bibinfo{year}{1999}\natexlab{}.
\newblock \bibinfo{booktitle}{\emph{Descriptive complexity}}.
\newblock \bibinfo{publisher}{Springer}, \bibinfo{address}{New York, NY}.
\newblock
\showISBNx{978-1-4612-6809-3}
\urldef\tempurl%
\url{https://doi.org/10.1007/978-1-4612-0539-5}
\showDOI{\tempurl}


\bibitem[Kontinen and Sandstr{\"{o}}m(2021)]%
        {DBLP:conf/wollic/KontinenS21}
\bibfield{author}{\bibinfo{person}{Juha Kontinen} {and} \bibinfo{person}{Max Sandstr{\"{o}}m}.} \bibinfo{year}{2021}\natexlab{}.
\newblock \showarticletitle{On the Expressive Power of TeamLTL and First-Order Team Logic over Hyperproperties}. In \bibinfo{booktitle}{\emph{WoLLIC}} \emph{(\bibinfo{series}{Lecture Notes in Computer Science}, Vol.~\bibinfo{volume}{13038})}. \bibinfo{publisher}{Springer}, \bibinfo{pages}{302--318}.
\newblock
\urldef\tempurl%
\url{https://doi.org/10.1007/978-3-030-88853-4\_19}
\showDOI{\tempurl}


\bibitem[Kontinen et~al\mbox{.}(2024)]%
        {DBLP:conf/foiks/KontinenSV24}
\bibfield{author}{\bibinfo{person}{Juha Kontinen}, \bibinfo{person}{Max Sandstr{\"{o}}m}, {and} \bibinfo{person}{Jonni Virtema}.} \bibinfo{year}{2024}\natexlab{}.
\newblock \showarticletitle{A Remark on the Expressivity of Asynchronous TeamLTL and HyperLTL}. In \bibinfo{booktitle}{\emph{FoIKS}} \emph{(\bibinfo{series}{Lecture Notes in Computer Science}, Vol.~\bibinfo{volume}{14589})}. \bibinfo{publisher}{Springer}, \bibinfo{pages}{275--286}.
\newblock
\urldef\tempurl%
\url{https://doi.org/10.1007/978-3-031-56940-1\_15}
\showDOI{\tempurl}


\bibitem[Kontinen et~al\mbox{.}(2025)]%
        {DBLP:conf/mfcs/KontinenSV23}
\bibfield{author}{\bibinfo{person}{Juha Kontinen}, \bibinfo{person}{Max Sandstr{\"{o}}m}, {and} \bibinfo{person}{Jonni Virtema}.} \bibinfo{year}{2025}\natexlab{}.
\newblock \showarticletitle{Set semantics for asynchronous TeamLTL: Expressivity and complexity}.
\newblock \bibinfo{journal}{\emph{Inf. Comput.}}  \bibinfo{volume}{304} (\bibinfo{year}{2025}), \bibinfo{pages}{1--18}.
\newblock
\urldef\tempurl%
\url{https://doi.org/10.1016/J.IC.2025.105299}
\showDOI{\tempurl}


\bibitem[Krebs et~al\mbox{.}(2019)]%
        {DBLP:journals/acta/KrebsMM19}
\bibfield{author}{\bibinfo{person}{Andreas Krebs}, \bibinfo{person}{Arne Meier}, {and} \bibinfo{person}{Martin Mundhenk}.} \bibinfo{year}{2019}\natexlab{}.
\newblock \showarticletitle{The model checking fingerprints of {CTL} operators}.
\newblock \bibinfo{journal}{\emph{Acta Informatica}} \bibinfo{volume}{56}, \bibinfo{number}{6} (\bibinfo{year}{2019}), \bibinfo{pages}{487--519}.
\newblock
\urldef\tempurl%
\url{https://doi.org/10.1007/S00236-018-0326-9}
\showDOI{\tempurl}


\bibitem[Krebs et~al\mbox{.}(2015)]%
        {kmv15}
\bibfield{author}{\bibinfo{person}{Andreas Krebs}, \bibinfo{person}{Arne Meier}, {and} \bibinfo{person}{Jonni Virtema}.} \bibinfo{year}{2015}\natexlab{}.
\newblock \showarticletitle{A Team Based Variant of {CTL}}. In \bibinfo{booktitle}{\emph{22nd International Symposium on Temporal Representation and Reasoning, {TIME} 2015, Kassel, Germany, September 23-25, 2015}}, \bibfield{editor}{\bibinfo{person}{Fabio Grandi}, \bibinfo{person}{Martin Lange}, {and} \bibinfo{person}{Alessio Lomuscio}} (Eds.). \bibinfo{publisher}{{IEEE} Computer Society}, \bibinfo{pages}{140--149}.
\newblock
\urldef\tempurl%
\url{https://doi.org/10.1109/TIME.2015.11}
\showDOI{\tempurl}


\bibitem[Krebs et~al\mbox{.}(2018)]%
        {DBLP:conf/mfcs/KrebsMV018}
\bibfield{author}{\bibinfo{person}{Andreas Krebs}, \bibinfo{person}{Arne Meier}, \bibinfo{person}{Jonni Virtema}, {and} \bibinfo{person}{Martin Zimmermann}.} \bibinfo{year}{2018}\natexlab{}.
\newblock \showarticletitle{Team Semantics for the Specification and Verification of Hyperproperties}. In \bibinfo{booktitle}{\emph{{MFCS}}} \emph{(\bibinfo{series}{LIPIcs}, Vol.~\bibinfo{volume}{117})}. \bibinfo{publisher}{Schloss Dagstuhl - Leibniz-Zentrum f{\"{u}}r Informatik}, \bibinfo{pages}{10:1--10:16}.
\newblock
\urldef\tempurl%
\url{https://doi.org/10.4230/LIPICS.MFCS.2018.10}
\showDOI{\tempurl}


\bibitem[Kuhtz and Finkbeiner(2009)]%
        {kf09}
\bibfield{author}{\bibinfo{person}{Lars Kuhtz} {and} \bibinfo{person}{Bernd Finkbeiner}.} \bibinfo{year}{2009}\natexlab{}.
\newblock \showarticletitle{{LTL} Path Checking Is Efficiently Parallelizable}. In \bibinfo{booktitle}{\emph{{ICALP} {(2)}}} \emph{(\bibinfo{series}{Lecture Notes in Computer Science}, Vol.~\bibinfo{volume}{5556})}. \bibinfo{publisher}{Springer}, \bibinfo{pages}{235--246}.
\newblock
\urldef\tempurl%
\url{https://doi.org/10.1007/978-3-642-02930-1_20}
\showDOI{\tempurl}


\bibitem[Kupferman et~al\mbox{.}(2000)]%
        {kvw00}
\bibfield{author}{\bibinfo{person}{Orna Kupferman}, \bibinfo{person}{Moshe~Y. Vardi}, {and} \bibinfo{person}{Pierre Wolper}.} \bibinfo{year}{2000}\natexlab{}.
\newblock \showarticletitle{An automata-theoretic approach to branching-time model checking}.
\newblock \bibinfo{journal}{\emph{J. {ACM}}} \bibinfo{volume}{47}, \bibinfo{number}{2} (\bibinfo{year}{2000}), \bibinfo{pages}{312--360}.
\newblock
\urldef\tempurl%
\url{https://doi.org/10.1145/333979.333987}
\showDOI{\tempurl}


\bibitem[Kuusisto(2015)]%
        {kuusisto15}
\bibfield{author}{\bibinfo{person}{Antti Kuusisto}.} \bibinfo{year}{2015}\natexlab{}.
\newblock \showarticletitle{{A Double Team Semantics for Generalized Quantifiers}}.
\newblock \bibinfo{journal}{\emph{Journal of Logic, Language and Information}} \bibinfo{volume}{24}, \bibinfo{number}{2} (\bibinfo{year}{2015}), \bibinfo{pages}{149--191}.
\newblock
\urldef\tempurl%
\url{https://doi.org/10.1007/s10849-015-9217-4}
\showDOI{\tempurl}


\bibitem[Ladner(1977)]%
        {lad77}
\bibfield{author}{\bibinfo{person}{Richard Ladner}.} \bibinfo{year}{1977}\natexlab{}.
\newblock \showarticletitle{The Computational Complexity of Provability in Systems of Modal Propositional Logic}.
\newblock \bibinfo{journal}{\emph{SIAM J. Comput.}} \bibinfo{volume}{6}, \bibinfo{number}{3} (\bibinfo{year}{1977}), \bibinfo{pages}{467--480}.
\newblock
\urldef\tempurl%
\url{https://doi.org/10.1137/0206033}
\showDOI{\tempurl}


\bibitem[L{\"{u}}ck(2018)]%
        {DBLP:conf/csl/Luck18}
\bibfield{author}{\bibinfo{person}{Martin L{\"{u}}ck}.} \bibinfo{year}{2018}\natexlab{}.
\newblock \showarticletitle{Canonical Models and the Complexity of Modal Team Logic}. In \bibinfo{booktitle}{\emph{{CSL}}} \emph{(\bibinfo{series}{LIPIcs}, Vol.~\bibinfo{volume}{119})}. \bibinfo{publisher}{Schloss Dagstuhl - Leibniz-Zentrum f{\"{u}}r Informatik}, \bibinfo{pages}{30:1--30:23}.
\newblock
\urldef\tempurl%
\url{https://doi.org/10.4230/LIPICS.CSL.2018.30}
\showDOI{\tempurl}


\bibitem[L{\"{u}}ck(2020)]%
        {DBLP:journals/tcs/Luck20}
\bibfield{author}{\bibinfo{person}{Martin L{\"{u}}ck}.} \bibinfo{year}{2020}\natexlab{}.
\newblock \showarticletitle{On the complexity of linear temporal logic with team semantics}.
\newblock \bibinfo{journal}{\emph{Theor. Comput. Sci.}}  \bibinfo{volume}{837} (\bibinfo{year}{2020}), \bibinfo{pages}{1--25}.
\newblock
\urldef\tempurl%
\url{https://doi.org/10.1016/J.TCS.2020.04.019}
\showDOI{\tempurl}


\bibitem[Mahmood et~al\mbox{.}(2024)]%
        {DBLP:conf/foiks/MahmoodVBN24}
\bibfield{author}{\bibinfo{person}{Yasir Mahmood}, \bibinfo{person}{Jonni Virtema}, \bibinfo{person}{Timon Barlag}, {and} \bibinfo{person}{Axel{-}Cyrille~Ngonga Ngomo}.} \bibinfo{year}{2024}\natexlab{}.
\newblock \showarticletitle{Computing Repairs Under Functional and Inclusion Dependencies via Argumentation}. In \bibinfo{booktitle}{\emph{FoIKS}} \emph{(\bibinfo{series}{Lecture Notes in Computer Science}, Vol.~\bibinfo{volume}{14589})}. \bibinfo{publisher}{Springer}, \bibinfo{pages}{23--42}.
\newblock
\urldef\tempurl%
\url{https://doi.org/10.1007/978-3-031-56940-1\_2}
\showDOI{\tempurl}


\bibitem[Markey and Schnoebelen(2003)]%
        {ms03}
\bibfield{author}{\bibinfo{person}{Nicolas Markey} {and} \bibinfo{person}{Philippe Schnoebelen}.} \bibinfo{year}{2003}\natexlab{}.
\newblock \showarticletitle{Model Checking a Path}. In \bibinfo{booktitle}{\emph{{CONCUR} 2003}} \emph{(\bibinfo{series}{LNCS}, Vol.~\bibinfo{volume}{2761})}, \bibfield{editor}{\bibinfo{person}{Roberto~M. Amadio} {and} \bibinfo{person}{Denis Lugiez}} (Eds.). \bibinfo{publisher}{Springer}, \bibinfo{address}{Marseille, France}, \bibinfo{pages}{248--262}.
\newblock
\showISBNx{3-540-40753-7}
\urldef\tempurl%
\url{https://doi.org/10.1007/978-3-540-45187-7_17}
\showDOI{\tempurl}


\bibitem[Mascle and Zimmermann(2020)]%
        {Mascle020}
\bibfield{author}{\bibinfo{person}{Corto Mascle} {and} \bibinfo{person}{Martin Zimmermann}.} \bibinfo{year}{2020}\natexlab{}.
\newblock \showarticletitle{The Keys to Decidable HyperLTL Satisfiability: Small Models or Very Simple Formulas}. In \bibinfo{booktitle}{\emph{{CSL}}} \emph{(\bibinfo{series}{LIPIcs}, Vol.~\bibinfo{volume}{152})}. \bibinfo{publisher}{Schloss Dagstuhl - Leibniz-Zentrum f{\"{u}}r Informatik}, \bibinfo{pages}{29:1--29:16}.
\newblock
\urldef\tempurl%
\url{https://doi.org/10.4230/LIPICS.CSL.2020.29}
\showDOI{\tempurl}


\bibitem[Meier et~al\mbox{.}(2012)]%
        {mmmv12}
\bibfield{author}{\bibinfo{person}{Arne Meier}, \bibinfo{person}{Julian{-}Steffen M{\"{u}}ller}, \bibinfo{person}{Martin Mundhenk}, {and} \bibinfo{person}{Heribert Vollmer}.} \bibinfo{year}{2012}\natexlab{}.
\newblock \showarticletitle{Complexity of Model Checking for Logics over Kripke models}.
\newblock \bibinfo{journal}{\emph{Bull. {EATCS}}}  \bibinfo{volume}{108} (\bibinfo{year}{2012}), \bibinfo{pages}{49--89}.
\newblock
\urldef\tempurl%
\url{http://eatcs.org/beatcs/index.php/beatcs/article/view/50}
\showURL{%
\tempurl}


\bibitem[Meier et~al\mbox{.}(2009)]%
        {mmtv09}
\bibfield{author}{\bibinfo{person}{Arne Meier}, \bibinfo{person}{Michael Thomas}, \bibinfo{person}{Heribert Vollmer}, {and} \bibinfo{person}{Martin Mundhenk}.} \bibinfo{year}{2009}\natexlab{}.
\newblock \showarticletitle{The Complexity of Satisfiability for Fragments of {CTL} and CTL*}.
\newblock \bibinfo{journal}{\emph{Int. J. Found. Comput. Sci.}} \bibinfo{volume}{20}, \bibinfo{number}{5} (\bibinfo{year}{2009}), \bibinfo{pages}{901--918}.
\newblock
\urldef\tempurl%
\url{https://doi.org/10.1142/S0129054109006954}
\showDOI{\tempurl}


\bibitem[M\"uller(2014)]%
        {muellerDiss}
\bibfield{author}{\bibinfo{person}{Julian{-}Steffen M\"uller}.} \bibinfo{year}{2014}\natexlab{}.
\newblock \emph{\bibinfo{title}{Satisfiability and Model Checking in Team Based Logics}}.
\newblock \bibinfo{thesistype}{Ph.\,D. Dissertation}. \bibinfo{school}{Leibniz University of Hannover}.
\newblock
\urldef\tempurl%
\url{https://d-nb.info/1054741921}
\showURL{%
\tempurl}


\bibitem[Pnueli(1977)]%
        {Pnueli/1977/TheTemporalLogicOfPrograms}
\bibfield{author}{\bibinfo{person}{Amir Pnueli}.} \bibinfo{year}{1977}\natexlab{}.
\newblock \showarticletitle{The Temporal Logic of Programs}. In \bibinfo{booktitle}{\emph{FOCS 1977}}. \bibinfo{publisher}{{IEEE} Computer Society}, \bibinfo{address}{Providence, RI, USA}, \bibinfo{pages}{46--57}.
\newblock
\urldef\tempurl%
\url{https://doi.org/10.1109/SFCS.1977.32}
\showDOI{\tempurl}


\bibitem[Pratt(1980)]%
        {pr80}
\bibfield{author}{\bibinfo{person}{Vaughan~R. Pratt}.} \bibinfo{year}{1980}\natexlab{}.
\newblock \showarticletitle{A Near-Optimal Method for Reasoning about Action}.
\newblock \bibinfo{journal}{\emph{J. Comput. Syst. Sci.}} \bibinfo{volume}{20}, \bibinfo{number}{2} (\bibinfo{year}{1980}), \bibinfo{pages}{231--254}.
\newblock
\urldef\tempurl%
\url{https://doi.org/10.1016/0022-0000(80)90061-6}
\showDOI{\tempurl}


\bibitem[Rabe(2016)]%
        {markusPhD}
\bibfield{author}{\bibinfo{person}{Markus~N. Rabe}.} \bibinfo{year}{2016}\natexlab{}.
\newblock \emph{\bibinfo{title}{A Temporal Logic Approach to Information-flow Control}}.
\newblock \bibinfo{thesistype}{Ph.\,D. Dissertation}. \bibinfo{school}{Saarland University}.
\newblock
\urldef\tempurl%
\url{http://scidok.sulb.uni-saarland.de/volltexte/2016/6387/}
\showURL{%
\tempurl}


\bibitem[Regaud and Zimmermann(2024)]%
        {regaud2024complexityhyperqptl}
\bibfield{author}{\bibinfo{person}{Gaëtan Regaud} {and} \bibinfo{person}{Martin Zimmermann}.} \bibinfo{year}{2024}\natexlab{}.
\newblock \showarticletitle{The Complexity of {HyperQPTL}}.
\newblock \bibinfo{journal}{\emph{arXiv}}  \bibinfo{volume}{2412.07341} (\bibinfo{year}{2024}).
\newblock
\urldef\tempurl%
\url{https://doi.org/10.48550/ARXIV.2412.07341}
\showDOI{\tempurl}


\bibitem[Regaud and Zimmermann(2025)]%
        {fragments}
\bibfield{author}{\bibinfo{person}{Ga{\"{e}}tan Regaud} {and} \bibinfo{person}{Martin Zimmermann}.} \bibinfo{year}{2025}\natexlab{}.
\newblock \bibinfo{title}{The Complexity of Fragments of Second-Order {HyperLTL}}.
\newblock
\newblock
\urldef\tempurl%
\url{https://doi.org/10.48550/ARXIV.2501.19046}
\showDOI{\tempurl}
\showeprint[arXiv]{2501.19046}


\bibitem[Savitch(1969)]%
        {DBLP:conf/stoc/Savitch69}
\bibfield{author}{\bibinfo{person}{Walter~J. Savitch}.} \bibinfo{year}{1969}\natexlab{}.
\newblock \showarticletitle{Deterministic Simulation of Non-Deterministic Turing Machines (Detailed Abstract)}. In \bibinfo{booktitle}{\emph{{STOC}}}. \bibinfo{publisher}{{ACM}}, \bibinfo{pages}{247--248}.
\newblock
\urldef\tempurl%
\url{https://doi.org/10.1145/800169.805439}
\showDOI{\tempurl}


\bibitem[Schnoebelen(2002)]%
        {sc02}
\bibfield{author}{\bibinfo{person}{Philippe Schnoebelen}.} \bibinfo{year}{2002}\natexlab{}.
\newblock \showarticletitle{The Complexity of Temporal Logic Model Checking}. In \bibinfo{booktitle}{\emph{Advances in Modal Logic}}. \bibinfo{publisher}{King's College Publications}, \bibinfo{pages}{393--436}.
\newblock
\urldef\tempurl%
\url{http://www.aiml.net/volumes/volume4/Schnoebelen.ps}
\showURL{%
\tempurl}


\bibitem[Sipser(2013)]%
        {sipser13}
\bibfield{author}{\bibinfo{person}{Michael Sipser}.} \bibinfo{year}{2013}\natexlab{}.
\newblock \bibinfo{booktitle}{\emph{Introduction to the Theory of Computation} (\bibinfo{edition}{third} ed.)}.
\newblock \bibinfo{publisher}{Course Technology}, \bibinfo{address}{Boston, MA}.
\newblock
\showISBNx{113318779X}


\bibitem[Sistla and Clarke(1985)]%
        {SistlaC85}
\bibfield{author}{\bibinfo{person}{A.~Prasad Sistla} {and} \bibinfo{person}{Edmund~M. Clarke}.} \bibinfo{year}{1985}\natexlab{}.
\newblock \showarticletitle{The Complexity of Propositional Linear Temporal Logics}.
\newblock \bibinfo{journal}{\emph{J. {ACM}}} \bibinfo{volume}{32}, \bibinfo{number}{3} (\bibinfo{year}{1985}), \bibinfo{pages}{733--749}.
\newblock
\urldef\tempurl%
\url{https://doi.org/10.1145/3828.3837}
\showDOI{\tempurl}


\bibitem[Stockmeyer(1976)]%
        {st77}
\bibfield{author}{\bibinfo{person}{Larry~J. Stockmeyer}.} \bibinfo{year}{1976}\natexlab{}.
\newblock \showarticletitle{The Polynomial-Time Hierarchy}.
\newblock \bibinfo{journal}{\emph{Theor. Comput. Sci.}} \bibinfo{volume}{3}, \bibinfo{number}{1} (\bibinfo{year}{1976}), \bibinfo{pages}{1--22}.
\newblock
\urldef\tempurl%
\url{https://doi.org/10.1016/0304-3975(76)90061-X}
\showDOI{\tempurl}


\bibitem[V\"a\"an\"anen(2007)]%
        {vaananen07}
\bibfield{author}{\bibinfo{person}{Jouko V\"a\"an\"anen}.} \bibinfo{year}{2007}\natexlab{}.
\newblock \bibinfo{booktitle}{\emph{Dependence Logic}}.
\newblock \bibinfo{publisher}{Cambridge University Press}, \bibinfo{address}{Cambridge, UK}.
\newblock


\bibitem[V\"a\"an\"anen(2008)]%
        {va08}
\bibfield{author}{\bibinfo{person}{Jouko V\"a\"an\"anen}.} \bibinfo{year}{2008}\natexlab{}.
\newblock \showarticletitle{{M}odal {D}ependence {L}ogic}.
\newblock In \bibinfo{booktitle}{\emph{New Perspectives on Games and Interaction}}. \bibinfo{publisher}{Amsterdam University Press}, \bibinfo{address}{Amsterdam}.
\newblock


\bibitem[V{\"{a}}{\"{a}}n{\"{a}}nen(2011)]%
        {va07}
\bibfield{author}{\bibinfo{person}{Jouko~A. V{\"{a}}{\"{a}}n{\"{a}}nen}.} \bibinfo{year}{2011}\natexlab{}.
\newblock \bibinfo{booktitle}{\emph{Models and Games}}. \bibinfo{series}{Cambridge studies in advanced mathematics}, Vol.~\bibinfo{volume}{132}.
\newblock \bibinfo{publisher}{Cambridge University Press}, \bibinfo{address}{Cambridge, UK}.
\newblock
\showISBNx{978-0-521-51812-3}
\urldef\tempurl%
\url{http://www.cambridge.org/de/knowledge/isbn/item6173714/?site\_locale=de\_DE}
\showURL{%
\tempurl}


\bibitem[Virtema(2017)]%
        {DBLP:journals/iandc/Virtema17}
\bibfield{author}{\bibinfo{person}{Jonni Virtema}.} \bibinfo{year}{2017}\natexlab{}.
\newblock \showarticletitle{Complexity of validity for propositional dependence logics}.
\newblock \bibinfo{journal}{\emph{Inf. Comput.}}  \bibinfo{volume}{253} (\bibinfo{year}{2017}), \bibinfo{pages}{224--236}.
\newblock
\urldef\tempurl%
\url{https://doi.org/10.1016/j.ic.2016.07.008}
\showDOI{\tempurl}


\bibitem[Virtema et~al\mbox{.}(2021)]%
        {DBLP:conf/fsttcs/VirtemaHFK021}
\bibfield{author}{\bibinfo{person}{Jonni Virtema}, \bibinfo{person}{Jana Hofmann}, \bibinfo{person}{Bernd Finkbeiner}, \bibinfo{person}{Juha Kontinen}, {and} \bibinfo{person}{Fan Yang}.} \bibinfo{year}{2021}\natexlab{}.
\newblock \showarticletitle{Linear-Time Temporal Logic with Team Semantics: Expressivity and Complexity}. In \bibinfo{booktitle}{\emph{{FSTTCS}}} \emph{(\bibinfo{series}{LIPIcs}, Vol.~\bibinfo{volume}{213})}. \bibinfo{publisher}{Schloss Dagstuhl - Leibniz-Zentrum f{\"{u}}r Informatik}, \bibinfo{pages}{52:1--52:17}.
\newblock
\urldef\tempurl%
\url{https://doi.org/10.4230/LIPICS.FSTTCS.2021.52}
\showDOI{\tempurl}


\end{thebibliography}

\end{document}